\newtheorem{theorem}{Theorem}[section]
\newtheorem{corollary}{Corollary}
\newtheorem{lemma}[theorem]{Lemma}
\newtheorem{conjecture}{Conjecture}
\theoremstyle{definition}
\newtheorem{remark}{Remark}
\subjclass{Primary: 35B36, 37C75, 60H35; Secondary: 92D25, 92D40}
 \keywords{ Holling-Tanner model, prey cannibalism, stability, Turing instability, white noise}
 \email{rparshad@clarkson.edu}
\begin{document}
\title[ prey cannibalism]{Prey cannibalism alters the dynamics of
Holling-Tanner type predator-prey models}
\author[  Basheer, Quansah, Bhowmick, Parshad ]{}
\maketitle

\begin{abstract}
Cannibalism, which is the act of killing and at least partial consumption of conspecifics, is ubiquitous in nature. Mathematical models
have considered cannibalism in the predator primarily, and show that predator cannibalism in two species ODE models provides a strong stabilizing effect.
There is strong ecological evidence that cannibalism exists among prey as well, yet this phenomenon has been much less investigated. In the current manuscript, we
investigate both the ODE and spatially explicit forms of a Holling-Tanner model, with ratio dependent functional response. We show that cannibalism in the predator provides a
stabilizing influence as expected.
However, when cannibalism in the prey is considered, we show that it cannot stabilise the unstable interior equilibrium in the ODE case, but can destabilise the stable interior equilibrium. In the spatially explicit case, we show that in certain parameter regime,
prey cannibalism can lead to pattern forming Turing dynamics, which is an impossibility without it. Lastly we consider a stochastic prey cannibalism rate, and
find that it can alter both spatial patterns, as well as limit cycle dynamics.
\end{abstract}

\medskip

\centerline{\scshape  Aladeen Basheer, Emmanuel Quansah, 
Schuman Bhowmick and Rana D. Parshad} 
\medskip

\centerline{Department of Mathematics,}  \centerline{Clarkson University,}  %
\centerline{ Potsdam, New York 13699, USA.}

\medskip
\section{Introduction}

\label{1} Cannibalism is the act of killing and at least partial consumption
 of conspecifics. This has been documented in human populations all around the world. It is/has been practiced in various indigenous South American and New Guinean tribes as a social norm \cite{online1}, and also reported in Africa \cite{pennell1991cannibalism}. It is also practiced among ``Aghoris" or witch doctors in Northern India, in the hope of gaining immortality \cite{online}. However, if one steps away from these seemingly macabre and ghoulish human practices, and observes real predator-prey systems, then cannibalism is actually commonplace \cite{ claessen2004population}. It is seen to actively occur in more than 1300
species in nature \cite{polis1981evolution}. 
Apart from being a fascinating socio-anthropological/ecological subject in its own right, cannibalism can lead to various ``counter intuitive" effects as concerns population dynamics. These include the \emph{life boat mechanism}, where the act of cannibalism causes persistence, in a population doomed to go extinct \cite{getto2005dis}. It can also induces stability, in an otherwise cycling population \cite{kohlmeier1995stabilizing}.

A detailed survey of the current and past mathematical literature on cannibalism, shows that various types of predator-prey models, with the inclusion of cannibalism, have been investigated. These include two species ODE models, three species ODE models incorporating stage structure, two species PDE models, discrete models and integro-differential equation models, as well as recent models of cannibalism that include diseased predators \cite{kohlmeier1995stabilizing, buonomo2010effect, biswas2014cannibalism, buonomo2006stabilizing, getto2005dis}.

The two species unstructured ODE models that have been investigated, are of the general type
\begin{align}
 \label{eq:1b} 
\frac{du}{dt}= u\left( 1-\frac{u}{K}\right) - f(u)v , 
\end{align}

\begin{align}
 \label{eq:2b}
\frac{dv}{dt}= - d v + \epsilon f(u)v  - cC(u,v),
\end{align}

where $u$ is the prey, and $v$ the predator. Here $cC(u,v)$ models cannibalism in the predator, with $c$ being the rate of cannibalism. 
In such models, cannibalism is seen to have a strong stabilising influence. 
Kohlmeir \cite{kohlmeier1995stabilizing} was the first to show this, by considering cannibalism in the predator, in
the classic Rosenzweig-McArthur model. That is when $f(u)$ is of Holling type II. They showed that predator cannibalism can surprisingly stabilise
the unstable interior equilibrium of the model.

We note that cannibalism, has been primarily modeled as an act in the \emph{predator species only} \cite{fasani2012remarks, sun2009predator, kohlmeier1995stabilizing}.
Note, there is strong evidence in ecology that cannibalism often occurs in the prey species \cite{rudolf2007interaction, rudolf2007consequences}. Recent experiments with prey cannibalism, show that they can have a significant impact on the population dynamics of predator-prey systems \cite{rudolf2008impact}. However, there is \emph{not a single model} in the mathematical literature, that considers the effect of prey cannibalism in two species unstructured ODE or unstructured PDE models. To the best of our knowledge there are very few works overall, that have considered prey canibalism, in some form or the other. Solis et al. considered the effect of prey cannibalism in a multi component structured model, via integer differential equations \cite{solis2014birth}. Chow and Jang investigated a structured discrete model
with prey cannibalism finding that it may either stabilize or destabilize, depending on cannibalism rates and various other
parameters \cite{chow2012cannibalism}. Also, Zhang et al. considered a three species structured ODE model, with structure and cannibalism in the prey. They showed that cannibalism strengthens the conditions of both global asymptotical stability, and persistence of the interior equilibrium point \cite{zhang2010rich}.

Many models of cannibalism with size and stage structure effects have been investigated. The approach in the literature has been to structure the predator class into adults and juveniles and allow predation of the juvenile predator by the adult predator. This essentially yields a three species ODE model. The reader is referred to \cite{buonomo2006stabilizing, buonomo2010effect, zhang2010rich}, and again therein cannibalism is seen to provide a stabilising influence, depending on parameter regime. However, depending on model structure and parameter regimes, cannibalism is also seen to destabilize \cite{claessen2004population} in some cases.
Various other classes of models such as integro-differential equation models, as well as discrete models have also been investigated \cite{chow2012cannibalism, solis2014birth, getto2005dis}.

Since species disperse in space in search of food, shelter, mates and to avoid predators, spatially dispersing populations are often modeled via partial
differential equations (PDE)/spatially explicit models of interacting species \cite{Gilligan1998, M93, okubo2001diffusion, sen2012bifurcation}. Of
particular interest to many is the phenomenon of Turing instability, first introduced by Alan Turing \cite{Turing52}, which shows that diffusion in a
system can lead to pattern forming instabilities. In the context of cannibalism, there are very few works in the literature that deal with
spatially explicit models and/or Turing instability. Sun \cite{sun2009predator} was the first to show that cannibalism in the
classical diffusive Rosenzweig-McArthur model, can bring about Turing patterns. This is an impossibility in the system without cannibalism. Fasani et al. \cite
{fasani2012remarks} considered the same model, but with a general form for the cannibalism term (of which the model of Sun et al. is a special case), 
and also found that cannibalism can bring about Turing patterns. The other direction does not seemed to have been studied. That is when and under what conditions, can cannibalism \emph{remove} Turing instability?
In the current manuscript we argue that these above approaches, do not capture the full breadth of cannibalism as it exists in nature. We cite the following reasons:

Many mathematical models of predator-prey systems, do not have a symmetric functional responses between
predator and prey. For example, the models of Leslie-Gower type \cite{leslie1960properties}, and Holling-Tanner type \cite{hsu1995global}. These models are ubiquitous in the mathematical literature (see \cite{banerjee2012turing, hsu1995global} and the references therein), however, the effect of cannibalism, in particular the effect of \emph{prey cannibalism} in such models, has not been considered.

Also, In all of the investigations in the literature, only a deterministic rate of cannibalism, $c$, has been considered. However, predator and prey species are often subject to uncertainties in the environment, modeled best via environmental noise. For these reasons it is possible that the various rates, (birth rate, death rate, predation rate etc.), one considers for modeling purposes, may not be known precisely, and may always involve uncertainty. Yet there are no studies where a model with a ``noisy" cannibalism rate has been considered, that is when $c \approx c + ``noise"$.
\\
\vspace{0.5mm}
\\
In the current manuscript we consider a Holling-Tanner predator-prey model, with a ratio dependent functional response \cite{A00}, that has been well studied in the literature \cite{banerjee2012turing}. Next we introduce into this model 
\newline
(a) cannibalism occurring only in the predator species,
\newline
(b) cannibalism occuring only in the prey species.
\newline
We investigate in detail:
\newline
(1) ODE versions modeling (a) and (b).
\newline
(2) The PDE, stochastic ODE  and stochastic PDE versions modeling (b).
\newline
\\
\vspace{0.5mm}
\\
Our contributions therein are the following:
\newline
1) We show that predator cannibalism, in certain parameter regime, can indeed stabilize the unstable interior equilibrium, in the ODE case, as expected from the known results on other models, with cannibalism in the predator.
\newline
2) We show that prey cannibalism \emph{cannot stabilize} the unstable interior equilibrium, in the ODE case. This is contrary to what is known about models with cannibalism in the predator.
\newline
3) We show that prey cannibalism can destabilize the spatially homogenous equilibrium in the PDE case. In particular, we show that in certain parameter regime such that there is no Turing instability in the Holling-Tanner model, introducing prey cannibalism \emph{can actually bring about Turing instability}. Alternatively, we show that in certain parameter regime, introducing predator cannibalism \emph{cannot bring about Turing instability}.
\newline
4) We show that in certain parameter regime, if Turing
instability does not exist in the model with prey cannibalism, then removing prey
cannibalism \emph{cannot induce} Turing instability.
\newline
5) The ODE version of the Holling-Tanner model we consider is known to possess a unique limit cycle.
We investigate the effects of a stochastic prey cannibalism rate $c$, on the limit cycle dynamics, finding that classical white noise cannot induce stability.
\newline
6) We also consider the effects of a stochastic prey cannibalism rate $
c$, on the spatial dynamics, finding that space-time white noise, can significantly alter the spatial dynamics of the model.

%%======

\section{The model formulation}
\subsection{The Holling-Tanner model}
We first present the following Holling-Tanner model, with ratio dependent functional response in prey,  \cite{banerjee2012turing}. 
\begin{equation}
 \label{eq:1}
\left. 
\begin{array}{c}
\frac{du}{dt}=u\left(1-u\right) -\dfrac{uv}{u+\alpha v} \\ 
\frac{dv}{dt}=\delta v\left( \beta -\dfrac{v}{u}\right)%
\end{array}%
\right\} nc-model 
\end{equation}
We will refer to the above model as the nc-model (no cannibalism) henceforth, and
recap certain dynamical aspects of the model. Note, \eqref{eq:1} is a nondimensionalised version of the model presented in \cite{banerjee2012turing}. The reader is referred to \cite{banerjee2012turing}, for a complete derivation on the parameters, their biological significance and the non-dimensionalization process.
Here $u$ and $v$ are the populations of the prey and predator species respectively.
 The dynamics of the prey species follows a ratio dependent functional response \cite{A00}.
The predator is modeled according to the Leslie-Gower scheme \cite{leslie1960properties}. This assumes that the carrying capacity of
the predator, is not constant but rather proportional to its food, the prey $u$. Hence a reasonable choice
for the carrying capacity $K=K(u)$ is $K(u)=u+d_{3}$. If $d_{3} >0$, the predator is thought of as a
generalist, that is it can change its food source in the absence of 
its favorite prey. If $d_{3}=0$, the predator is thought of as a specialist. In this case the resulting model is of Holling-Tanner type \cite{hsu1995global}.
Here $\beta$ is the intrinsic growth rate of the predator $v$, and $\alpha$ is essentially given by $\alpha= \frac{\mbox{prey \ handling \ time}} {\mbox{ prey \ capture \ rate }} $.

The non-trivial steady state $\left(u_{nc},v_{nc}\right) $ is given by:

\begin{equation*}
u_{nc}=1-\dfrac{\beta }{1+\alpha \beta },\text{ \ \ }v_{nc}=\beta \left( 1-%
\dfrac{\beta }{1+\alpha \beta }\right)  \label{5}
\end{equation*}

since we require

\begin{equation*}
 u_{nc}>0, \  \Longrightarrow \  1>\dfrac{\beta }{1+\alpha \beta} \ 
\end{equation*}
Thus we obtain the following restriction on $\beta$ due to a feasibility requirement,
\begin{equation*}
\beta <\dfrac{1}{1-\alpha }  \label{6}
\end{equation*}

%
%The Jacobian matrix NC-Model of system $\left( \ref{1}\right) $ at the point$%
%\left( u_{nc},v_{nc}\right) $ given by:
%
%\begin{center}
%$J^{nc}=\left( 
%\begin{array}{ccc}
%u\left( -1+\dfrac{\left( v\right) }{\left( u+\alpha v\right) ^{2}}\right) & 
%& -\dfrac{\left( u\right) ^{2}}{\left( u+\alpha v\right) ^{2}} \\ 
%&  &  \\ 
%\dfrac{\delta v^{2}}{\left( u\right) ^{2}} &  & \delta \beta -\dfrac{2\delta
%v}{u}%
%\end{array}%
%\right) $
%
%$J^{nc}=\allowbreak \left( 
%\begin{array}{ccc}
%\left( \dfrac{\beta }{\left( 1+\alpha \beta \right) ^{2}}-u_{nc}\right) &  & 
%-\dfrac{1}{\left( 1+\alpha \beta \right) ^{2}} \\ 
%&  &  \\ 
%\beta ^{2}\delta &  & -\beta \delta%
%\end{array}%
%\right) $
%\end{center}

For stability we require
 that the $Trace$ $J^{nc}<0$ which implies

\begin{equation*}
\dfrac{\beta }{\alpha \beta +1}+\dfrac{\beta }{\left( \alpha \beta
+1\right) ^{2}}-1<\beta \delta 
\end{equation*}

As the determinant $Det$ $J^{nc}>0$, is always positive, as we have $\Rightarrow \beta < \dfrac{1}{1-\alpha}$, which is always true as this is a feasibility condition.

%2) The $Det$ $J^{nc}>0$:
%
%\begin{center}
%$\dfrac{\left( \beta \delta -\beta ^{2}\delta +\alpha \beta ^{2}\delta
%\right) }{\alpha \beta +1}>0$,
%\end{center}

\subsection{The Holling-Tanner model with predator cannibalism}

We now consider the inclusion of predator cannibalism in \eqref{eq:1}. This yields the following model

\begin{equation}
 \label{eq:prc}
\left. 
\begin{array}{c}
\dfrac{du}{dt}=u\left( 1-u\right) -\dfrac{uv}{u+\alpha v}  \\ 
\dfrac{dv}{dt}=\delta v\left( \beta_{1} -\dfrac{v}{\gamma u+cv}\right) 
\end{array}
\right\} predator \ cannibalism \ model 
\end{equation}

In deriving the above model we make the following assumptions 

\begin{itemize}
 \item The predator $v$ is depredating on the prey species $u$, as well as on its own species $v$.  
 
 \item Hence the food source of $v$ is $\gamma u+cv$, where $cv$ is the cannibalism term, with cannibalism rate $c$. Here we assume $c+\gamma \approx 1$, with $\gamma < 1, c < 1$. Thus the food intake of the predator stays roughly the same, even if it turns into a cannibal. It starts eating less prey $u$, so $\gamma < 1$, and now also eats some conspecifics, so $0 < c < 1$.

 \item There is a clear gain of energy to the cannibalistic predator, from the act of cannibalism. This gain results in an increase in reproduction in the predator. 
 This in turn leads to a gain in predator population, modeled via adding a $\beta_{1}v$ term to the predator equation. We therefore assume $\beta_{1} > \beta$. 
 \end{itemize}

The non-trivial steady state $\left( u_{c},v_{c}\right)$ is given by

\begin{equation}
u_{c }=\left( 1-\dfrac{\beta_{1} \gamma }{\alpha \beta_{1} \gamma -c\beta_{1} +1}\right),  \\ v_{c}=\left( \dfrac{\gamma \beta_{1} }{1-\beta_{1} c}\right) u_{c}
\end{equation}

since $u_{c}>0$ we obtain

$\left( 1-\dfrac{\beta_{1} \gamma }{\alpha \beta_{1} \gamma -c\beta_{1} +1}\right) >0$ 
and due to the assumptions and feasibility conditions, $\gamma < 1, \\ 1-\beta_{1} c > 0$ we have 

\begin{center}
\bigskip $\alpha \gamma -\gamma +\dfrac{1}{\beta_{1} }\ >c$
\end{center}

The Jacobian Matrix at $(u_{c},v_{c})$:

$J^{c}=\left( 
\begin{array}{ccc}
\allowbreak -u_{c}+\dfrac{\left( 1-\beta_{1} c+\alpha \gamma \beta_{1} \right)
^{2}}{\gamma \beta_{1} \left( 1-\beta_{1} c\right) } &  & \allowbreak -\dfrac{\left(
c\beta_{1} -1\right) ^{2}}{\left( \alpha \beta_{1} \gamma -c\beta_{1} +1\right) ^{2}} \\ 
&  &  \\ 
\allowbreak \beta_{1} ^{2}\gamma \delta &  & \allowbreak \beta_{1} \delta \left(
c\beta_{1} -1\right)%
\end{array}%
\right) 
$

 Notice that $J_{12}$ and $J_{22}$ are negative since $\gamma <1$ and $1-\beta_{1} c > 0$.

$\allowbreak $

 Det $J^{c}$=$\left( -u_{c}+\dfrac{\left( 1-\beta_{1} c+\alpha \gamma \beta_{1}
\right) ^{2}}{\gamma \beta_{1} \left( 1-\beta_{1} c\right) }\right) \allowbreak
\left( \beta_{1} \delta \left( c\beta_{1} -1\right) \right) +\left( \dfrac{%
\allowbreak \beta_{1} ^{2}\gamma \delta \left( c\beta_{1} -1\right) ^{2}}{\left(
\alpha \beta_{1} \gamma -c\beta_{1} +1\right) ^{2}}\right) >0$

$T_{c}=-u_{c}+\dfrac{\left( 1-\beta_{1} c+\alpha \gamma \beta_{1} \right) ^{2}}{
\gamma \beta_{1} \left( 1-\beta_{1} c\right) }+\allowbreak \beta_{1} \delta \left(
c\beta -1\right) $

We can thus state the following lemma

\begin{lemma}
\label{prcs}
Consider a parameter set s.t. $T_{nc} > 0$, that is the interior equilibrium point $\left(u_{nc},v_{nc}\right)$ of the no cannibalism model \eqref{eq:1} is unstable.
If $ \gamma <  \dfrac{ \beta }{\beta_{1}} $ then there exists a cannibalism rate $c$ s.t $T_{c} < 0$. This implies that the unstable interior equilibrium point $\left(u_{nc},v_{nc}\right)$ of the no cannibalism model \eqref{eq:1} can be stabilised via predator cannibalism. However If $ \gamma > \dfrac{ \beta }{\beta_{1}} $ then the unstable interior equilibrium point $\left(u_{nc},v_{nc}\right)$ of the no cannibalism model \eqref{eq:1} cannot be stabilised via predator cannibalism.
\end{lemma}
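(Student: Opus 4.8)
The plan is to reduce the whole statement to the sign of the single scalar $T_c$. Since $\mathrm{Det}\,J^c>0$ has already been recorded, the interior equilibrium $(u_c,v_c)$ is locally asymptotically stable exactly when $T_c<0$ and unstable when $T_c>0$. I would therefore regard $T_c=T_c(c)$ as a function of the cannibalism rate $c$ on the feasible interval cut out by $u_c>0$, $v_c>0$ and $1-\beta_1 c>0$, and ask whether $T_c$ can be driven below zero by a choice of $c$ in that interval.

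First I would put $T_c$ in a form that isolates its $c$-dependence. From the predator nullcline the equilibrium ratio satisfies $v_c/u_c=\gamma\beta_1/(1-\beta_1 c)$; writing this quantity as $r$, one checks that $J^c_{11}$ depends on $c$ only through $r$, so that $T_c=\Phi(r)-\delta\beta_1(1-\beta_1 c)$ for an explicit rational function $\Phi$, with the no-cannibalism trace recovered at the single value $r=\beta$, i.e. $T_{nc}=\Phi(\beta)-\delta\beta$. As $c$ ranges over its feasible interval, $r$ increases monotonically from its minimal value $r|_{c=0}=\gamma\beta_1$, so the accessible equilibrium ratios are precisely $r\ge\gamma\beta_1$.

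The threshold in the statement is exactly the comparison of this minimal accessible ratio with the destabilising nc-value $\beta$, since $\gamma<\beta/\beta_1$ is the same as $\gamma\beta_1<\beta$. In the regime $\gamma<\beta/\beta_1$ the cannibalism model can realise ratios strictly below $\beta$, and I would take $c$ small (so $r$ sits near $\gamma\beta_1<\beta$), compare $T_c$ against $T_{nc}>0$, and show that the smaller value of $\Phi(r)$ together with the damping contribution $-\delta\beta_1(1-\beta_1 c)$ forces a sign change, exhibiting an explicit $c$ with $T_c<0$. In the opposite regime $\gamma>\beta/\beta_1$ every admissible $c$ forces $r\ge\gamma\beta_1>\beta$, and I would show that the trace then remains positive over the entire feasible range, so that no stabilising $c$ exists.

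The main obstacle is that the damping term of the cannibalism model, $\delta\beta_1(1-\beta_1 c)$, is not the nc damping $\delta\beta$, so the comparison is not controlled by $r$ alone: the argument must simultaneously track the change in $\Phi(r)$ and the decrease of the damping as $c$ grows, and verify that their balance reverses sign exactly at $\gamma\beta_1=\beta$. A secondary technical point is preserving feasibility ($u_c>0$, equivalently $r$ below its upper admissible bound) for the $c$ produced in the stabilising case. I expect the cleanest route is to study $dT_c/dc$, or equivalently $dT_c/dr$ combined with $dr/dc>0$: determining its sign and locating any interior critical point pins down where the minimum of $T_c$ is attained and reduces the entire claim to evaluating that minimum against zero.
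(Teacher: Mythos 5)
Your reduction is set up correctly, and in fact more carefully than the paper's own argument. Writing $s=1-\beta_1 c$ and $r=\gamma\beta_1/s$, one indeed gets $u_c=1-\tfrac{r}{1+\alpha r}$, $J^{c}_{11}=\Phi(r):=\tfrac{r}{(1+\alpha r)^2}+\tfrac{r}{1+\alpha r}-1$ and $J^{c}_{22}=-\delta\beta_1(1-\beta_1 c)$, so $T_c=\Phi(r)-\delta\beta_1(1-\beta_1 c)$ and $T_{nc}=\Phi(\beta)-\delta\beta$, exactly as you claim. Moreover $\Phi'(r)=2/(1+\alpha r)^{3}>0$ and $dr/dc=\gamma\beta_1^{2}/(1-\beta_1 c)^{2}>0$, so $dT_c/dc=\Phi'(r)\,r'(c)+\delta\beta_1^{2}>0$: $T_c$ is \emph{strictly increasing} in $c$, and its infimum over the feasible range is attained as $c\to 0^{+}$, with value $\Phi(\gamma\beta_1)-\delta\beta_1$. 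This is precisely where your plan breaks: the sign of $\Phi(\gamma\beta_1)-\delta\beta_1$ is \emph{not} governed by the comparison $\gamma\beta_1\lessgtr\beta$, so the sign reversal you promise ``exactly at $\gamma\beta_1=\beta$'' does not exist. Concretely, take $\alpha=0.5$, $\beta=1$, $\delta=0.01$, so $T_{nc}\approx 0.101>0$; with $\beta_1=1.1$ and $\gamma=0.8<\beta/\beta_1\approx 0.909$ one finds $\inf_c T_c=\Phi(0.88)-0.011\approx 0.035-0.011>0$, so \emph{no} cannibalism rate stabilises, contradicting the first half you set out to prove. Conversely, take $\alpha=0.5$, $\beta=1$, $\delta=0.1$ (so $T_{nc}\approx 0.011>0$) and $\beta_1=5$, $\gamma=0.3>\beta/\beta_1=0.2$: for small $c>0$ one has $T_c\approx\Phi(1.5)-0.5\approx 0.347-0.5<0$ with $u_c\approx 0.14>0$ feasible and $\det J^{c}\approx 0.07>0$, so stabilisation occurs although $\gamma>\beta/\beta_1$. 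Thus the two terminal steps you defer --- ``forces a sign change'' and ``the trace remains positive over the entire feasible range'' --- are not merely unfinished; they are false as stated, and the obstacle you yourself flagged (the damping is $\delta\beta_1(1-\beta_1 c)$, not $\delta\beta$) is exactly what kills them.

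For comparison, the paper's proof avoids this collision because it proves something weaker and purely geometric: it only compares the slope $r=\gamma\beta_1/(1-\beta_1 c)$ of the predator nullcline with $\beta$, noting that $r<\beta$ is achievable for small $c$ if and only if $\gamma<\beta/\beta_1$; a smaller slope shifts the equilibrium rightwards along the unchanged prey nullcline, lowering $J_{11}=\Phi(r)$ and, in the paper's own words, ``making it possible'' for $J^{c}_{11}<0$ and hence for $T_c<0$. In other words, the paper tracks only $J_{11}$ and never accounts for the simultaneous change in the damping term $J_{22}$ --- precisely the term your decomposition makes visible. Your analytic route is therefore genuinely different from the paper's nullcline argument and more honest about what must be checked, but as a blind proof of the literal statement it has an unfixable gap: carried through, your own monotonicity computation shows that the dichotomy at $\gamma=\beta/\beta_1$ characterises the direction of the nullcline shift, not the sign of the trace, and the lemma in the form stated only follows under an additional hypothesis such as $\Phi(\gamma\beta_1)<\delta\beta_1$ for the stabilising half (and its reverse for the non-stabilising half).
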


\begin{proof}
We take a simple geometric approach and visualise the nullclines below in fig. \ref{fig:nc1}. In order to shift the nullcline to the right due to predator cannibalism, we will require that the slope of the predator nullcline is less than $\beta$. That is $ \dfrac{\gamma \beta_{1} }{1-\beta_{1} c} < \beta$. If $ \dfrac{\gamma \beta_{1} }{\beta} = 1-\epsilon < 1$, we can always choose $c$ small enough s.t $1 - \epsilon < 1-\beta_{1}c$. In this case the slope of the predator nullcline with predator cannibalism will be less than $\beta$. If on the other hand $ \dfrac{\gamma \beta_{1} }{\beta} < 1$, this is clearly an impossibility as $ \dfrac{\gamma \beta_{1} }{\beta} > 1 > 1-\beta_{1} c$.

\begin{figure}[!htp]
\begin{center}
\includegraphics[scale=0.37]{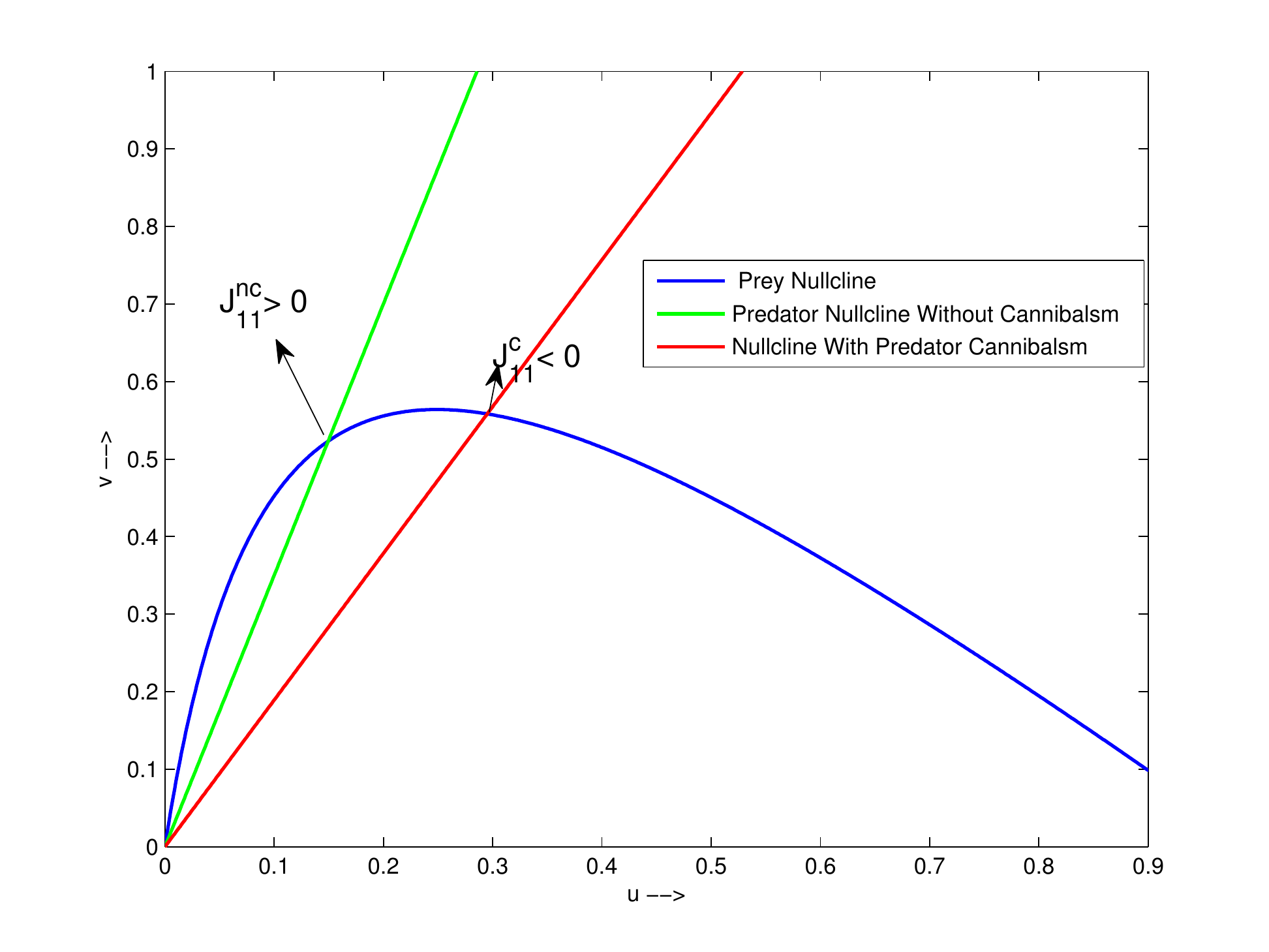} 
\end{center}
\caption{ Here we see the nullclines of the model \eqref{eq:1} vs
the nullclines of model \eqref{eq:prc}. We see that if $ \gamma <  \dfrac{ \beta }{\beta_{1}}$ predator cannibalism, decreases the slope of the 
 predator nullcline, shifting it to the right, and making it possible for $J^{c}_{11} <0$ and hence for $T_{c}<0$, while $T_{nc}>0$. In this figure the parameters chosen are
 $\alpha=0.89, \beta=3.5, c=0.18, \delta=0.01, \gamma=0.2$. }
\label{fig:nc1}
\end{figure}
\end{proof}

\begin{remark}
The cannibals feeding rate of prey $\gamma$ and the ratio $\dfrac{ \beta }{\beta_{1}} $, are critical in terms of stability. The stability condition tells us that if the cannibal eats less prey so $\gamma$ is small in comparison to $\dfrac{ \beta }{\beta_{1}} $, then only a little bit of feeding of conspecifics is enough to stabilize the dynamics.
\end{remark}

\subsection{The Holling-Tanner model with prey cannibalism}
We now consider the inclusion of prey cannibalism in \eqref{eq:1}. This yields the following model

\begin{equation}
 \label{eq:8}
\left. 
\begin{array}{c}
\dfrac{du}{dt}=u\left( 1+c_{1}-u\right) -\dfrac{uv}{u+\alpha v}-  c \dfrac{u^2}{u+d}  \\ 
\dfrac{dv}{dt}=\delta v\left( \beta -\dfrac{v}{u}\right)
\end{array}
\right\} prey \ cannibalism \ model 
\end{equation}

In deriving the above model we make the following assumptions 

\begin{itemize}
 \item The prey $u$ is depredating on its own species $u$.  
 
 \item Here the generic cannibalism term $C(u)$, is added in the prey equation, and is given by

 \begin{equation*}
C(u)= c \times u\times \dfrac{u}{u+d}.
\end{equation*}

That is the functional response of the cannibalistic prey is of Holling type II, with cannibalism rate $c$.

 \item There is a clear gain of energy to the cannibalistic prey, from the act of cannibalism. This gain results in an increase in reproduction in the prey. 
 This in turn leads to a gain in prey population, modeled via adding a $c_{1}u$ term to the prey equation. 

 \item We restrict $c_{1} < c$, as it takes depredation of a number of prey by the cannibal to produce one new offspring. 
\end{itemize}

The non-trivial steady state $\left( u_{c},v_{c}\right)$ is given by

\begin{equation*}
u_{c}=\dfrac{-\left( m+d+c-c_{1}-1\right) +\sqrt{\left( m+d+c-c_{1}-1\right)
^{2}+4d(1-m)+4dc_{1}}}{2}  \label{12}
\end{equation*}

where $m=\dfrac{\beta }{\alpha \beta +1} < 1$.

The feasibility conditions are divided into the following cases,

\begin{enumerate}
\item if $m<1+c_{1}$ and $m+d+c\geq 1+c_{1}$

\item if $m<1+c_{1}$ and $m+d+c<1+c_{1}$
\end{enumerate}

The Jacobian matrix of system  \eqref{eq:8} at the point $\left(
u_{c},v_{c}\right) $ given by:

\begin{center}
$J^{c}=\allowbreak \left( 
\begin{array}{ccc}
\left( \dfrac{\beta }{\left( 1+\alpha \beta \right) ^{2}}-u_{c}-\dfrac{cdu_{c}}{%
\left( u_{c}+d\right) ^{2}}\right) &  & -\dfrac{1}{\left( 1+\alpha \beta
\right) ^{2}} \\ 
&  &  \\ 
\beta ^{2}\delta &  & -\beta \delta%
\end{array}%
\right) $
\end{center}
%\bigskip $J^{nc}=\allowbreak \left( 
%\begin{array}{ccc}
%\left( \dfrac{\beta }{\left( 1+\alpha \beta \right) ^{2}}-u_{nc}\right) &  & 
%-\dfrac{1}{\left( 1+\alpha \beta \right) ^{2}} \\ 
%&  &  \\ 
%\beta ^{2}\delta &  & -\beta \delta%
%\end{array}%
%\right) $
%\end{center}

%where $u_{c}$ and $u_{nc}$ are the equilibrium point for $NC$ model and prey
%cannibalism model respectively.

%To analyze the stability of the steady states $\left( u_{c},v_{c}\right) $ $%
The conditions for stability are:

\begin{equation*}
\mbox{Trace} \  J^{c}<0,  \mbox{det} \  J^{c}>0 \ 
\end{equation*}
The Trace $J^{c}<0$ if
\begin{equation*} 
\left( \dfrac{\beta }{\left( 1+\alpha \beta \right) ^{2}}-u_{c}-\dfrac{cdu_{c}}{
\left( u_{c}+d\right) ^{2}}\right) <\beta \delta 
\end{equation*}
The above implies
\begin{equation*} 
 \left( \dfrac{\beta }{\left( 1+\alpha \beta \right) ^{2}
}-u_{c}-\beta \delta \right) \dfrac{\left( u_{c}+d\right) ^{2}}{du_{c}}<c
\end{equation*} 
Note
 \begin{equation*}
 \mbox{Det} \  J^{c} = \dfrac{ \beta \delta u_{c} [d^{2}+2du_{c}+cd+(u_{c})^{2}]}{d^{2}+2du_{c}+(u_{c})^{2}} >0
 \end{equation*}
  This is always true.

\subsection{Effect of prey cannibalism on stability}
We now ask the following question. Consider the nc-model in a parameter regime, such that the interior equilibrium $(u_{nc}, v_{nc})$ is stable. Can prey cannibalism destabilize this equilibrium? If so, under what conditions?
This is settled via the following lemma,

\begin{lemma}
For any selection of parameters, there always exists a cannibalism rate $c$\ \ s.t the
interior equilibrium point $\left( u_{nc},v_{nc}\right)$ of the $nc$-model is stable, while the interior equilibrium point $\left(u_{c},v_{c}\right) $ of the prey cannibalism
model is unstable.
\end{lemma}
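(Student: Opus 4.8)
The plan is to reduce the whole question to the sign of the trace of $J^{c}$. As already noted in the excerpt, $\mathrm{Det}\,J^{c}>0$ for every admissible parameter choice, so the interior equilibrium $(u_{c},v_{c})$ is stable precisely when $\mathrm{Trace}\,J^{c}<0$ and unstable when $\mathrm{Trace}\,J^{c}>0$. Moreover the predator equation is untouched by prey cannibalism, so the $(2,2)$ entry equals $-\beta\delta$ in both models. Thus I only need to track how the $(1,1)$ entry, and hence the trace, responds to the cannibalism parameters $c$, $c_{1}$, $d$, with the no-cannibalism trace written in the matching form $T_{nc}=\frac{\beta}{(1+\alpha\beta)^{2}}-u_{nc}-\beta\delta$ and the cannibalism trace $T_{c}=\frac{\beta}{(1+\alpha\beta)^{2}}-u_{c}-\frac{cdu_{c}}{(u_{c}+d)^{2}}-\beta\delta$.

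The key step is a single algebraic identity. Dividing the prey equation of the cannibalism model by $u_{c}$ at equilibrium gives $u_{c}=1+c_{1}-m-c\frac{u_{c}}{u_{c}+d}$, while $u_{nc}=1-m$. Eliminating the linear growth terms between $T_{c}$ and $T_{nc}$ then yields the clean relation $T_{c}-T_{nc}=\frac{cu_{c}^{2}}{(u_{c}+d)^{2}}-c_{1}$. This isolates the mechanism: the direct cannibalism term $-\frac{cdu_{c}}{(u_{c}+d)^{2}}$ is \emph{stabilising}, but the cannibalism-induced lowering of the prey equilibrium $u_{c}$ re-enters through the $-u_{c}$ term and is \emph{destabilising}, and the net shift of the trace is exactly $\frac{cu_{c}^{2}}{(u_{c}+d)^{2}}-c_{1}$, which is positive once $\frac{cu_{c}^{2}}{(u_{c}+d)^{2}}>c_{1}$.

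With this identity the existence statement becomes a one-variable problem. Taking $c_{1}$ small (consistent with the restriction $c_{1}<c$) and treating $u_{c}=u_{c}(c)$ as the strictly decreasing steady-state branch, I would show that the destabilising term can be pushed past $c_{1}+\lvert T_{nc}\rvert$, so that $T_{c}$ crosses from negative to positive while $T_{nc}$ stays negative. A convenient route is to parametrise by $u_{c}$, so that $c=\frac{(u_{nc}-u_{c}+c_{1})(u_{c}+d)}{u_{c}}$ and the stabilising part becomes $u_{c}+\frac{cdu_{c}}{(u_{c}+d)^{2}}=u_{c}+\frac{(u_{nc}-u_{c}+c_{1})d}{u_{c}+d}$; minimising this over $u_{c}$ and letting $d\to0$ shows it can be driven to $0$. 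Since $T_{c}=\frac{\beta}{(1+\alpha\beta)^{2}}-\beta\delta-\big(u_{c}+\frac{cdu_{c}}{(u_{c}+d)^{2}}\big)$, this makes $T_{c}>0$ as soon as the ``room'' $\frac{\beta}{(1+\alpha\beta)^{2}}-\beta\delta$ is positive. I would finish by exhibiting an explicit parameter set (mirroring the nullcline figure used for the predator case) on which $T_{nc}<0$ but $T_{c}>0$, checking that the feasibility conditions keep $u_{c}>0$ throughout.

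The main obstacle is precisely this magnitude estimate. Unlike the predator case, the direct effect of cannibalism on $J^{c}_{11}$ is stabilising, so destabilisation is won only through the equilibrium shift, and one must verify that the net gain $\frac{cu_{c}^{2}}{(u_{c}+d)^{2}}-c_{1}$ can \emph{outrun} the full stability margin $\lvert T_{nc}\rvert$ rather than merely shrink it. Carrying the estimate through shows this is feasible exactly in the regime $\beta\delta<\frac{\beta}{(1+\alpha\beta)^{2}}$ (equivalently $\delta<(1+\alpha\beta)^{-2}$), since otherwise every correction term in $T_{c}$ is non-positive and the trace can never become positive. I would therefore be careful to state the destabilisation within this parameter window and to keep $d$ small with $u_{c}>0$, so that the interior equilibrium of the prey cannibalism model remains admissible.
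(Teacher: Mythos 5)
Your proposal is correct, and it takes a genuinely different route from the paper's --- in fact a tighter one that also exposes where the paper's own argument is fragile. Both you and the paper reduce stability to the sign of the trace (using $\det J^{c}>0$), but the paper then asserts that $T_{nc}<0<T_{c}$ is \emph{equivalent} to $u_{c}+\frac{cdu_{c}}{(u_{c}+d)^{2}}<u_{nc}$, bounds the cannibalism term by $c/2$, and finishes by monotonicity of $u_{c}$ in $c$ (``decreasing $c$ to a small enough level''). That displayed inequality only yields $T_{c}>T_{nc}$, not $T_{c}>0$, and since $u_{c}+c/2\to u_{nc}+c_{1}\ge u_{nc}$ as $c\to 0$, the small-$c$ endgame is also suspect. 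Your exact identity
\begin{equation*}
T_{c}-T_{nc}=\frac{c\,u_{c}^{2}}{(u_{c}+d)^{2}}-c_{1},
\end{equation*}
which indeed follows from the equilibrium relation $u_{c}=1+c_{1}-m-\frac{cu_{c}}{u_{c}+d}$, repairs both defects: it shows the upward shift of the trace is at most $c$, so a small cannibalism rate can never cross an order-one stability margin $\lvert T_{nc}\rvert$ --- hence the paper's unqualified ``for any selection of parameters'' is literally false --- and it pinpoints the correct window. Writing $T_{c}=K-u_{c}-\frac{cdu_{c}}{(u_{c}+d)^{2}}$ with $K=\frac{\beta}{(1+\alpha\beta)^{2}}-\beta\delta$, destabilization requires $0<K<u_{nc}$ (your condition $\delta<(1+\alpha\beta)^{-2}$ together with $T_{nc}=K-u_{nc}<0$), and within that window your construction closes the existence step: parametrizing by $u_{c}$, the stabilizing part equals $u_{c}+\frac{(u_{nc}+c_{1}-u_{c})d}{u_{c}+d}$, whose minimum over the branch is of order $\sqrt{d}$, so taking $c$ large and $d,c_{1}$ small forces $T_{nc}<0<T_{c}$, with $u_{c}>0$ preserved throughout. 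One caveat to record: your fix necessarily varies $d$ and $c_{1}$ as well as $c$ --- for fixed large $d$ the maximal achievable trace shift is bounded below the needed margin, and the statement fails with only $c$ free --- so the lemma should be restated with the cannibalism parameters $(c,c_{1},d)$ jointly adjustable and the window $0<K<u_{nc}$ made explicit, exactly as you propose. (Your identity also shows that in the paper's subsequent lemma the function $f$ satisfies $f(0)=c_{1}$, not $0$, a related slip.)
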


\begin{proof}

Note that the condition under which the
interior equilibrium point $\left( u_{nc},v_{nc}\right)$ of the $nc$-model is stable, while the interior equilibrium point $\left(u_{c},v_{c}\right) $ of the prey cannibalism
model is unstable, is 

\begin{equation*}
\left( T_{nc}<0<T_{c}\right) 
\end{equation*}

The above is equivalent to showing that $\exists c$ \ s.t \ 

\begin{equation*}
u_{c}+\dfrac{cdu_{c}}{\left( u_{c}+d\right) ^{2}}
<u_{nc}.
\end{equation*} 
 
 Note
 
 \begin{equation*}
u_{c}+\dfrac{cdu_{c}}{\left( u_{c}+d\right) ^{2}}
< u_{c}+\dfrac{c\frac{\left( u_{c}+d\right) ^{2}}{2}}{\left( u_{c}+d\right) ^{2}} = u_{c}+\dfrac{c}{2} < u_{nc}.
\end{equation*} 
 
 Thus we want to show 
 
\begin{equation*}
u_{c}+\dfrac{c}{2} < u_{nc}.
\end{equation*} 
 
Now $u_c=\dfrac{-\left( m+d+c-c_{1}-1\right) +\sqrt{\left( m+d+c-c_{1}-1\right)
^{2}+4d(1-m)+4dc_{1}}}{2}$ 

It is easily computable that  

\begin{equation*}
 \dfrac{d\left(
u_{c}\right) }{dc} =   -\dfrac{1}{2}+\dfrac{\left(
m+d+c-1-c_{1}\right) }{2\sqrt{\left( m+d+c-1-c_{1}\right) ^{2}+4d(1-m)+4dc_{1}}} <0 ,
\end{equation*} 

that is $u_{c}$ is monotonically decreasing in $c$, and so decreasing $c$ to a small enough level, will eventually cause $u_{c}+\dfrac{c}{2} < u_{nc}$.
This proves the lemma.
\end{proof}

We next ask the following question. Consider the nc-model in a parameter regime, such that the interior equilibrium $(u_{nc}, v_{nc})$ is unstable. Can prey cannibalism stabilize this equilibrium? If so, under what conditions?
In order to answer this question, we must further analyse the dynamical behavior of the prey cannibalism model. We state our result via the following lemma:

\begin{lemma}
\label{pcns}
If \ $\left( m+d+c\right) <1+c_{1}$\ then then there does not exist a cannibalism rate $c$\ \ s.t $%
u_{c}+\dfrac{cdu_{c}}{\left( u_{c}+d\right) ^{2}}>u_{nc}$. This implies that no amount of 
prey cannibalism can stabilize the unstable interior equilibrium point $\left(u_{nc},v_{nc}\right)$ of the prey cannibalism
model.
\end{lemma}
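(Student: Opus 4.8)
The plan is to collapse the claim to a single scalar inequality and then exploit the explicit quadratic satisfied by $u_c$. First I would make ``stabilization'' precise. Writing $A=\frac{\beta}{(1+\alpha\beta)^2}-\beta\delta$ for the common trace threshold, instability of the $nc$-equilibrium is $T_{nc}>0$, i.e. $u_{nc}<A$, while stability of the prey-cannibalism equilibrium is $T_c<0$, i.e. $u_c+\frac{cdu_c}{(u_c+d)^2}>A$. Since $A$ depends only on $\alpha,\beta,\delta$ and is therefore \emph{identical} in both Jacobians, it suffices to prove that for every admissible $c$ one has $u_c+\frac{cdu_c}{(u_c+d)^2}\le u_{nc}$: this forces $u_c+\frac{cdu_c}{(u_c+d)^2}\le u_{nc}<A$, hence $T_c>0$, so no $c$ can stabilize. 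Thus the whole problem reduces to ruling out the inequality displayed in the statement.

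The second step is to linearize that inequality using the equilibrium identity. Substituting $v_c=\beta u_c$ into the prey equation and dividing by $u_c$ gives $1+c_1-u_c-m-\frac{cu_c}{u_c+d}=0$, i.e. $\frac{cu_c}{u_c+d}=u_{nc}+c_1-u_c$ (recall $u_{nc}=1-m$). Multiplying by $\frac{d}{u_c+d}$ rewrites the cross term as $\frac{cdu_c}{(u_c+d)^2}=\frac{d}{u_c+d}(u_{nc}+c_1-u_c)$, so the target inequality $u_c+\frac{cdu_c}{(u_c+d)^2}\le u_{nc}$ becomes, after clearing the positive factor $u_c+d$, the clean polynomial condition $c_1 d\le u_c(u_{nc}-u_c)$. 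The problem is now purely about the location of $u_c$ relative to $u_{nc}$.

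The third step is to pin down $u_c$ from its defining quadratic $f(u)=u^2+(m+d+c-1-c_1)u-d(1-m+c_1)=0$. Evaluating $f$ at $u=u_{nc}=1-m$ collapses to $f(u_{nc})=u_{nc}(c-c_1)-dc_1$, and since the parabola opens upward with $f(0)=-d(1-m+c_1)<0$, the sign of $f(u_{nc})$ locates $u_c$ on one side of $u_{nc}$. I would then feed the relation $u_c^2=-(m+d+c-1-c_1)u_c+d(1-m+c_1)$ back into $u_c(u_{nc}-u_c)$, which simplifies to $u_c(d+c-c_1)-d(u_{nc}+c_1)$; the reduced inequality $c_1 d\le u_c(u_{nc}-u_c)$ is thus equivalent to the explicit bound $u_c\ge\frac{d(u_{nc}+2c_1)}{d+c-c_1}$, to be verified under the feasibility hypothesis $m+d+c<1+c_1$ (equivalently $B:=m+d+c-1-c_1<0$), using $c_1<c$ and the consequent $d<u_{nc}$ that this regime forces.

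The step I expect to be the main obstacle is exactly this final verification, because the nonlinear cannibalism term $\frac{cdu_c}{(u_c+d)^2}$ and the reproductive gain $c_1$ pull against each other: the feasibility condition $m+d+c<1+c_1$ is precisely what keeps $u_c$ from drifting above $u_{nc}$, and the tightest case is $c_1$ approaching $c$ from below, where $f(u_{nc})=u_{nc}(c-c_1)-dc_1$ is most negative. Controlling this boundary regime --- showing that $B<0$ together with $c_1<c$ still guarantees $c_1 d\le u_c(u_{nc}-u_c)$ --- is the crux, and here the monotonicity fact $\frac{du_c}{dc}<0$ established in the preceding lemma is the natural tool: it lets me track $u_c$ across the admissible window $c\in(c_1,\,1+c_1-m-d)$ and reduce the verification to its endpoints, where the bound on $u_c$ can be checked directly.
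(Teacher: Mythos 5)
Your steps one through three are correct, and they are in fact sharper than the paper's own route: the paper bounds $\frac{cdu_{c}}{(u_{c}+d)^{2}}\le\frac{c}{2}$ by AM--GM, sets $f(c)=u_{c}+\frac{c}{2}-u_{nc}$, and argues $f(0)=0$ with $f'(c)<0$ under the hypothesis, whereas your equilibrium identity $\frac{cu_{c}}{u_{c}+d}=u_{nc}+c_{1}-u_{c}$ converts the target inequality \emph{exactly} into $dc_{1}\le u_{c}\left(u_{nc}-u_{c}\right)$, equivalently $u_{c}\ge\frac{d(u_{nc}+2c_{1})}{d+c-c_{1}}$. The genuine gap is your step four: you defer the verification, and it cannot be carried out, because the reduced inequality is false in part of the hypothesized regime. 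Take $m=\frac{1}{2}$ (e.g.\ $\alpha=\frac{1}{2}$, $\beta=\frac{2}{3}$), $d=\frac{3}{10}$, $c_{1}=\frac{1}{2}$, $c=\frac{3}{5}$. Then $m+d+c=1.4<1.5=1+c_{1}$ and $c_{1}<c$, yet your own sign test gives $f(u_{nc})=u_{nc}(c-c_{1})-dc_{1}=0.05-0.15<0$, so $u_{c}>u_{nc}$: explicitly $u_{c}=\frac{0.1+\sqrt{1.21}}{2}=0.6>0.5=u_{nc}$, and $u_{c}+\frac{cdu_{c}}{(u_{c}+d)^{2}}=0.6+\frac{0.108}{0.81}=\frac{11}{15}>u_{nc}$, while $dc_{1}=0.15>u_{c}(u_{nc}-u_{c})=-0.06$. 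Since $c=0.6$ lies strictly inside the admissible window $\left(c_{1},\,1+c_{1}-m-d\right)=(0.5,\,0.7)$, no monotonicity-in-$c$ or endpoint argument can close the gap: in the corner where $c_{1}$ is close to $c$ and $d$ is not small relative to $c-c_{1}$, the inequality displayed in the lemma actually \emph{does} occur.

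What your reduction really exposes is that the argument lives at $c_{1}=0$, and this is a soft spot in the paper's proof as well, not only in your plan. The paper's anchor $f(0)=0$ requires $u_{c}|_{c=0}=u_{nc}$, which holds only when $c_{1}=0$: for fixed $c_{1}>0$ one has $f(u_{nc})|_{c=0}=-c_{1}(u_{nc}+d)<0$, hence $u_{c}|_{c=0}>u_{nc}$ and $f(0)>0$ (in the example above $u_{c}|_{c=0}=1$), so the paper's monotone-decrease argument no longer delivers $f(c)<0$ either. If you restrict to $c_{1}=0$, your identity finishes the proof immediately and more strongly than the paper does: $f(u_{nc})=cu_{nc}>0$ forces $u_{c}<u_{nc}$, hence $dc_{1}=0\le u_{c}(u_{nc}-u_{c})$, for \emph{every} $c>0$, with no smallness hypothesis at all. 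For general $0<c_{1}<c$, the necessary first check is $u_{nc}(c-c_{1})\ge dc_{1}$ (placing $u_{c}$ below $u_{nc}$), but even that is not sufficient; the exact dividing line is your explicit bound $u_{c}\ge\frac{d(u_{nc}+2c_{1})}{d+c-c_{1}}$, which would have to be adopted as an additional hypothesis in place of your proposed endpoint verification.
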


\begin{proof}
The explicit condition under which the equilibrium
point $\left( u_{nc},v_{nc}\right)$ of the nc-model is unstable and
 the equilibrium point $\left( u_{c},v_{c}\right)$ 
 of the prey cannibalism model is stable is given by

\begin{equation*}
\left( T_{c}<0<T_{nc}\right) 
\end{equation*}

This is equivalent to $u_{c}+\dfrac{cdu_{c}}{\left( u_{c}+d\right) ^{2}} < u_{nc}$.
Now,
\begin{equation*}
u_{c}+\dfrac{cdu_{c}}{\left( u_{c}+d\right) ^{2}} = u_{c}+\dfrac{c}{2}\dfrac{2du_{c}}{\left( u_{c}+d\right) ^{2}} < u_{c}+
\dfrac{c}{2}\dfrac{\left( u_{c}+d\right) ^{2}}{\left( u_{c}+d\right) ^{2}}=u_{c}+\dfrac{c}{2}.
\end{equation*}

Thus if we can show 

$u_{c}+\dfrac{c}{2}<u_{nc}$

then this will imply that 
$u_{c}+\dfrac{cdu_{c}}{\left( u_{c}+d\right) ^{2}}<u_{c}+\dfrac{c}{2}<u_{nc}$.

We let $f(c)=u_{c}+\dfrac{c}{2}-u_{nc}$, note $f(0)=0$.

We next compute the derivative of $f$ w.r.t. $c$.

\begin{eqnarray}
&&f(c)  \nonumber \\
&& = \dfrac{-\left(m+d+c-1-c_{1}\right) +\sqrt{\left( m+d+c-1-c_{1}\right)
^{2}+4d(1-m)+4dc_{1}}}{2}+\dfrac{c}{2}-u_{nc} \nonumber \\
&& \Rightarrow f^{^{\prime }}(c) = -\dfrac{1}{2}+\dfrac{\left(
m+d+c-1-c_{1}\right) }{2\sqrt{\left( m+d+c-1-c_{1}\right) ^{2}+4d(1-m)+4dc_{1}}}  + \dfrac{1}{2}\nonumber \\
&& =\dfrac{\left( m+d+c-1-c_{1}\right) }{2\sqrt{\left(
m+d+c-1-c_{1}\right)^{2}+4d(1-m)+4dc_{1}}} \nonumber \\
&& < 0  \nonumber \\
\end{eqnarray}

since $\left( m+d+c\right) <1+c_{1}$. This implies that $f^{^{\prime }}(c) < 0$. Thus $f(c)$ is decreasing, and since $f(0) = 0$, it must be that $f(c) = u_{c}+\dfrac{c}{2}-u_{nc} < 0$, for all $c$. Thus
$u_{c}+\dfrac{cdu_{c}}{\left( u_{c}+d\right) ^{2}} < u_{c}+\dfrac{c}{2} < u_{nc}$.
This proves the lemma.
\end{proof}

%\begin{lemma}
%if \ $\left( m+d+c\right) >1+c_{1}$\ still there
%does not exist a cannibalism rate $c$\ \ s.t $u_{c}+\dfrac{cdu_{c}}{\left( u_{c}+d\right) ^{2}}>u_{nc}$. 
%This implies that no amount of prey cannibalism can stabilize the interior equilibrium point $\left(u_{c},v_{c}\right)$ of the prey cannibalism
%model.
% \end{lemma}
%
%\begin{lemma}
%\label{pcnsl}
%if \ $\left( m+d+c\right) >1+c_{1}$\ and $c+d<2$
%then there does not exist a prey cannibalism rate $c$ s.t we have $u_{c}+\dfrac{cdu_{c}}{\left( u_{c}+d\right) ^{2}}>u_{nc}$. 
%This implies that no amount of prey cannibalism can stabilize the interior equilibrium point $\left(u_{c},v_{c}\right)$ of the prey cannibalism
%model.
% \end{lemma}
%
%\begin{proof}
%\end{proof}

\begin{remark}
The above proof is under the first feasibility condition $\left( m+d+c\right) <1+c_{1}$, that is small cannibalism parameter $c$.
Numerically, we could not find a cannibalism rate $c$ such that the equilibrium point $\left( u_{nc},v_{nc}\right)$ 
of the nc-model is unstable while the equilibrium point $\left(u_{c},v_{c}\right)$ of the prey cannibalism model is stable, under the second feasibility condition  $\left( m+d+c\right) >1+c_{1}$, that is when the  cannibalism parameter $c$ can be very large.
We tried a parameter sweep of all parameters in the interval $[0,10]$. We thus beleive that prey cannibalism cannot stabilise, whatever the range of the cannibalism parameter $c$ be, large or small. However, the large $c$ case is unproven at this point.
\end{remark}

Nonetheless, we make the following conjecture

\begin{conjecture}
There does not exist a cannibalism rate $c$, such that the equilibrium point $\left( u_{nc},v_{nc}\right)$ 
of the nc-model is unstable while the equilibrium point $\left(u_{c},v_{c}\right)$ of the prey cannibalism model is stable. 
\end{conjecture}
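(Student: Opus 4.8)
The plan is to reduce the two-model stability comparison to a single scalar inequality and then to test that inequality over the entire feasible parameter set. First I would record that the Jacobians $J^{nc}$ and $J^{c}$ differ only in their $(1,1)$ entry: the entries $J_{12}=-1/(1+\alpha\beta)^{2}$, $J_{21}=\beta^{2}\delta$ and $J_{22}=-\beta\delta$ are identical for both models, and both determinants are positive. Hence ``$(u_{nc},v_{nc})$ unstable while $(u_{c},v_{c})$ stable'' is controlled entirely by the traces $T_{nc}=J^{nc}_{11}-\beta\delta$ and $T_{c}=J^{c}_{11}-\beta\delta$. Writing $h(c):=u_{c}+\dfrac{cdu_{c}}{(u_{c}+d)^{2}}$, we have $J^{nc}_{11}=\dfrac{\beta}{(1+\alpha\beta)^{2}}-u_{nc}$ and $J^{c}_{11}=\dfrac{\beta}{(1+\alpha\beta)^{2}}-h(c)$, so $T_{c}<0<T_{nc}$ is solvable for an admissible $\delta>0$ (the conjecture already presupposes an unstable base equilibrium) exactly when $h(c)>u_{nc}$. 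Thus the conjecture is equivalent to the assertion that $h(c)\le u_{nc}$ for every feasible parameter choice.

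Next I would eliminate $c_{1}$ through the interior-equilibrium identity. Dividing the prey equation by $u_{c}$ and using $v_{c}=\beta u_{c}$ and $m=\beta/(1+\alpha\beta)$ gives $1+c_{1}-m=u_{c}+\dfrac{cu_{c}}{u_{c}+d}$; since $u_{nc}=1-m$, this is $u_{c}-u_{nc}=c_{1}-\dfrac{cu_{c}}{u_{c}+d}$. Substituting into $h$ collapses the whole comparison to the remarkably clean form
\[
h(c)-u_{nc}=c_{1}-\frac{cu_{c}^{2}}{(u_{c}+d)^{2}},
\]
so the conjecture is exactly the polynomial inequality $c_{1}(u_{c}+d)^{2}\le c\,u_{c}^{2}$ on the feasible set. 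In the regime $m+d+c<1+c_{1}$ this is already implied by Lemma~\ref{pcns} (its estimate $\tfrac{cdu_{c}}{(u_{c}+d)^{2}}<\tfrac{c}{2}$ forces $h(c)<u_{nc}$), so only the complementary regime $m+d+c>1+c_{1}$ flagged in the Remark remains. For that remaining regime the natural plan is to substitute the closed form $u_{c}=\tfrac12\big(-(m+d+c-c_{1}-1)+\sqrt{(m+d+c-c_{1}-1)^{2}+4d(1-m+c_{1})}\big)$ into $c_{1}(u_{c}+d)^{2}-c\,u_{c}^{2}$, clear the radical, and attempt to certify the sign of the resulting polynomial in $(m,d,c,c_{1})$ on the feasible cell, for instance by controlling its values on the boundary faces $c_{1}\downarrow 0$ and $c_{1}\uparrow c$ and excluding interior sign changes.

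The hard part is that I expect this final step to fail rather than succeed. As $c_{1}\downarrow 0$ the quantity $c_{1}-cu_{c}^{2}/(u_{c}+d)^{2}$ is strictly negative, consistent with the conjecture; but at $c_{1}=c$ it equals $c\big(1-u_{c}^{2}/(u_{c}+d)^{2}\big)>0$, so by continuity it is \emph{positive} for $c_{1}$ just below $c$. This strongly indicates that the conjecture, as stated with only the restriction $c_{1}<c$, is false in the large-$c$ regime, and that the correct hypothesis is the sharper $c_{1}\le c\,u_{c}^{2}/(u_{c}+d)^{2}$ --- biologically the regime $c_{1}\ll c$ in which a cannibal must consume several prey to yield one offspring. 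I therefore expect the main obstacle to be conceptual rather than computational: the genuine outcome of this plan is not a proof but an explicit counterexample (a point with $c_{1}$ close to $c$ in the second feasibility cell, for which one can exhibit a $\delta$ making $T_{nc}>0>T_{c}$) together with the corrected sufficient condition $c_{1}(u_{c}+d)^{2}\le c\,u_{c}^{2}$ for non-stabilisation.
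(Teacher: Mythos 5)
There is no proof in the paper to compare against: the statement you were given is an open conjecture, supported only by Lemma~\ref{pcns} (which covers the regime $m+d+c<1+c_{1}$) and by a numerical parameter sweep reported in the Remark, where the authors explicitly say the large-$c$ case is unproven. Your reduction to the trace comparison $h(c)=u_{c}+\tfrac{cdu_{c}}{(u_{c}+d)^{2}}$ versus $u_{nc}$ is exactly the paper's framework, but your elimination identity is genuinely new and sharper than anything in the paper: using $v_{c}=\beta u_{c}$ and the prey equilibrium equation $u_{c}+\tfrac{cu_{c}}{u_{c}+d}=1+c_{1}-m$ together with $u_{nc}=1-m$, one gets
\begin{equation*}
h(c)-u_{nc}=c_{1}-\frac{cu_{c}^{2}}{(u_{c}+d)^{2}},
\end{equation*}
and I have checked this algebra; it is correct. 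Where the paper bounds $\tfrac{cdu_{c}}{(u_{c}+d)^{2}}<\tfrac{c}{2}$ and then runs a monotonicity argument in $c$, your identity replaces that inequality with an equality, which both reproves Lemma~\ref{pcns}'s regime in one line and decides the regime the paper left open.

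Your anticipated conclusion --- that the conjecture as stated is \emph{false} --- is borne out, and the only thing missing from your write-up is the explicit counterexample, which is easy to complete. You do need one condition you mention only parenthetically: for the nc-equilibrium to be destabilizable at all one needs $J^{nc}_{11}=\tfrac{\beta}{(1+\alpha\beta)^{2}}-u_{nc}>0$, i.e. $\tfrac{1}{\sqrt{1-\alpha}\,(1+\sqrt{1-\alpha})}<\beta<\tfrac{1}{1-\alpha}$, a nonempty window. Take $\alpha=0.9$, $\beta=5$ (so $m=10/11$, $u_{nc}=1/11$, $J^{nc}_{11}\approx 0.0744$), and $c=1$, $c_{1}=0.99$, $d=1$, which respects the paper's only restriction $c_{1}<c$ and lies in the unproven regime $m+d+c>1+c_{1}$. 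Then $u_{c}\approx 0.677$, your identity gives $h(c)-u_{nc}\approx 0.99-0.163>0$, so $J^{c}_{11}\approx -0.753$; choosing $\delta=0.007$ yields $T_{nc}\approx 0.039>0$ with $\mathrm{Det}\,J^{nc}\approx 0.0032>0$ (nc-equilibrium unstable) while $T_{c}<0$ and $\mathrm{Det}\,J^{c}>0$ always (cannibalism equilibrium stable). These parameters all lie in $[0,10]$, so the paper's reported sweep evidently missed the corner $c_{1}$ near $c$. Your corrected sufficient condition $c_{1}(u_{c}+d)^{2}\le c\,u_{c}^{2}$ is then precisely the right hypothesis under which non-stabilisation holds, and it would be worth noting that it is automatic in the paper's first feasibility case, which is why Lemma~\ref{pcns} is consistent with your refutation.
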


\section{The effect of {\emph{Noisy}}  cannibalism rate on limit cycle dynamics}
In this section we aim to investigate the effect of a noisy cannibalistic parameter on the limit cycle dynamics in the prey cannibalism model.
We proceed by stating the following lemma,

\begin{lemma}
\label{bound}
The solutions of system \eqref{eq:8} are always positive and bounded, furthermore
there exists $T\geq 0$ such that $0<u(t)<1+c_{1},$ $0<v(t)<\beta(1+c_{1}),$ for all $t > T$.
\end{lemma}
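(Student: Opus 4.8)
The plan is to prove positivity and boundedness separately, using invariance of the coordinate axes for the former and scalar differential inequalities of logistic type, handled by the comparison principle, for the latter. First I would dispose of positivity. The right-hand side of \eqref{eq:8} is locally Lipschitz on the open positive quadrant $\{u>0,\,v>0\}$, so a unique local solution exists for any positive initial datum and the singular factor $v/u$ is harmless there. The line $\{u=0\}$ is invariant since $du/dt=0$ whenever $u=0$, and $\{v=0\}$ is invariant since $dv/dt=0$ whenever $v=0$; by uniqueness no trajectory issuing from the open quadrant can reach either axis, so $u(t),v(t)>0$ for as long as the solution is defined.

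Next I would bound the prey. Because $\frac{uv}{u+\alpha v}\ge 0$ and $c\frac{u^{2}}{u+d}\ge 0$ on positive states, the first equation yields $\frac{du}{dt}\le u(1+c_{1}-u)$, and comparison with the logistic equation $\dot w=w(1+c_{1}-w)$ gives $\limsup_{t\to\infty}u(t)\le 1+c_{1}$. To upgrade this to a strict bound reached in finite time, I would observe that whenever $u\ge 1+c_{1}$ one has $u(1+c_{1}-u)\le 0$ while the cannibalism loss is bounded below, $c\frac{u^{2}}{u+d}\ge c\frac{(1+c_{1})^{2}}{1+c_{1}+d}=:\kappa>0$, so that $\frac{du}{dt}\le-\kappa$. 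Hence $u$ decreases at a uniform rate until it enters $\{u<1+c_{1}\}$ at some time $T_{1}$, and since $\frac{du}{dt}<0$ on the boundary $u=1+c_{1}$ this set is forward invariant; thus $0<u(t)<1+c_{1}$ for all $t>T_{1}$.

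With the prey bound in hand I would bound the predator. For $t>T_{1}$ we have $u<1+c_{1}$, hence $\frac{v}{u}>\frac{v}{1+c_{1}}$, and the second equation gives $\frac{dv}{dt}<\delta v\!\left(\beta-\frac{v}{1+c_{1}}\right)$, again logistic but with carrying capacity $\beta(1+c_{1})$. The same comparison-and-invariance argument then produces a time $T\ge T_{1}$ with $0<v(t)<\beta(1+c_{1})$ for all $t>T$, which is the asserted conclusion.

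The main obstacle I anticipate is the coupling through the singular term $v/u$. First, the two estimates must be staged: the predator bound is only available after the prey has entered its strip. Second, and more delicate, the downward push on $v$ at the level $v=\beta(1+c_{1})$ degenerates to $0$ as $u\uparrow 1+c_{1}$, so I cannot simply quote a uniform descent rate as I did for the prey; I would close this gap by noting that any interior equilibrium satisfies $v_{c}=\beta u_{c}$ with $u_{c}<1+c_{1}$ (from $u_{c}(1+c_{1}-u_{c})>0$ at equilibrium), so $v_{c}<\beta(1+c_{1})$ and the $\omega$-limit set lies strictly inside the strip, whence $v$ genuinely enters it rather than approaching its boundary asymptotically. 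Finally, to guarantee that the solution is global so that the phrase ``for all $t>T$'' is meaningful, I would use $\frac{uv}{u+\alpha v}<\frac{u}{\alpha}$ and $c\frac{u^{2}}{u+d}\le cu$ to obtain the lower comparison $\frac{du}{dt}>u\!\left(1+c_{1}-\tfrac{1}{\alpha}-c-u\right)$, whose positive solutions never vanish in finite time; this keeps $u$ bounded away from $0$ on every finite interval, controls the singularity $v/u$, and together with the uniform upper bounds rules out finite-time blow-up.
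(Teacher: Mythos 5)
Your overall route is the same as the paper's: staged logistic comparisons, first $du/dt\le u(1+c_1-u)$ for the prey, then, once $u<1+c_1$, $dv/dt<\delta v\left(\beta-\frac{v}{1+c_1}\right)$ for the predator; the paper's own proof consists of exactly these two comparison steps, with positivity asserted via quasi-monotonicity. Your prey step is in fact tighter than the paper's, which jumps from $\limsup_{t\to\infty}u(t)\le 1+c_1$ to the strict statement $u(t)<1+c_1$ for $t>T$ without comment; your uniform descent rate $\kappa=c(1+c_1)^2/(1+c_1+d)$ extracted from the cannibalism loss, together with the strictly inward-pointing field at $u=1+c_1$, honestly closes that step, and your lower logistic bound handling global existence and the singular quotient $v/u$ addresses a point the paper leaves silent. (Minor: the claimed invariance of $\{u=0\}$ is moot, since the vector field is singular there when $v>0$; positivity of $u$ already follows from your lower bound $\dot u>u(1+c_1-1/\alpha-c-u)$, and positivity of $v$ from $v(t)=v(0)\exp\bigl(\delta\int_0^t(\beta-v/u)\,ds\bigr)$, which is finite on bounded intervals because $u$ stays bounded below.)

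The genuine gap is your patch for the predator's finite-time entry. From ``any interior equilibrium satisfies $v_c=\beta u_c<\beta(1+c_1)$'' you conclude that the $\omega$-limit set lies strictly inside the strip; that inference fails, because in the plane the $\omega$-limit set need not be an equilibrium: Poincar\'e--Bendixson allows periodic orbits and cycles of equilibria with connecting orbits, and the paper's own Lemma \ref{lc} shows that this very system possesses limit cycles precisely when the interior equilibrium is unstable, so in the regime of interest the limit set is typically a cycle about which your equilibrium computation says nothing. Two repairs are available. (i) Finish the case analysis: at a time of maximal $u$ along any periodic orbit, $\dot u=0$ forces $u<1+c_1$ (the predation and cannibalism terms are strictly positive there), and at a time of maximal $v$, $\dot v=0$ gives $v=\beta u<\beta(1+c_1)$; the boundary equilibria also satisfy both bounds, so every candidate limit set lies in the open strip, and compactness of the $\omega$-limit set plus attraction of the trajectory to it then yields entry in finite time. (ii) More in the spirit of your own prey argument, avoid $\omega$-limit sets entirely: since $\dot u\le u(1+c_1-u)-cu^2/(u+d)$ is uniformly negative on the band $1+c_1-\eta\le u\le 1+c_1$ for small $\eta>0$, the region $\{u\le 1+c_1-\eta\}$ is reached in finite time and is forward invariant; thereafter, for $v\ge\beta(1+c_1)$, one gets $\dot v\le-\delta\beta^2(1+c_1)\eta/(1+c_1-\eta)<0$ uniformly, which is exactly the uniform descent rate whose absence you correctly flagged. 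With either repair your proof is complete, and more careful than the one printed in the paper.
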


\begin{proof}
The positivity easily follows by the quasi monotonicity of the reaction terms.
For boundedness note that for system \eqref{eq:8}, we have 

$\frac{du}{dt} \leq u(1+c_{1}-u)$

A standard comparison argument shows that $\lim_{t\rightarrow \infty}u(t)\leq 1+c_{1}.$It follows that there exists $T>0$ such that $u(t)<1+c_{1},$ for
$t>T$.

From the second equation of system \eqref{eq:8}, we see that, for $t>T$,

%$v^{^{\prime }} \leq \delta v\left( \beta -\dfrac{y}{1+c_{1}}\right)$

A standard comparison argument shows that  $\lim_{t\rightarrow \infty
}v(t)\leq (1+c_{1})\beta$.
\end{proof}

Now, it can be shown that under certain parametric restrictions, there exists a limit cycle in the prey cannibalism model  \eqref{eq:8}.
We state this via the following lemma,

\begin{lemma}
\label{lc}
Consider the prey cannibalism model \eqref{eq:8}.
Assume that the following conditions hold

\begin{equation}
\alpha \beta + 1>\beta , \  \left( \dfrac{\beta }{\left( 1+\alpha \beta \right) ^{2}
}-u_{c}-\beta \delta \right) \dfrac{\left( u_{c}+d\right) ^{2}}{du_{c}}>c,
\end{equation}

then the system possesses at least one limit cycle.
\end{lemma}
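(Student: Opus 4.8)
The plan is to apply the Poincar\'e--Bendixson theorem. Since \eqref{eq:8} is a planar autonomous system, it suffices to exhibit a compact, positively invariant region $\mathcal{R}$ lying in the open first quadrant and containing no equilibrium. I would take $\mathcal{R}$ to be an annular region: its outer boundary supplied by the dissipativity estimate of Lemma~\ref{bound}, and its inner boundary a small circle excised around the interior equilibrium $(u_c,v_c)$.

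Before constructing $\mathcal{R}$ I would record two preliminary facts. First, $(u_c,v_c)$ is the unique equilibrium of \eqref{eq:8} in the open first quadrant, since $u_c$ is the single admissible positive root of the quadratic defining it and $v_c=\beta u_c$ is then determined by the predator nullcline; the condition $\alpha\beta+1>\beta$ is exactly the feasibility requirement $m<1$ guaranteeing $u_c>0$. Second, under the second hypothesis the equilibrium is a repeller: that inequality is precisely the statement that the trace of $J^{c}$ is positive, while $\mathrm{Det}\,J^{c}>0$ holds unconditionally as shown above, so both eigenvalues of $J^{c}$ have positive real part. Consequently there is a radius $\rho>0$ for which the vector field points strictly outward along the circle of radius $\rho$ about $(u_c,v_c)$, and the excised ball contains the only equilibrium, leaving $\mathcal{R}$ equilibrium-free.

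For the outer boundary, Lemma~\ref{bound} shows that every trajectory eventually enters and stays in the box $0<u<1+c_1$, $0<v<\beta(1+c_1)$, so the flow is inward on the edges $u=1+c_1$ and $v=\beta(1+c_1)$. On the two edges adjacent to the axes the vector field is singular, and here one must check the sign of the flow directly. Along $v$ small the predator equation gives $\dot v=\delta v(\beta-v/u)>0$, so trajectories are pushed off the prey axis; along $u$ small one uses $uv/(u+\alpha v)\approx u/\alpha$ and $v/u\to\infty$ to track $\dot u$ and $\dot v$ and close off the region at small positive values $u=\eta$, $v=\xi$. With $\mathcal{R}$ the resulting box minus $B_\rho(u_c,v_c)$, the Poincar\'e--Bendixson theorem then yields a nonempty $\omega$-limit set inside $\mathcal{R}$ that, being equilibrium-free, must be a periodic orbit, i.e.\ a limit cycle of \eqref{eq:8}.

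I expect the analysis near the coordinate axes to be the main obstacle. Because the functional response is ratio dependent, the origin is a genuine singularity of the vector field, and making the inward-flow argument rigorous on the inner edges $u=\eta$, $v=\xi$ is delicate: in particular $\dot u\approx u(1+c_1-1/\alpha)$ for small $u$, so guaranteeing that trajectories do not leak toward the $v$-axis appears to require $\alpha(1+c_1)>1$ or a comparable refinement of the hypotheses, rather than following from the mechanical invocation of the theorem.
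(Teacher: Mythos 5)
Your proposal is, in substance, the paper's own proof: the second hypothesis is exactly the reversal of the trace condition, so $T_{c}>0$, while $\mathrm{Det}\,J^{c}>0$ holds unconditionally, making $(u_{c},v_{c})$ an unstable node or focus; combined with the eventual bounds of Lemma \ref{bound} this gives a trapping region from which Poincar\'e--Bendixson produces a periodic orbit. The paper's argument is precisely this two-line invocation, and your annulus construction, uniqueness of the interior equilibrium, and identification of $\alpha\beta+1>\beta$ as the feasibility condition $m<1$ are just the details it leaves implicit.

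The one place where you go beyond the paper is the analysis near the coordinate axes, and the delicacy you flag there is genuine --- indeed it is a gap in the paper's proof as much as in yours. Because the functional response is ratio dependent, the vector field is singular at the origin, and it is well known for such models that orbits can be attracted to $(0,0)$ even when the interior equilibrium is a repeller; Lemma \ref{bound} supplies only upper bounds and positivity, which does not prevent the $\omega$-limit set from meeting the boundary, so Poincar\'e--Bendixson cannot be applied mechanically. Your fix is sound: since $\frac{uv}{u+\alpha v}\leq \frac{u}{\alpha}$ uniformly in $v$, one gets $\frac{du}{dt}\geq u\bigl(1+c_{1}-\tfrac{1}{\alpha}-u-\tfrac{cu}{u+d}\bigr)$, so under $\alpha(1+c_{1})>1$ the edge $u=\eta$ is an inflow edge for small $\eta$, and then $v=\xi$ with $\xi<\beta\eta$ closes the region. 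Note, however, that $\alpha(1+c_{1})>1$ is an \emph{additional} hypothesis not present in the lemma as stated; so strictly speaking your argument proves a slightly weaker statement, and completing the lemma under its stated hypotheses alone would require a finer analysis of the flow near the singular origin, a point on which the paper is silent.
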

Here $u_{c}$ is the prey equilibrium in model \eqref{eq:8}.
\begin{proof}
From the local stability analysis, we have that if the above conditions hold, then the positive equilibrium $E_{c}=\left(
u_{c},v_{c}\right) $ of system  \eqref{eq:8} is an unstable node or focus. The
existence of a limit cycle now follows from lemma \ref{bound} and Poincair\'{e}%
-Bendixson theorem \cite{strogatz2014nonlinear}.
\end{proof}

\begin{remark}
Although the existence of a unique limit cycle for the nc-model is well known, and the existence of a limit cycle with the addition of prey cannibalism
is seen via lemma \ref{lc}, we are unable to prove uniqueness, in this case. We turn our attention to investigating the effect that a stochastic cannibalism rate has on the limit cycle dynamics.
\end{remark}
Biological species are always subject to random environmental fluctuations, be it via climatic factors, or ecological. Also, many times the parameters we use for modeling such processes many not be known with certainty.
In this present paper, we only assume that the cannibalism rate $c$
is affected by environmental fluctuations, which is define as mortality rate due
to cannibalism. We introduce a stochastic factor in 
time ${\mathrm{\eta}(t)}$, into the cannibalism rate $%
(c)$, and so the parameter $c$ is treated as a Gaussian random variable with mean $\hat{c}$ and intensity $\alpha_{1}$. Here ${\mathrm{\eta}(t)}=``\frac{d}{dt}W_{t}"$ is a classical white noise. 
\begin{align} 
 \label{equ:NoisyCoefficient}
c \approx \hat{c} + \alpha_{1} {\mathrm{\eta}(t) }.
\end{align}

Under these assumptions we consider the following SDE model, where for1simplicity, we redefine  mean $\hat{c}$ to be  the parameter $c$.

\begin{eqnarray}
\label{eq:8s}
&&du=\left(u\left( 1+c_{1}-u\right) -\dfrac{uv}{u+\alpha v}- c \dfrac{u^2}{u+d}\right)dt  + \alpha_{1}\dfrac{u^2}{u+d} dW_{t}  \nonumber \\
&&dv=\delta v\left( \beta -\dfrac{v}{u}\right)dt \nonumber \\
\end{eqnarray}

 We observe that as $\alpha_{1}$ is increased, the limit cycle dynamics becomes aperiodic, but some cyclicity remains. The noise is unable to stabilise, that is drive the population cycle to a steady state. This is shown via the following plots,

\begin{figure}[!htp]
\begin{center}
\includegraphics[scale=0.26]{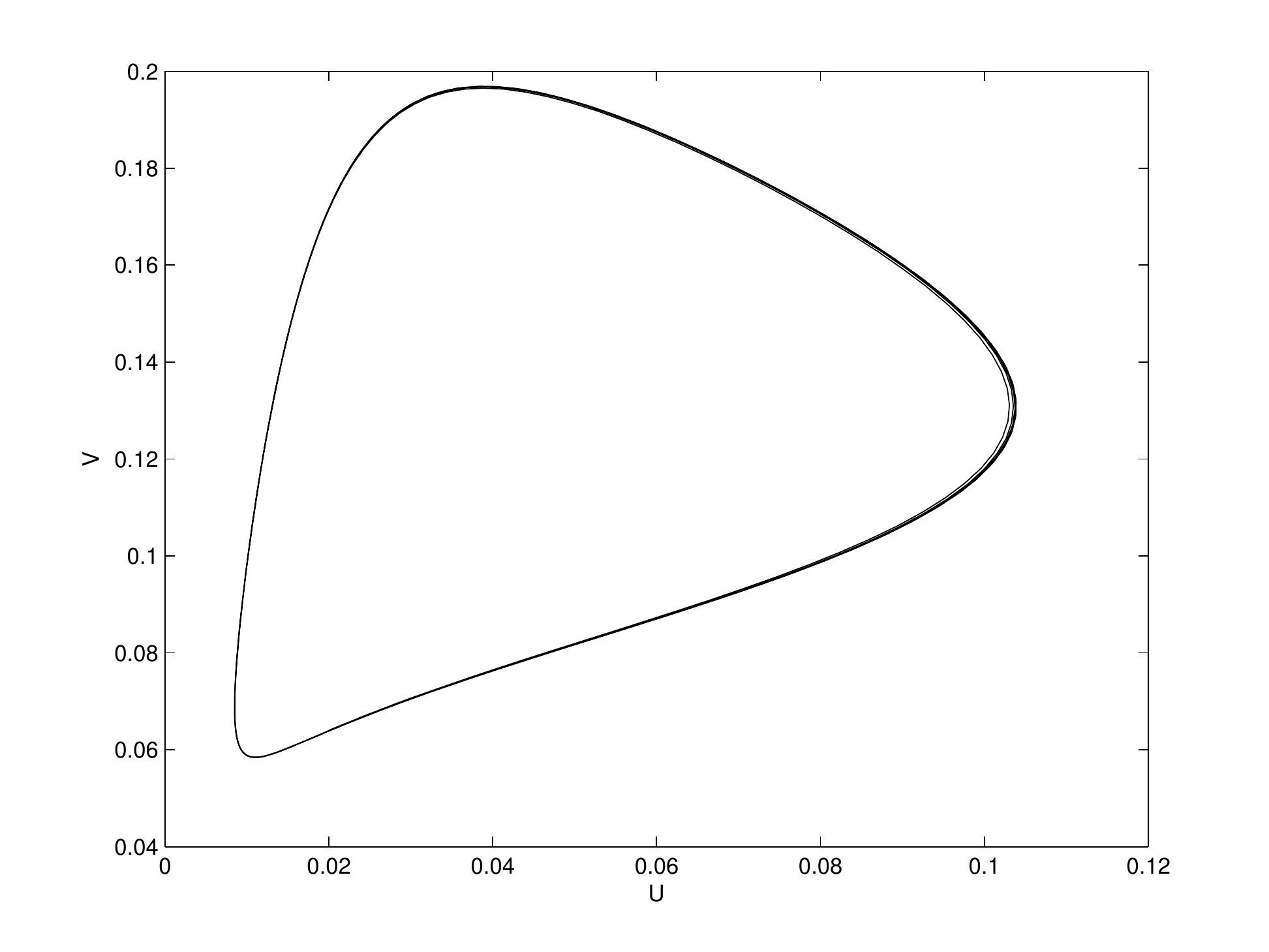}  %
\includegraphics[scale=0.26]{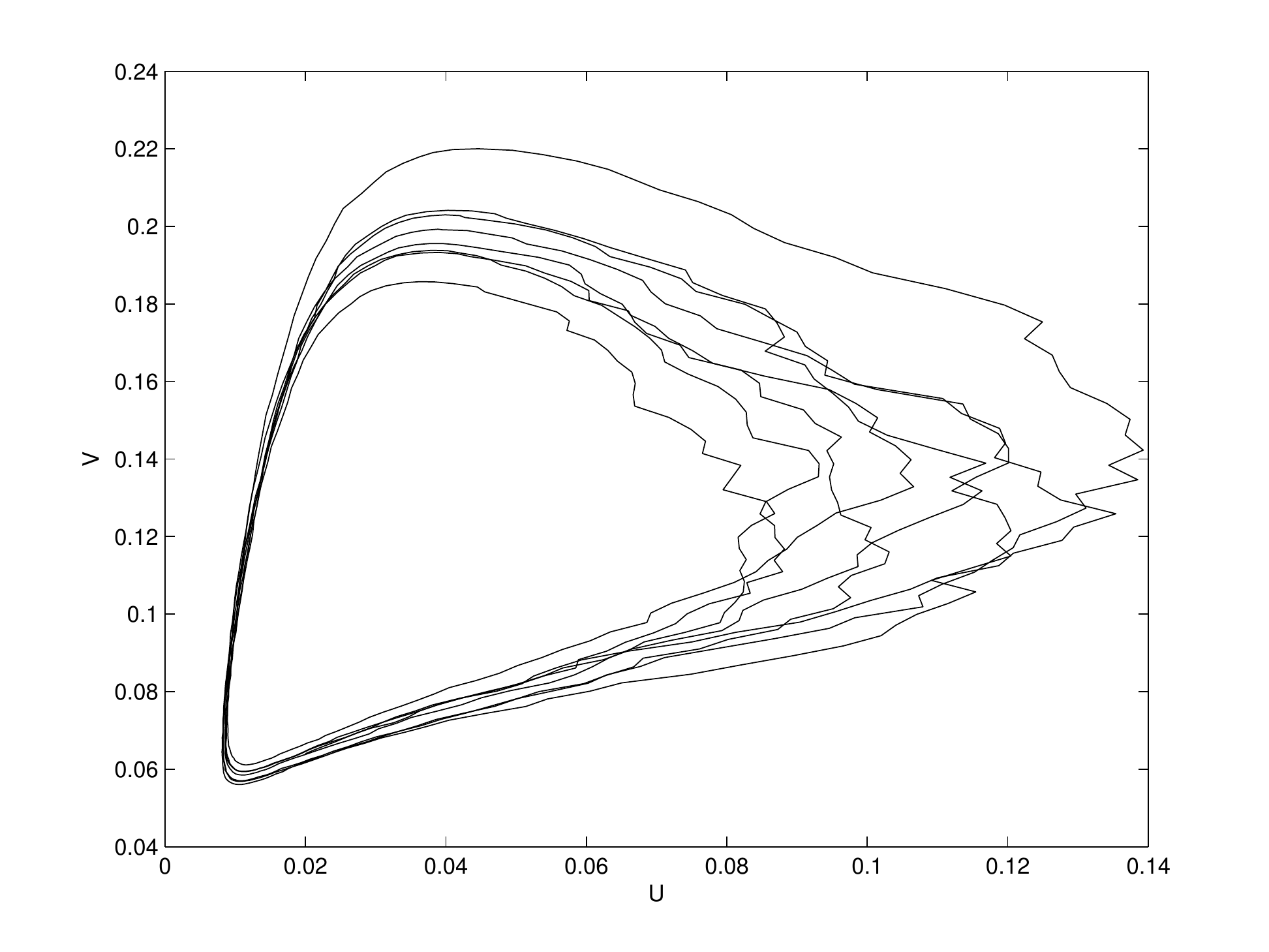} 
\end{center}
\caption{Here we see a limit cycle in nc-model in the left panel (with $\alpha_1=0.0$). We observe that the effect of a noisy prey cannibalism rate is to distort the limit cycle, but some cyclicity remains (with $\alpha_1=0.1$). For the list of parameters chosen please refer to Table \ref{table:Paramset}. }
\label{fig:2s}
\end{figure}

\begin{figure}[!htb]
\begin{center}
\subfigure{
\includegraphics[scale=0.26]{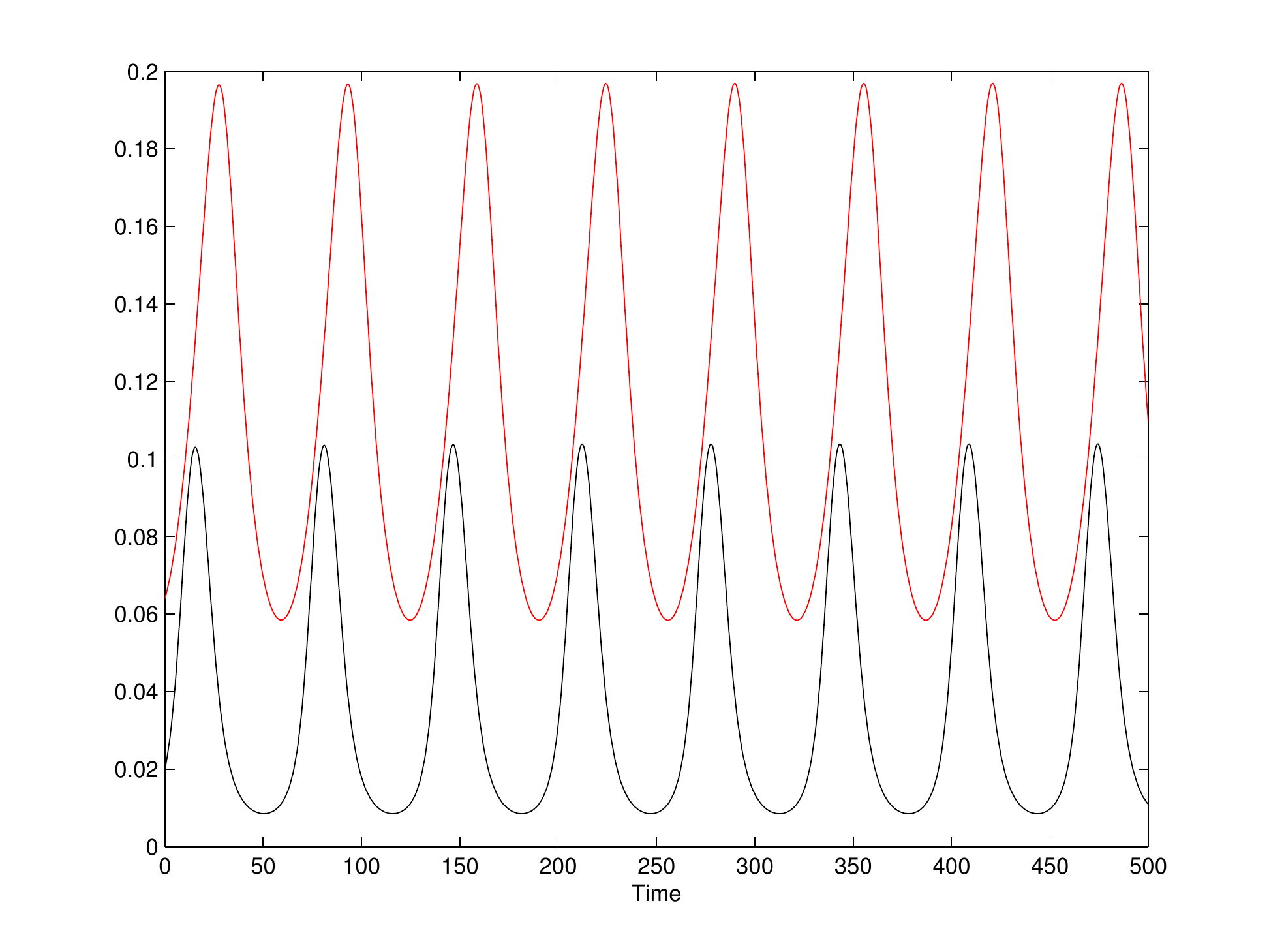}} 
\subfigure{
\includegraphics[scale=0.26]{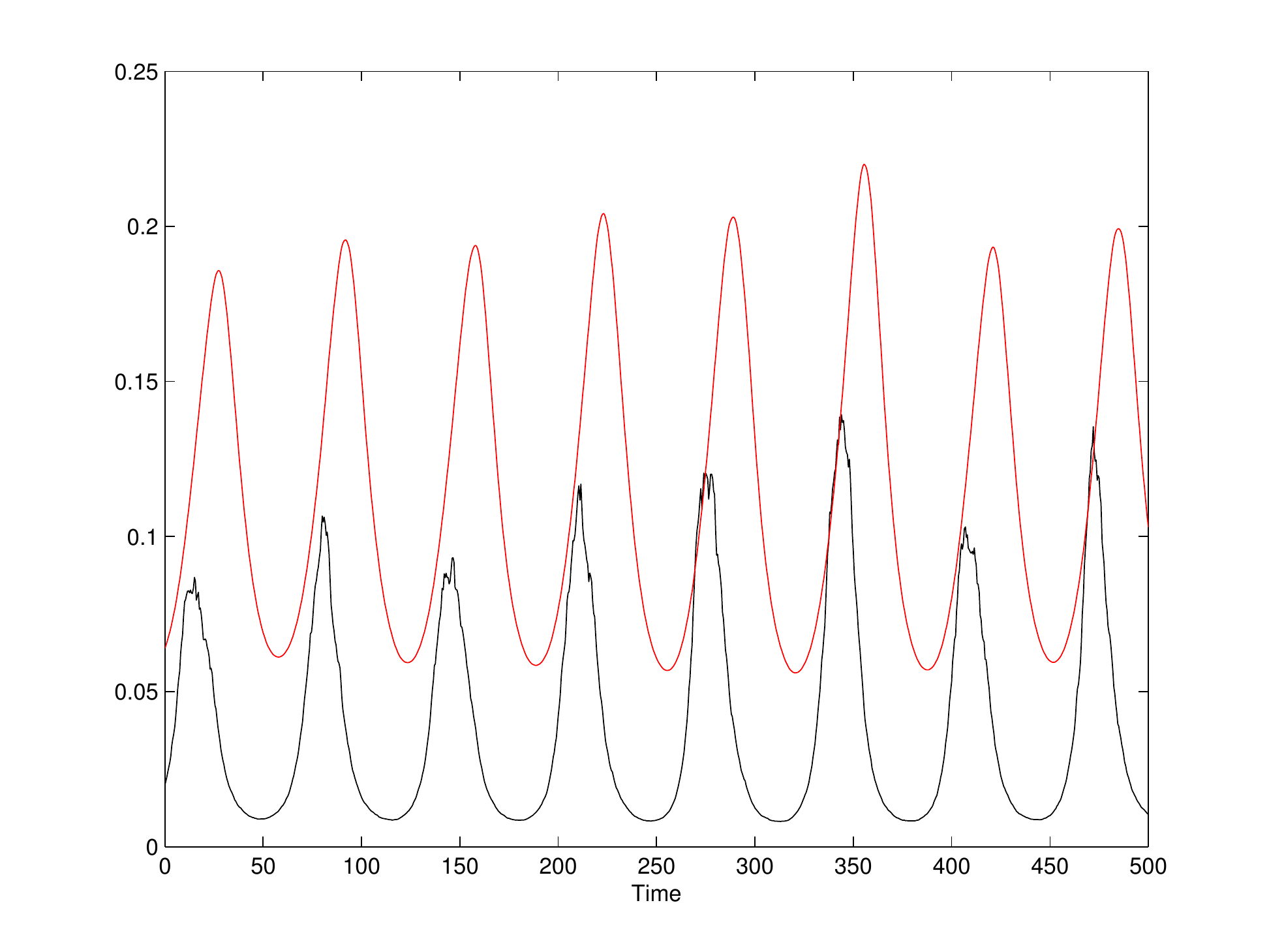}}
\end{center}
\caption{The left panel in the above is the time series corresponding to the phase plot in the left panel in fig. \ref{fig:2s}  (with $\alpha_1=0.0$). The right panel corresponds to a slightly greater noise intensity (with $\alpha_1=0.1$). For the list of parameters chosen please refer to Table \ref{table:Paramset}. }
\label{fig:3s}
\end{figure}

\begin{figure}[!htp]
\begin{center}
\includegraphics[scale=0.26]{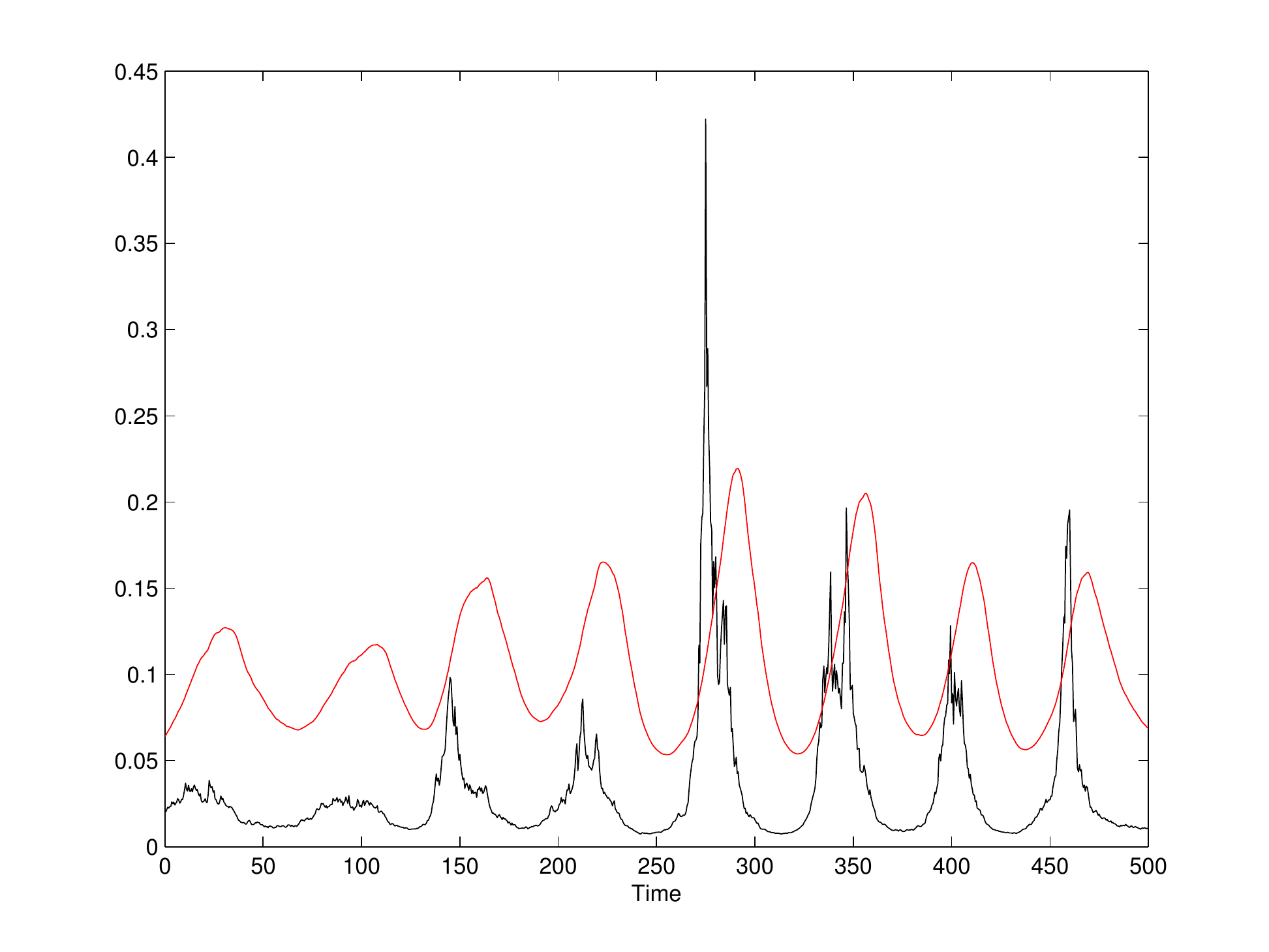}  %
\includegraphics[scale=0.26]{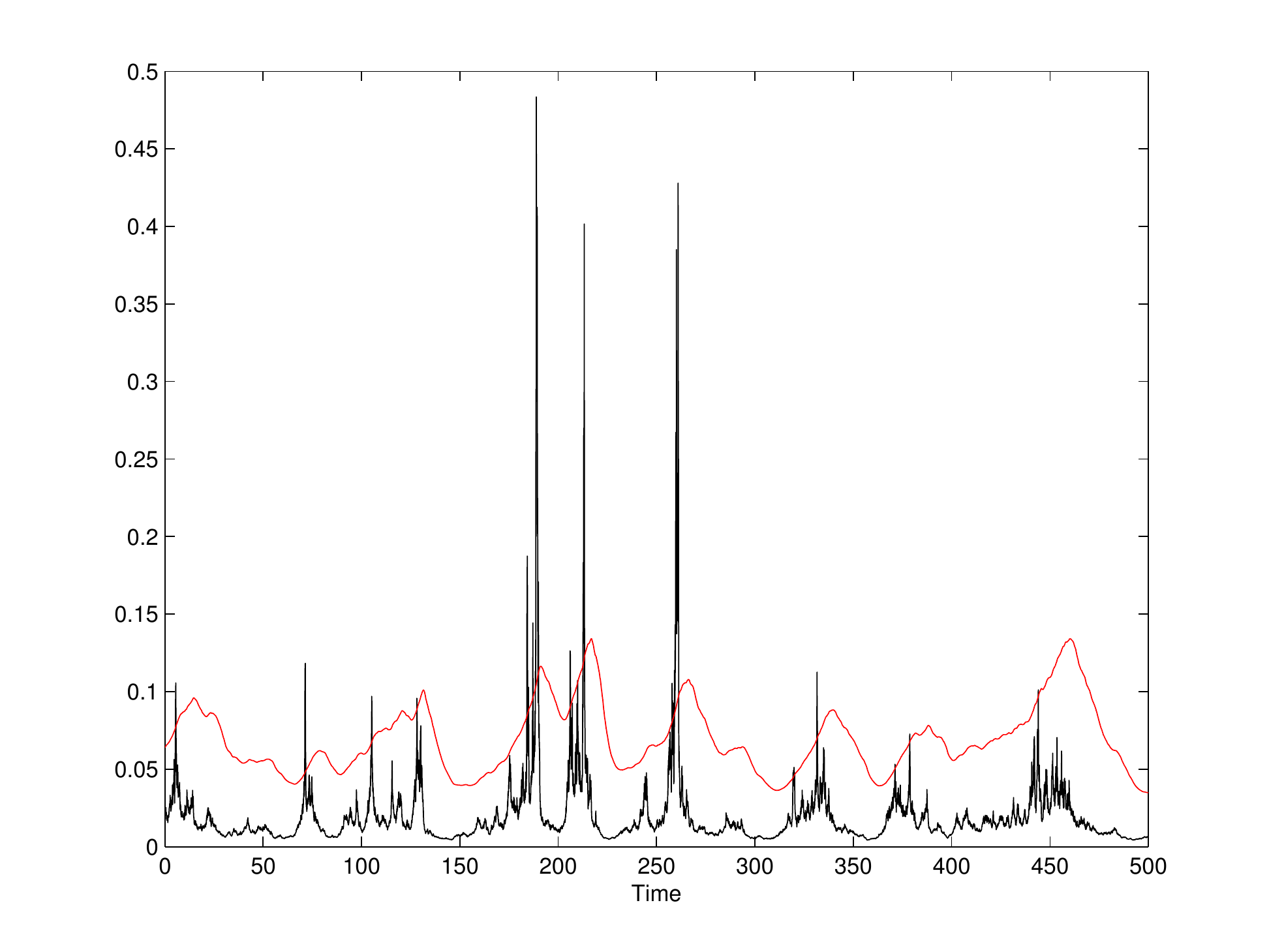} 
\end{center}
\caption{These figures show the effect of further increasing the noise intensity, on the limit cycle dynamics (where the left panel shows $\alpha_1=0.5$ and right panel shows  $\alpha_1=2.0$). For the list of parameters chosen please refer to Table  \ref{table:Paramset}. }
\label{fig:4s}
\end{figure}

\begin{remark}
Thus it seems that even with some stochasticity in the cannibalism parameter, prey cannibalism is unable to stabilise dynamics. It is well known that large enough noise intensity will drive everything to extinction, but this is an uninteresting case.
\end{remark}

This leads us to make the following conjecture

\begin{conjecture}
There does not exist a stochastic cannibalism rate $c$, such that the non trivial equilibrium point $\left( u_{nc},v_{nc}\right)$ 
of the nc-model is unstable, while the nontrivial equilibrium point $\left(u_{c},v_{c}\right)$ of the stochastic prey cannibalism model is stable. 
\end{conjecture}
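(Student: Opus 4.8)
The plan is to first make the statement precise, since as written it is subtly ill-posed, and then to reduce the stochastic question to the deterministic sign analysis already carried out. The crucial structural observation is that the diffusion coefficient $\alpha_{1}\frac{u^{2}}{u+d}$ does \emph{not} vanish at $(u_{c},v_{c})$, because $u_{c}>0$. Hence $(u_{c},v_{c})$ is not a genuine equilibrium of the SDE \eqref{eq:8s}: its drift vanishes there but its diffusion does not, so trajectories cannot be pinned to the point and must fluctuate about it. I would therefore interpret ``stability of $(u_{c},v_{c})$ for the stochastic model'' in the mean-square sense, namely as the mean-square boundedness of the fluctuations about $(u_{c},v_{c})$, equivalently the existence of a stationary law with finite second moments concentrated near $(u_{c},v_{c})$, and prove that this fails precisely when the nc-model is unstable.

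The first computational step is to linearise \eqref{eq:8s} about $(u_{c},v_{c})$. Writing $\xi=(u-u_{c},\,v-v_{c})^{T}$, the drift linearises to $J^{c}\xi$ with $J^{c}$ the Jacobian computed above, while the noise coefficient expands as $\frac{u^{2}}{u+d}=\frac{u_{c}^{2}}{u_{c}+d}+O(u-u_{c})$, so that to leading order the noise is \emph{additive}. This yields the Ornstein--Uhlenbeck system $d\xi=J^{c}\xi\,dt+B\,dW_{t}$ with the constant rank-one vector $B=\left(\alpha_{1}\frac{u_{c}^{2}}{u_{c}+d},\,0\right)^{T}$, whose covariance $P(t)=\mathbb{E}[\xi\xi^{T}]$ obeys the matrix Lyapunov equation $\dot{P}=J^{c}P+P(J^{c})^{T}+BB^{T}$.

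The key step is to feed in the sign information already established. Since $\det J^{c}>0$ always holds, instability of $(u_{c},v_{c})$ is equivalent to $T_{c}>0$, i.e. $J^{c}$ has an eigenvalue with positive real part and is not Hurwitz. By Lemma \ref{pcns}, in the feasibility regime $(m+d+c)<1+c_{1}$ the prey cannibalism equilibrium inherits the instability of the nc-model, so $T_{c}>0$ whenever $T_{nc}>0$. For a non-Hurwitz drift the Lyapunov equation admits no bounded positive solution: projecting onto the unstable eigenvector of $J^{c}$ shows that $P(t)$ grows without bound, so no stationary second moment exists and the additive forcing cannot confine the trajectory near $(u_{c},v_{c})$. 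Moreover, applying the generator of \eqref{eq:8s} to any quadratic form $V(\xi)=\xi^{T}Q\xi$ with $Q$ positive definite produces the It\^o correction $\tfrac{1}{2}\alpha_{1}^{2}\left(\frac{u^{2}}{u+d}\right)^{2}V_{uu}$, which near the equilibrium is a strictly positive constant; the noise thus contributes a \emph{positive} term to $\mathcal{L}V$, reinforcing the deterministic instability rather than cancelling it. This is the structural reason additive noise cannot stabilise: unlike parametric (multiplicative) noise, it does not alter the homogeneous linearised flow $e^{J^{c}t}$ that governs the decay rate.

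The main obstacle is to pass from this linearised, additive-noise picture to the full nonlinear, state-dependent and degenerate SDE. Two difficulties arise. First, the diffusion is hypoelliptic, acting only in the prey direction, so the standard nondegenerate Khasminskii theory does not apply directly; one must verify a H\"ormander-type condition, using the bracket of the prey noise vector field with the drift through the $v/u$ coupling, to guarantee a well-defined smooth stationary measure to argue against. Second, and more seriously, the sign guarantee $T_{c}>0$ inherited from Lemma \ref{pcns} is proved only under the first feasibility condition $(m+d+c)<1+c_{1}$; in the large-$c$ regime the deterministic non-stabilisation is itself only conjectural, so the stochastic statement cannot be stronger there. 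I therefore expect the conjecture to be provable in full only within the small-$c$ feasibility regime, with the large-$c$ case remaining open exactly as in the deterministic setting.
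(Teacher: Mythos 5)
First, be aware that the paper contains no proof of this statement: it is posed explicitly as a \emph{conjecture}, and the only evidence the authors offer is numerical (figs.~\ref{fig:2s}--\ref{fig:4s}, which show that a noisy cannibalism rate makes the limit cycle aperiodic but does not drive the system to a steady state). So there is no proof to match your attempt against; it must stand on its own, and as written it is a plausible program rather than a proof. Its genuine merits: the observation that the diffusion coefficient $\alpha_{1}u^{2}/(u+d)$ does not vanish at $(u_{c},v_{c})$, so that the point is not an equilibrium of \eqref{eq:8s} and stability must be reformulated in a mean-square or stationary-measure sense, is exactly the right first move; the reduction via Lemma \ref{pcns} to $T_{c}>0$ whenever $T_{nc}>0$ is correct in the regime $(m+d+c)<1+c_{1}$; and you are honest that the large-$c$ case is open even deterministically, so the stochastic conjecture cannot be fully settled by this route.

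The genuine gap is in the central step. The assertion that the covariance of the linearized Ornstein--Uhlenbeck system grows without bound when $J^{c}$ is not Hurwitz is true, but your conclusion ``no stationary second moment exists'' does not transfer to the nonlinear SDE: solutions of \eqref{eq:8s} are confined (Lemma \ref{bound} gives an absorbing set for the drift, and the paper's own simulations indicate a noisy cycle, i.e.\ presumably a nontrivial stationary law with finite moments). What the conjecture actually requires is the \emph{local} statement that no invariant measure concentrates mass near $(u_{c},v_{c})$, or that the top Lyapunov exponent of the linearization along the stochastic flow is positive, and the Lyapunov-equation computation for the additive approximation yields neither; trajectories repelled from the equilibrium can be recaptured by the nonlinearity, which is precisely the observed behaviour. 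Relatedly, discarding the state dependence of the noise at leading order is exactly where stochastic stabilization could hide: multiplicative noise can stabilize almost surely even while destabilizing every moment (the Arnold--Crauel--Wihstutz phenomenon), so an additive-noise argument cannot rule it out; the honest obstruction would be a trace criterion (the sum of Lyapunov exponents is tied to $\operatorname{tr}J^{c}=T_{c}>0$ for trace-preserving perturbations), but It\^{o} multiplicative noise contributes corrections to the exponents that your argument never confronts. Until the local instability of the genuine linearization $d\xi=J^{c}\xi\,dt+(B+C\xi)\,dW_{t}$, with degenerate noise acting only in the $u$-direction, is established --- say via an exit-time or Lyapunov-exponent analysis --- the conjecture remains exactly as open as the paper leaves it.
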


\section{The effect of prey cannibalism on Turing instability}

Turing instability occurs when the positive interior equilibrium point for
an ODE system is stable in the absence of diffusion, but becomes unstable
due to the addition of diffusion, that is in the PDE model. This phenomena is also referred to as
diffusion driven instability, \cite{Turing52}. 
To formulate the spatially explicit form of the earlier described ODE models, we assume that the predator and prey populations move actively in space. Random movement of animals occurs because of various requirements and necessities like, search for better food, better opportunity for social interactions such as finding mates \cite{okubo2001diffusion}. Food availability and living conditions demand that these animals migrate to other spatial locations. In the proposed model, we have included diffusion assuming that the animal movements are uniformly distributed in all directions. 
The models with diffusion are described as follows
\begin{equation}
 \label{eq:MBmodel_1}
\left. 
\begin{array}{c}
u_{t}=D_u u_{xx} + u( 1-u) -{\frac{uv}{u+\alpha v}}, \\
v_{t}=D_v v_{xx} + \delta v\left( \beta -\frac{v}{u}\right)
\end{array}
\right\}  \ no \ cannibalism \ model 
\end{equation}

and the diffusive model with prey cannibalism

\begin{equation}
 \label{eq:MBmodel_New}
\left. 
\begin{array}{c}
u_{t}=D_u u_{xx} + (1+c_{1})u( 1-u) -{\frac{uv}{u+\alpha v}} - \frac{cu^2}{u+d}, \\
v_{t}=D_v v_{xx} + \delta v\left( \beta -\frac{v}{u}\right)
\end{array}
\right\} \ prey \ cannibalism \ model 
\end{equation}

The spatial domain for the above is $\Omega \subset \mathbb{R}^{1}$. $\Omega$
is assumed bounded, and we prescribe Neumann boundary conditions $u_{x} =
v_{x} = 0$, for $x \in \partial \Omega$, and suitable initial conditions. 
The main reason for choosing such boundary conditions is that we are interested in the self-organization of patterns, 
and zero-flux boundary conditions imply that no external input is imposed \cite{okubo2001diffusion, M93}. In the numerical simulations we take $\Omega = [0,\pi]$.
 Here $u(x,t)$ is the density of the prey species at any given time $t$, that is logistically controlled, and predated
on by a predator species with density $v(x,t)$. 

\begin{remark}
For various models, that fall into the above general category, it has been
shown that Turing instability exists. However, this is only for certain
parameters that lie in the so called Turing space. For parameters outside of
that space Turing patterns are an impossibility. 
\end{remark}

\subsection{General criterion for no Turing instability in nc-model}
Our goal in the current
section is to investigate the conditions under which there is no Turing instability in \eqref{eq:MBmodel_1}, but there are Turing patterns in \eqref{eq:MBmodel_New}. Thus we aim to quantify the conditions under which cannibalism \emph{in the prey species},
can induce Turing patterns.

We begin by linearizing model system \eqref{eq:MBmodel_1} about its homogenous
steady state  $(u_{nc},v_{nc})$, by introducing both space and time-dependent fluctuations. In order to derive the precise conditions for Turing instability there is a well
known procedure. We refer the reader to \cite{Gilligan1998}. 
The calculations therein can be stated via the following theorem,

\begin{theorem}[Turing Instability Condition]
\label{Thm312} Consider $(u_{nc},v_{nc})$ which is a spatially homogenous
equilibrium point of the model system \eqref{eq:MBmodel_1}. For a given set of
parameters, if at $(u_{nc},v_{nc})$ , the Jacobian  $\mathbf{J} = 
\begin{bmatrix}
J_{11} & J_{12} \\ 
J_{21} & J_{22}%
\end{bmatrix}%
$, of the reaction terms, and the diffusion coefficients $D_u, D_v$ satisfy 

\begin{align*}
& J_{11} +J_{22}<0 \\
& J_{11}J_{22} - J_{21}J_{12}>0 , \\
& D_v J_{11} + D_uJ_{22}>0. \\
& {(D_v J_{11} + D_uJ_{22})^2} -4D_uD_v( J_{11}J_{22} - J_{21}J_{12})>0, \\
\end{align*}
then $(u_{nc},v_{nc})$ is said to be linearly stable in the {\emph{a}bsence of
diffusion} and linearly unstable in the {\emph{p}resence of diffusion}.
\end{theorem}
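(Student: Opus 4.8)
The plan is to linearise the diffusive system \eqref{eq:MBmodel_1} about the homogeneous steady state and track the growth rates of spatial Fourier modes. First I would set $u = u_{nc} + \tilde u$, $v = v_{nc} + \tilde v$ with $\tilde u, \tilde v$ small, and discard quadratic terms, obtaining the linear system
\begin{align*}
\tilde u_t &= D_u \tilde u_{xx} + J_{11}\tilde u + J_{12}\tilde v, \\
\tilde v_t &= D_v \tilde v_{xx} + J_{21}\tilde u + J_{22}\tilde v,
\end{align*}
where the $J_{ij}$ are the entries of $\mathbf{J}$ evaluated at $(u_{nc},v_{nc})$. Since $\Omega=[0,\pi]$ carries Neumann boundary conditions, the Laplacian has the complete orthogonal eigenbasis $\{\cos(kx)\}$ with $k\in\{0,1,2,\dots\}$, so I would expand $(\tilde u,\tilde v)=\sum_k (a_k,b_k)\,e^{\lambda_k t}\cos(kx)$. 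Substituting a single mode reduces the problem to the algebraic eigenvalue problem for the matrix $\mathbf{J}_k = \mathbf{J} - k^2\,\mathrm{diag}(D_u,D_v)$.

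For the no-diffusion case I take $k=0$, so $\mathbf{J}_0=\mathbf{J}$. The two inequalities $J_{11}+J_{22}<0$ and $J_{11}J_{22}-J_{21}J_{12}>0$ are exactly the Routh--Hurwitz criterion for a $2\times2$ matrix, guaranteeing that both eigenvalues of $\mathbf{J}$ have negative real part; hence $(u_{nc},v_{nc})$ is linearly stable in the absence of diffusion. For general $k$ the growth rates solve $\lambda^2 - \mathrm{tr}(\mathbf{J}_k)\,\lambda + \det(\mathbf{J}_k)=0$, with $\mathrm{tr}(\mathbf{J}_k) = (J_{11}+J_{22}) - (D_u+D_v)k^2$ and $\det(\mathbf{J}_k) = (J_{11}-D_u k^2)(J_{22}-D_v k^2) - J_{12}J_{21}$.

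Next I would argue that instability can only be driven by the determinant. Because $J_{11}+J_{22}<0$ and $k^2\ge 0$, the trace $\mathrm{tr}(\mathbf{J}_k)$ is negative for every mode, so the two eigenvalues of $\mathbf{J}_k$ can never both have positive real part; the only route to a growing mode is $\det(\mathbf{J}_k)<0$, in which case the two roots are real of opposite sign and one is strictly positive. Writing $q=k^2$ and
\begin{equation*}
h(q) = D_u D_v\, q^2 - (D_v J_{11} + D_u J_{22})\, q + (J_{11}J_{22} - J_{12}J_{21}),
\end{equation*}
this is an upward-opening parabola in $q$. Its vertex sits at $q_\ast = (D_v J_{11}+D_u J_{22})/(2 D_u D_v)$, which is strictly positive precisely because of the third condition $D_v J_{11}+D_u J_{22}>0$, and its minimum value is $h(q_\ast) = (J_{11}J_{22}-J_{12}J_{21}) - (D_v J_{11}+D_u J_{22})^2/(4 D_u D_v)$. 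The fourth condition is exactly the statement $h(q_\ast)<0$, so $h$ dips below zero on an interval of positive wavenumbers, forcing $\det(\mathbf{J}_k)<0$ and a real positive eigenvalue there. This is the diffusion-driven instability, and assembling the four inequalities via this parabola analysis is the content of the theorem.

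The main obstacle is the discreteness of the admissible wavenumbers. On the bounded domain $[0,\pi]$ only $k\in\{0,1,2,\dots\}$ occur, whereas the computation produces a continuous instability band $(q_-,q_+)$. Strictly, one must verify that this band contains at least one value $k^2$ with $k$ a nonnegative integer before concluding that a genuine growing mode exists on $\Omega$; the four inequalities as stated are the standard continuum Turing conditions and are necessary, but nonemptiness of the intersection of $(q_-,q_+)$ with the admissible spectrum is what certifies instability on the finite domain. I would note that the band can be widened by tuning the ratio $D_u/D_v$, so this is generically satisfied, and for the explicit parameter sets used in the subsequent sections one checks it directly.
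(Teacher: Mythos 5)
Your proposal is correct and is precisely the standard linearisation-plus-Fourier-mode argument that the paper itself does not write out: the authors state Theorem \ref{Thm312} by deferring to the well-known procedure in the cited literature, so your write-up simply supplies the calculation they invoke by reference. Your mode decomposition, the Routh--Hurwitz step at $k=0$, the observation that the trace of $\mathbf{J}_k$ stays negative so instability must enter through $\det(\mathbf{J}_k)<0$, and the parabola analysis in $q=k^2$ (vertex positivity from the third condition, negative minimum from the fourth) are all exactly right. Your closing caveat is also well taken and goes slightly beyond the paper: on the bounded domain $[0,\pi]$ with Neumann conditions only integer wavenumbers are admissible, so the four inequalities are strictly the continuum Turing conditions, and a genuinely growing mode exists only if the instability band $(q_-,q_+)$ contains some admissible $k^2$ --- a point the theorem as stated (and the paper's numerics, which implicitly verify it for the chosen parameters) glosses over.
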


And easy corollary derived from the above imposes an additional necessary
condition for Turing instability

\begin{corollary}[Necessary condition for Turing Instability]
\label{cor312} If $(u_{nc},v_{nc})$ is a spatially homogenous equilibrium point of
the model system \eqref{eq:MBmodel_1} which is linearly stable in the {\emph{a}%
bsence of diffusion}, and linearly unstable in the {\emph{p}resence of
diffusion}, then either $J_{11} < 0, J_{22} > 0$ or $J_{11} > 0, J_{22} < 0$%
, where $\mathbf{J} = 
\begin{bmatrix}
J_{11} & J_{12} \\ 
J_{21} & J_{22}%
\end{bmatrix}%
$, is the Jacobian matrix of the reaction terms.
\end{corollary}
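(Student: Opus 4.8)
The plan is to derive the claimed sign pattern purely from the inequalities already established in Theorem \ref{Thm312}, using only the positivity of the diffusion coefficients $D_u, D_v > 0$. The two conditions that carry the argument are the trace condition $J_{11} + J_{22} < 0$ (linear stability in the absence of diffusion) and the condition $D_v J_{11} + D_u J_{22} > 0$ (which permits the instability in the presence of diffusion). The other two conditions of the theorem play no role in the corollary.

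First I would observe that the trace condition $J_{11} + J_{22} < 0$ immediately excludes the possibility that both diagonal entries are nonnegative, since their sum would then fail to be strictly negative. Next I would use $D_v J_{11} + D_u J_{22} > 0$: because $D_u$ and $D_v$ are strictly positive, the left-hand side is a positive linear combination of $J_{11}$ and $J_{22}$, and so it cannot be positive if both entries are nonpositive. This excludes the case that both diagonal entries are nonpositive. Combining the two exclusions forces $J_{11}$ and $J_{22}$ to carry opposite signs.

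To sharpen this into the strict dichotomy of the corollary I would rule out the degenerate cases where one entry vanishes. If $J_{11} = 0$, the trace condition forces $J_{22} < 0$, but then $D_v J_{11} + D_u J_{22} = D_u J_{22} < 0$, contradicting the third condition; an identical argument rules out $J_{22} = 0$. Hence exactly one of $J_{11}, J_{22}$ is strictly positive and the other strictly negative, which is the asserted conclusion.

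There is no genuine obstacle here, as the statement reduces to a sign analysis of two of the four Turing inequalities. The only points requiring care are the explicit appeal to $D_u, D_v > 0$ --- without which the combination in the third condition would not transmit sign information --- and the separate treatment of the boundary cases $J_{11} = 0$ and $J_{22} = 0$ needed to promote ``opposite weak signs'' into the strict inequalities claimed in the statement.
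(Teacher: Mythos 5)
Your argument is correct and is exactly the sign analysis the paper intends: it leaves the corollary unproved (``an easy corollary derived from the above''), and the intended reasoning is precisely that $J_{11}+J_{22}<0$ excludes both diagonal entries being nonnegative while $D_vJ_{11}+D_uJ_{22}>0$ with $D_u,D_v>0$ excludes both being nonpositive, forcing strictly opposite signs. Your separate treatment of the boundary cases $J_{11}=0$ or $J_{22}=0$ is careful but already subsumed by the two exclusions, and the only tacit point worth noting is that applying the theorem's third inequality here uses its (standard, dispersion-relation) converse, which the paper also takes for granted.
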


We begin by focusing on \eqref{eq:MBmodel_1}, investigated in
detail by Banerjee \cite{banerjee2012turing}. 
The results therein reveal that turing instability in model %
\eqref{eq:MBmodel_1} occurs, if $\alpha < 1$, for a range of parameters in the so
called Turing space.

Note that the jacobin matrix of the reaction terms of \eqref{eq:MBmodel_1} at the point $\left(u_{nc
},v_{nc}\right)$ is given by

\begin{center}
$J=$ $\allowbreak \allowbreak \left( 
\begin{array}{ccc}
\dfrac{\beta }{\alpha \beta +1}+\dfrac{\beta }{\left( \alpha \beta +1\right)
^{2}}-1 &  & -\dfrac{1}{\left( \alpha \beta +1\right) ^{2}} \\ 
&  &  \\ 
\beta ^{2}\delta &  & -\beta \delta%
\end{array}%
\right) $
\end{center}

Since the parameters $\beta > 0, \delta >0$, we always have that $J_{22} < 0$. Thus an application of corollary \ref{cor312} yields the following lemma:
\begin{lemma}
\label{lemnt}  The equilibrium point $\left( u_{nc},v_{nc}\right) $
cannot be driven unstable via diffusion if $J_{11} < 0$, or equivalently if $%
\beta \left( \alpha \beta +2\right) <\left( \alpha \beta +1\right) ^{2}$.
\end{lemma}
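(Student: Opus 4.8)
The plan is to deduce the lemma directly from Corollary \ref{cor312}, which supplies the necessary condition for Turing instability, combined with the sign of $J_{22}$ at the equilibrium. First I would observe that at $(u_{nc},v_{nc})$ the lower-right entry of the reaction Jacobian is $J_{22} = -\beta\delta$, and since the feasibility of the model forces $\beta > 0$ while we are given $\delta > 0$, we always have $J_{22} < 0$. This already rules out one of the two alternatives in Corollary \ref{cor312}: the case $J_{11} < 0,\ J_{22} > 0$ can never occur for this model.

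Next I would argue by contrapositive. Corollary \ref{cor312} asserts that if $(u_{nc},v_{nc})$ is linearly stable in the absence of diffusion but linearly unstable in its presence, then either ($J_{11} < 0,\ J_{22} > 0$) or ($J_{11} > 0,\ J_{22} < 0$) must hold. Having eliminated the first alternative via the sign of $J_{22}$, a diffusion-driven instability would force $J_{11} > 0$. Consequently, if instead $J_{11} < 0$, then neither alternative can hold, the necessary condition fails, and $(u_{nc},v_{nc})$ cannot be destabilized by diffusion. This is the assertion of the lemma.

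Finally I would establish the claimed equivalence between the sign condition $J_{11} < 0$ and the explicit inequality $\beta(\alpha\beta+2) < (\alpha\beta+1)^2$. Writing $J_{11} = \frac{\beta}{\alpha\beta+1} + \frac{\beta}{(\alpha\beta+1)^2} - 1$ over the common denominator $(\alpha\beta+1)^2$ yields a fraction whose numerator simplifies to $\beta(\alpha\beta+2) - (\alpha\beta+1)^2$. Since the denominator $(\alpha\beta+1)^2$ is strictly positive, the sign of $J_{11}$ coincides with the sign of this numerator, so $J_{11} < 0$ is equivalent to $\beta(\alpha\beta+2) < (\alpha\beta+1)^2$.

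I expect no substantive obstacle: the entire content is a clean application of the corollary once the sign of $J_{22}$ is pinned down, and the only computation is the routine recombination of the $J_{11}$ entry over a common denominator. The one point warranting care is making explicit that $\beta > 0$ (from feasibility) and $\delta > 0$ together guarantee $J_{22} < 0$ \emph{unconditionally}, since this is precisely what collapses the two cases of Corollary \ref{cor312} into one and renders the hypothesis $J_{11} < 0$ sufficient to preclude Turing instability.
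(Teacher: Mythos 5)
Your proposal is correct and follows exactly the paper's own route: the paper states Lemma \ref{lemnt} as an immediate application of Corollary \ref{cor312} after observing that $J_{22} = -\beta\delta < 0$ always holds, which is precisely your argument, and your common-denominator computation verifying $J_{11} < 0 \iff \beta(\alpha\beta+2) < (\alpha\beta+1)^2$ is the same routine algebra the paper leaves implicit. Your write-up is in fact slightly more careful than the paper's, since you spell out the contrapositive and the unconditional sign of $J_{22}$ explicitly.
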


It is a matter of simple algebra to convert the above. That is:

\begin{equation}  \label{eq:c12}
\beta \left( \alpha \beta +2\right) <\left( \alpha \beta +1\right) ^{2} \iff
\beta \leq \frac{1}{(\sqrt{1-\alpha})(1+ \sqrt{1-\alpha})}
\end{equation}

\begin{remark}
Let us investigate this ratio further. Note, $\beta$ is the
intrinsic growth rate of the predator, and $\alpha= \frac{\mbox{prey \ handling \ time}} {\mbox{ prey \ capture \ rate }} $. (These are
nondimensionalised parameters, and due to feasibility constraints $\alpha < 1
$). Roughly speaking, \eqref{eq:c12} is saying that if

\begin{equation*}
{\mbox{predator \ growth \ rate}  \leq \frac{1}{\sqrt{1- \left(\frac{\mbox{prey \ handling \ time}} {\mbox{ prey \ capture \ rate\ }}\right)}}}
\end{equation*}

then Turing instability is an impossibility in the nc model \eqref{eq:MBmodel_1}. Now, given a predator
growth rate $\beta$, If ``prey handling time" is close enough to the ``prey capture rate",  then 
\eqref{eq:c12} will hold, and Turing instability cannot occur in 
\eqref{eq:MBmodel_1}. 
\end{remark}

We now pause and ask the following question: Can
cannibalism in the prey species alter this scenario? Specifically, can prey
cannibalism, bring about Turing instability in \eqref{eq:MBmodel_New}, even if \eqref{eq:c12} holds?
To answer this question, we investigate the Turing instability conditions for \eqref{eq:MBmodel_New}. The calculations therein yield that for the spatially homogenous equilibrium
solution $\left(u_{c},v_{c}\right)$ to \eqref{eq:MBmodel_New} the
Jacobian matrix at $\left(u_{c},v_{c}\right) $ is given by:

$J=\allowbreak \left( 
\begin{array}{ccc}
\left( \dfrac{\beta }{\left( 1+\alpha \beta \right) ^{2}}-u_{c}-\dfrac{%
cdu_{c}}{\left( u_{c}+d\right) ^{2}}\right) &  & -\dfrac{1}{\left(
1+\alpha \beta \right) ^{2}} \\ 
&  &  \\ 
\beta ^{2}\delta &  & -\beta \delta%
\end{array}%
\right) $

For cannibalism in the prey to alter Turing dynamics via lemma \ref{lemnt},
at the very least we require  that $J_{11} > 0$. Clearly there is no Turing,
if $J_{11} < 0$. Now, if $J_{11} > 0$ it does not guarantee there is Turing, as this condition is only necessary,
but \emph{there is a chance for Turing instability to occur}. Thus In order for there to be a possibility of Turing instability in \eqref{eq:MBmodel_1}, while there is no 
Turing instability in \eqref{eq:MBmodel_New}, we would require $J^{c}_{11}>0$, that is 

\begin{equation}
\left( \dfrac{\beta }{\left( 1+\alpha \beta \right) ^{2}}-u_{c}
\right)  \dfrac{\left( u_{c}+d\right) ^{2}}{u_{c}}>cd, \ \mbox{whilst} \  \beta < \frac{1}{(\sqrt{1-\alpha})(1+ \sqrt{1-\alpha})}.
\end{equation}

We can also take a geometric approach similar to Fasani \cite%
{fasani2012remarks} to proceed. We investigate the nullclines of %
\eqref{eq:MBmodel_1}, shown in fig. \ref{fig:nc}.

\begin{figure}[!htp]
\begin{center}
\includegraphics[scale=0.37]{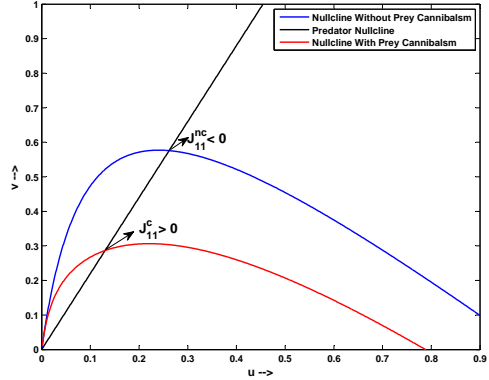} 
\end{center}
\caption{ Here we see the nullclines of the model \eqref{eq:MBmodel_1} vs
the nullclines of model \eqref{eq:MBmodel_New}. We see prey cannibalism
drives down the prey nullcline.}
\label{fig:nc}
\end{figure}

\begin{remark}
Since the situation in fig. \ref{fig:nc} is clearly achievable geometrically, there is a
chance that prey cannibalism could bring about Turing instability, even if
$ \beta < \frac{1}{(\sqrt{1-\alpha})(1+ \sqrt{1-\alpha})}$. We then perform a parameter sweep
based on the Turing conditions that we derive, in order to see if the
appropriate parameters can be found.
\end{remark}
\subsection{Turing instability due to prey cannibalism}
For details on the Turing calculations we refer the reader to \cite{Gilligan1998}. The calculations therein yield various necessary and
sufficient conditions for Turing instability to occur, if prey cannibalism
is present. We state our result via the following theorem.
\begin{theorem}[ Prey cannibalism induced Turing Instability]
\label{Thm3123} Consider $(u_{c},v_{c})$ which is a spatially homogenous
equilibrium point of the model system \eqref{eq:MBmodel_New}. If there
exist parameters $(\alpha,\beta,c,d,c_{1})$ all strictly positive, s.t at $%
(u_{c},v_{c})$, the Jacobian  $\mathbf{J} = 
\begin{bmatrix}
J_{11} & J_{12} \\ 
J_{21} & J_{22}%
\end{bmatrix}%
$, and the diffusion coefficients $D_u, D_v$ satisfy 
\begin{align*}
&\left( \dfrac{\beta }{\left( 1+\alpha \beta \right) ^{2}}-u_{c}\right) 
\dfrac{\left( u_{c} +d\right) ^{2}}{du_{c}}>{c}%
>\left( \dfrac{\beta }{\left( 1+\alpha \beta \right) ^{2}}-u_{c}- \delta
\beta \right) \dfrac{\left( u_{c}+d\right) ^{2}}{du_{c}} \\
& \left( D_{v}\left( \frac{\beta }{\left( 1+\alpha \beta \right) ^{2}}%
-u_{c}-\frac{cdu_{c}}{\left( u_{c}+d\right) ^{2}}\right)
-D_{u}\beta \delta \right) >0 \\
& \left( D_{v}\left( \frac{\beta }{\left( 1+\alpha \beta \right) ^{2}}%
-u_{c}-\frac{cdu_{c}}{\left( u_{c}+d\right) ^{2}}\right)
-D_{u}\beta \delta \right) >2\sqrt{D_{u}D_{v}\left( \tfrac{\left( \beta
\delta d^{2}u_{c}+2\beta \delta d(u_{c})^{2}+c\beta \delta du_{c}+\beta \delta (u_{c})^{3}\right) \allowbreak }{d^{2}+2du_{c}+(u_{c})^{2}}\right) }, \\
&\beta \leq \frac{1}{(\sqrt{1-\alpha})(1+ \sqrt{1-\alpha})} , \\
\end{align*}
then $(u_{c},v_{c})$ is linearly stable in the {\emph{a}bsence of diffusion} and
linearly unstable in the {\emph{p}resence of diffusion}, whilst $%
(u_{nc},v_{nc})$ which is a spatially homogenous equilibrium point of the
model system \eqref{eq:MBmodel_1} is linearly stable in the {\emph{p}%
resence of diffusion}, for the same parameters $(\alpha,\beta,d)$, and $c=c_{1}=0$.
 That is, in the absence of prey cannibalism.
\end{theorem}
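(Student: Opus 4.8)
The plan is to recognise the four displayed hypotheses as nothing more than the four conditions of Theorem~\ref{Thm312}, applied to the prey cannibalism system \eqref{eq:MBmodel_New}, bundled together with the no-Turing criterion of Lemma~\ref{lemnt} for the nc-model \eqref{eq:MBmodel_1}. First I would record the entries of the Jacobian $J^c$ at $(u_c,v_c)$,
\[
J_{11}=\frac{\beta}{(1+\alpha\beta)^2}-u_c-\frac{cdu_c}{(u_c+d)^2},\quad J_{12}=-\frac{1}{(1+\alpha\beta)^2},\quad J_{21}=\beta^2\delta,\quad J_{22}=-\beta\delta,
\]
so that the whole argument reduces to a line-by-line translation of inequalities.

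The first hypothesis is a double inequality; clearing the positive factor $\tfrac{(u_c+d)^2}{du_c}$ splits it into two pieces. Its left half is exactly $J_{11}>0$, while its right half rearranges to $J_{11}<\beta\delta=-J_{22}$, i.e.\ the trace condition $J_{11}+J_{22}<0$. The determinant condition needs no hypothesis at all: the ODE computation already gives $\det J^c=\frac{\beta\delta u_c[d^2+2du_c+cd+u_c^2]}{d^2+2du_c+u_c^2}>0$ identically. The second hypothesis is literally $D_vJ_{11}+D_uJ_{22}>0$. For the third, I would note that the radicand equals $D_uD_v\det J^c$, and that its left side is positive by the second hypothesis; squaring therefore yields the equivalent statement $(D_vJ_{11}+D_uJ_{22})^2>4D_uD_v(J_{11}J_{22}-J_{12}J_{21})$. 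These four statements are precisely the hypotheses of Theorem~\ref{Thm312}, so $(u_c,v_c)$ is linearly stable without diffusion and linearly unstable with diffusion.

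For the nc-model claim, I would use the equivalence \eqref{eq:c12} to rewrite the fourth hypothesis $\beta\le\frac{1}{(\sqrt{1-\alpha})(1+\sqrt{1-\alpha})}$ as $J_{11}^{nc}<0$. Since $J_{22}^{nc}=-\beta\delta<0$ always holds, the necessary sign condition of Corollary~\ref{cor312} fails, and Lemma~\ref{lemnt} then guarantees that $(u_{nc},v_{nc})$ cannot be driven unstable by diffusion; combined with the always-positive nc-determinant and the negative nc-trace this is exactly linear stability in the presence of diffusion. Finally I would observe that setting $c=c_1=0$ collapses \eqref{eq:MBmodel_New} onto \eqref{eq:MBmodel_1} and $(u_c,v_c)$ onto $(u_{nc},v_{nc})$, so the fourth hypothesis --- which mentions neither $c$ nor $c_1$ --- transfers verbatim to the nc-model.

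The argument is conceptually routine, and the only real obstacle is careful bookkeeping. Every step is a reversible algebraic rearrangement, and the one place demanding genuine care is checking that the radicand $\frac{\beta\delta d^2u_c+2\beta\delta du_c^2+c\beta\delta du_c+\beta\delta u_c^3}{d^2+2du_c+u_c^2}$ in the third hypothesis coincides exactly with $\det J^c$, so that squaring reproduces the true discriminant condition of Theorem~\ref{Thm312} rather than a merely sufficient one. I would also flag explicitly that the theorem asserts only the \emph{existence} of positive $(\alpha,\beta,c,d,c_1)$ meeting all four inequalities at once; the non-emptiness of this parameter set, made plausible geometrically by the nullcline picture in fig.~\ref{fig:nc}, is what the ensuing parameter sweep is designed to verify.
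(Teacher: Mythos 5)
Your proposal is correct and is essentially the paper's own argument: the paper gives no explicit proof, deferring to the standard Turing-instability calculations, and its implicit reasoning is exactly your line-by-line translation of the four displayed hypotheses into the conditions of Theorem~\ref{Thm312} for $(u_c,v_c)$ (with $\det J^c>0$ automatic, and the radicand correctly identified as $\det J^c$), combined with \eqref{eq:c12} and Lemma~\ref{lemnt} to rule out diffusion-driven instability at $(u_{nc},v_{nc})$ when $c=c_1=0$. Your closing remark that the theorem only asserts existence of such parameters, verified by the parameter sweep, likewise matches the paper's treatment.
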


\begin{remark}
Note for system \eqref{eq:MBmodel_New} and for \eqref{eq:MBmodel_1} we have
that $J_{11}J_{22} - J_{21}J_{12}>0 $.
\end{remark}

A parameter sweep using the above conditions yields that such a parametric space is definitely possible. For chosen parameters in this space, we see Turing patterns as in fig. \ref{fig:2}.

\begin{figure}[!htp]
\begin{center}
\includegraphics[scale=0.26]{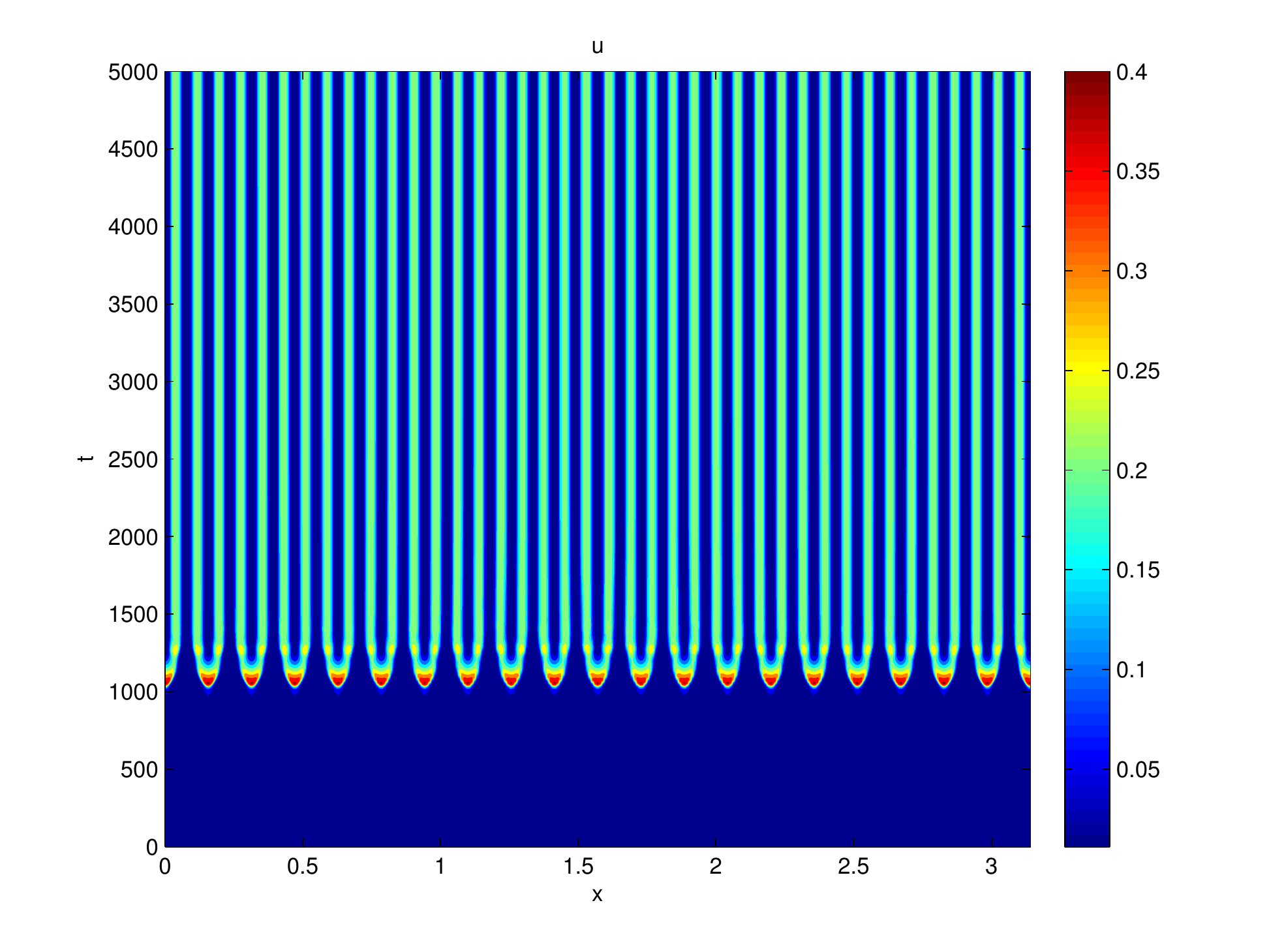}  %
\includegraphics[scale=0.26]{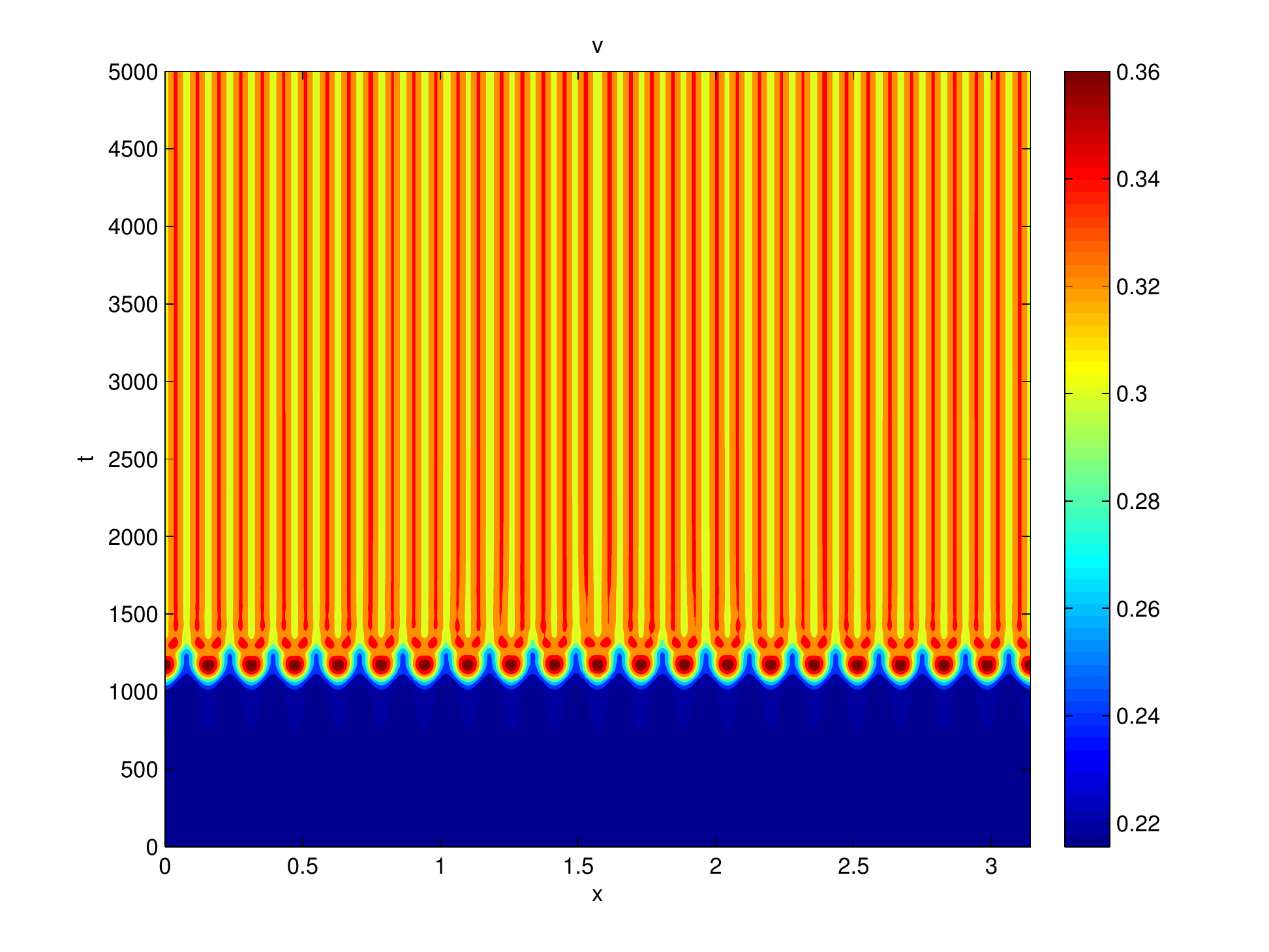} 
\end{center}
\caption{Here the Turing patterns in predator and prey, are shown for model 
\eqref{eq:MBmodel_New}, that is when there is prey cannibalism. Spatial
patterns are formed. In these simulations $\Omega =[0,\pi]$, and the spatially homogenous steady state is given a small perturbation by a function of the form $\epsilon_{1} cos^2(10x)$, $\epsilon_{1} <<1$. For the parameters used please see Table \protect\ref%
{table:Paramset}.}
\label{fig:2}
\end{figure}

\begin{figure}[!htb]
\begin{center}
\subfigure[]{
\includegraphics[scale=0.26]{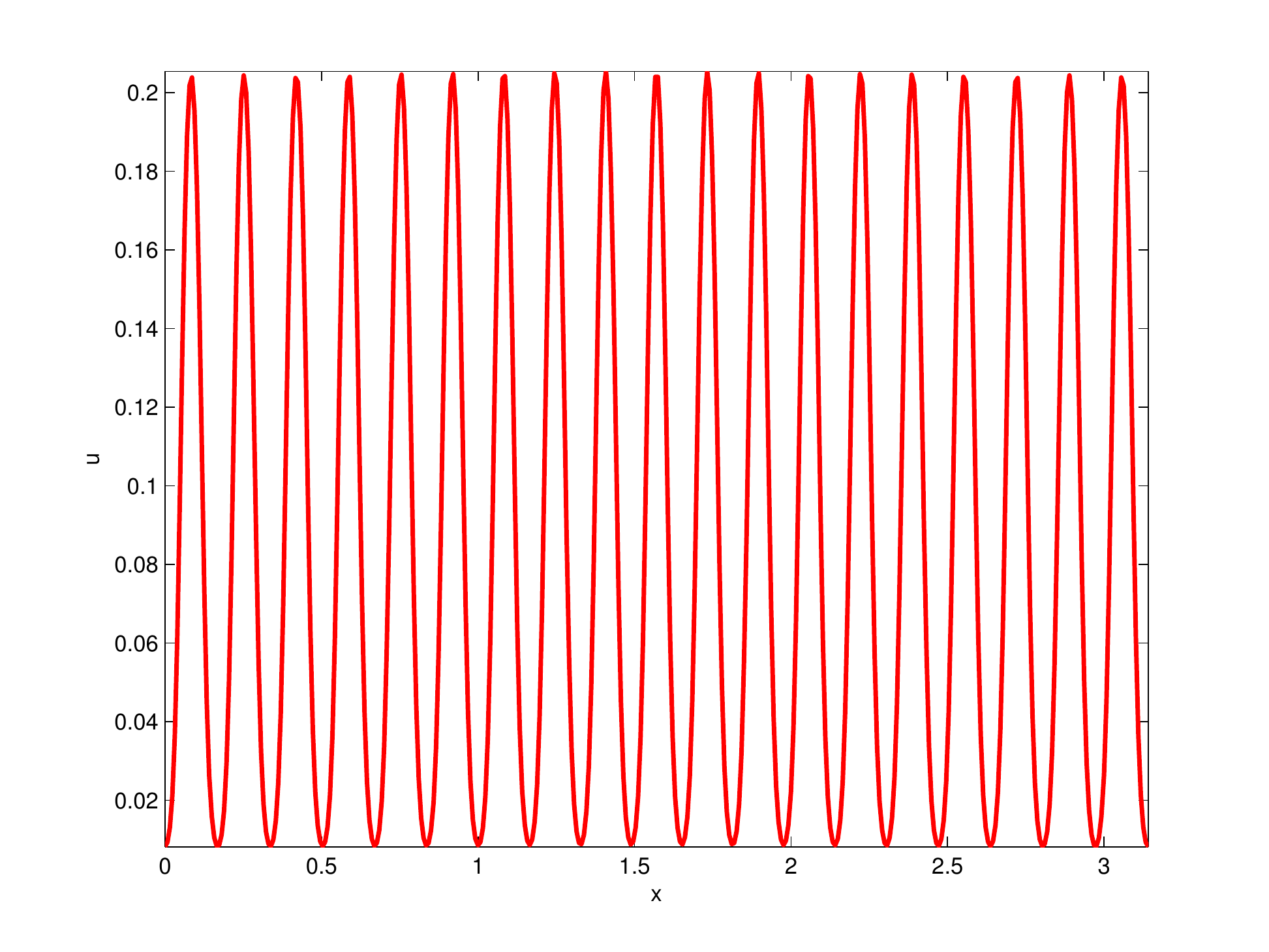}} 
\subfigure[]{
\includegraphics[scale=0.26]{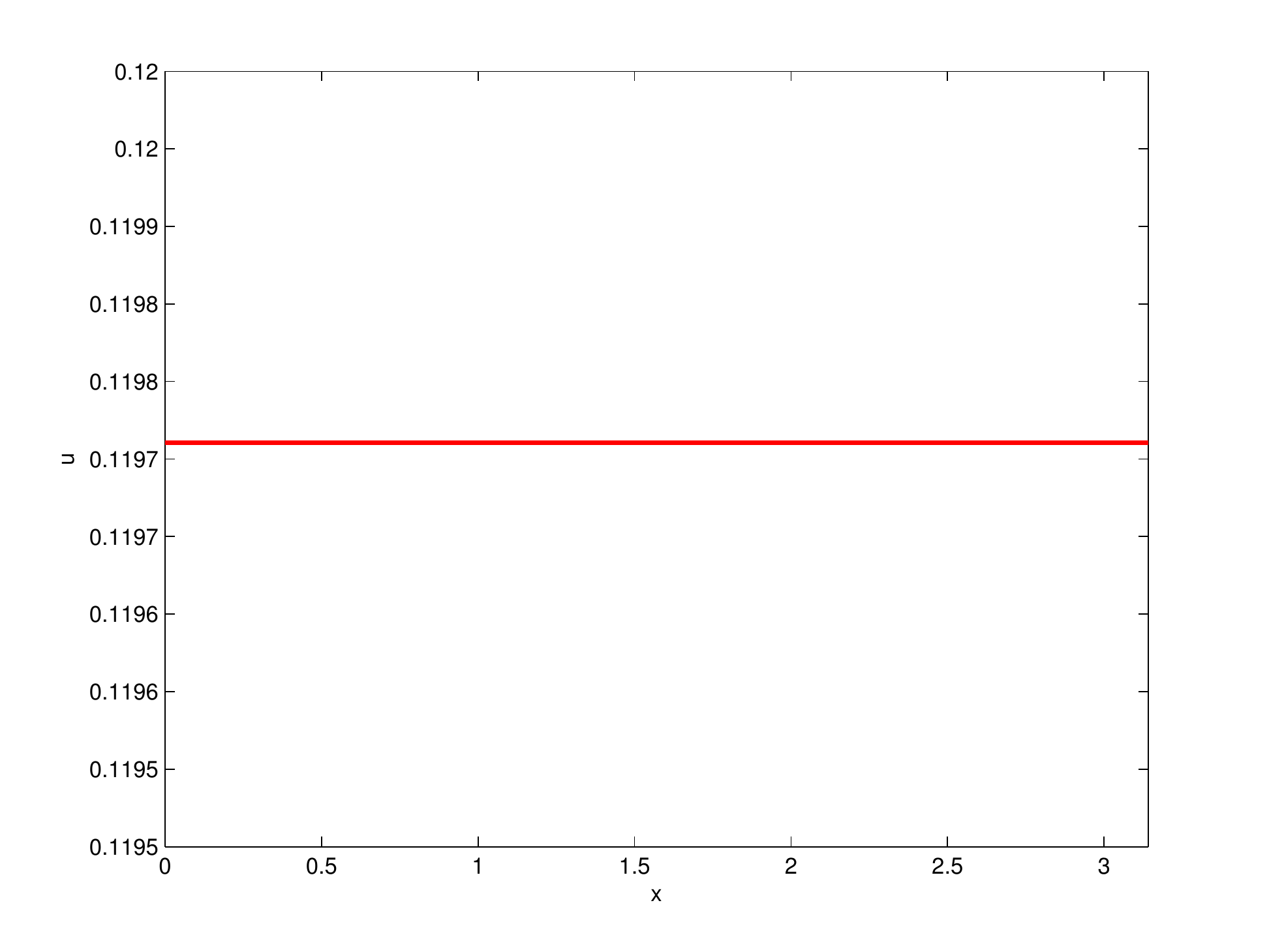}}
\end{center}
\caption{Here the model \eqref{eq:MBmodel_New} (that is when there is prey
cannibalism), is contrasted with model \eqref{eq:MBmodel_1}. We see in Fig
(a) (model \eqref{eq:MBmodel_New}) the density of the prey at time t=5000.
There is a spatial pattern. In Fig (b) (model \eqref{eq:MBmodel_1}) there
is no pattern. For the parameters used please see Table \protect\ref%
{table:Paramset}.}
\label{fig:3}
\end{figure}

Now we move in the opposite direction. Supposing that the conditions of
theorem \ref{Thm3123} do not hold, and we do not have Turing instability in
the nc model \eqref{eq:MBmodel_New}. Can removing the prey cannibalism
bring about Turing instability? We next provide a sufficient condition under
which this is not possible. This is done via the following lemma

\begin{lemma}
\label{lem3123} Consider $(u^{c},v^{c})$ which is a spatially homogenous
equilibrium point of the model system \eqref{eq:MBmodel_New} which is
linearly stable both in the {\emph{a}bsence of diffusion} and in the {\emph{p%
}resence of diffusion}. If $J^{c}_{11}<0$ or $\left( \left( \dfrac{\beta }{\left( 1+\alpha
\beta \right) ^{2}}-u^{c }\right) \left( u^{c }+d\right) ^{2}\right) 
\dfrac{1}{u^{c}}<cd$, and if $m+d+c\leq 1+c_{1}$, and $1-\frac{\beta}{1+\alpha \beta} < \frac{1}{2+\alpha \beta}$. Then $(u^{nc},v^{nc})$ which is a spatially
homogenous equilibrium point of the model system \eqref{eq:MBmodel_1}, for
the same parameter set, but when $c=c_{1}=0$, is also linearly stable in the {\emph{p%
}resence of diffusion}
\end{lemma}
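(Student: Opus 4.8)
The plan is to reduce the claim ``no Turing instability in the nc-model'' to a single sign condition on $J^{nc}_{11}$, and then to transfer that sign information from the cannibalism model via the comparison already established in Lemma \ref{pcns}. Recall from Corollary \ref{cor312} (equivalently, from the third inequality $D_v J_{11}+D_u J_{22}>0$ of Theorem \ref{Thm312}) that, because $J_{22}=-\beta\delta<0$ and $D_u,D_v>0$, a necessary condition for diffusion-driven instability is $J_{11}>0$: the bound $D_v J_{11}>-D_u J_{22}=D_u\beta\delta>0$ forces $J_{11}>0$. Hence it suffices to prove $J^{nc}_{11}<0$; then the third Turing inequality fails, no unstable wavenumber exists, and $(u_{nc},v_{nc})$ stays linearly stable in the presence of diffusion.

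First I would record that the two Jacobians differ only in their $(1,1)$ entries: both share $J_{12}=-1/(1+\alpha\beta)^2$, $J_{21}=\beta^2\delta$ and $J_{22}=-\beta\delta$, and one may rewrite the nc-entry as $J^{nc}_{11}=\frac{\beta}{(1+\alpha\beta)^2}-u_{nc}$ (using $u_{nc}=1-\frac{\beta}{1+\alpha\beta}$), while $J^{c}_{11}=\frac{\beta}{(1+\alpha\beta)^2}-u_{c}-\frac{cd u_c}{(u_c+d)^2}$. Subtracting,
\[
J^{nc}_{11}-J^{c}_{11}=\left(u_c+\frac{cd u_c}{(u_c+d)^2}\right)-u_{nc}.
\]
The key step is to invoke Lemma \ref{pcns}: under the hypothesis $m+d+c\le 1+c_1$ it gives exactly $u_c+\frac{cd u_c}{(u_c+d)^2}<u_{nc}$, so the displayed difference is negative and $J^{nc}_{11}<J^{c}_{11}$.

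Combining this with the hypothesis $J^{c}_{11}<0$ (equivalently $\left(\frac{\beta}{(1+\alpha\beta)^2}-u^{c}\right)\frac{(u^c+d)^2}{u^c}<cd$) yields $J^{nc}_{11}<J^{c}_{11}<0$. With $J^{nc}_{11}<0$ the trace $J^{nc}_{11}+J^{nc}_{22}$ is automatically negative, while the determinant is positive by the Remark following Theorem \ref{Thm3123}, so $(u_{nc},v_{nc})$ is stable in the absence of diffusion; and since $J^{nc}_{11}<0$ the third Turing inequality cannot hold, so it is also stable in the presence of diffusion, which is the conclusion. I would moreover note that, by Lemma \ref{lemnt}, $J^{nc}_{11}<0$ is equivalent to $\beta(\alpha\beta+2)<(\alpha\beta+1)^2$, i.e. to $u_{nc}>\frac{1}{2+\alpha\beta}$; this is precisely the content of the auxiliary hypothesis $1-\frac{\beta}{1+\alpha\beta}<\frac{1}{2+\alpha\beta}$, which should be used with attention to its orientation.

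The main obstacle is entirely contained in the comparison $u_c+\frac{cd u_c}{(u_c+d)^2}<u_{nc}$; everything else is sign bookkeeping via Corollary \ref{cor312}. This inequality is not obvious because $u_c$ itself depends on $c$ through the quadratic-root formula, so one cannot simply bound the added cannibalism term termwise. The clean route, which I would reuse verbatim from Lemma \ref{pcns}, is to bound $\frac{cd u_c}{(u_c+d)^2}\le \frac{c}{2}$ (from $2du_c\le(u_c+d)^2$), set $f(c)=u_c+\frac{c}{2}-u_{nc}$ with $f(0)=0$, and show $f'(c)<0$ under $m+d+c\le 1+c_1$, so that $f(c)<0$. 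This furnishes $J^{nc}_{11}<J^{c}_{11}$ and closes the argument.
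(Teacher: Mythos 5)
Your proposal is correct and follows essentially the paper's own route: both arguments reduce the conclusion to the sign condition $J^{nc}_{11}<0$ (Lemma \ref{lemnt}, i.e.\ Corollary \ref{cor312} with $J_{22}=-\beta\delta<0$) and deduce it from the hypothesis $J^{c}_{11}<0$ together with the comparison $u_{c}+\frac{cdu_{c}}{(u_{c}+d)^{2}}<u_{nc}$ supplied by Lemma \ref{pcns} under $m+d+c\leq 1+c_{1}$, your version merely packaging this as the monotone comparison $J^{nc}_{11}<J^{c}_{11}$ of the only differing Jacobian entries. You are also right to flag the orientation of the auxiliary hypothesis: as printed, $1-\frac{\beta}{1+\alpha\beta}<\frac{1}{2+\alpha\beta}$ is equivalent to $J^{nc}_{11}>0$, whereas the condition actually derived (and needed) is the reversed inequality $1-\frac{\beta}{1+\alpha\beta}>\frac{1}{2+\alpha\beta}$ --- a sign slip that also occurs in the first line of the paper's displayed proof, where $>cd$ should read $<cd$ to match the stated hypothesis $J^{c}_{11}<0$.
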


\begin{proof}
The proof follows via observing that if 

\begin{eqnarray}
&&\left( \left( \dfrac{\beta }{\left(
1+\alpha \beta \right) ^{2}}-u^{c}\right) \left( u^{c }+d\right)
^{2}\right) \dfrac{1}{u^{c }}>cd \nonumber \\
&& =>  \dfrac{\beta }{\left(1+\alpha \beta \right) ^{2}} < u^{c} + \frac{c d u^{c}}{ \left( u^{c }+d \right)^{2}} < u_{nc} \nonumber \\
&&  = 1-\frac{\beta}{1+\alpha \beta} < \frac{1}{2+\alpha \beta} \nonumber \\
\end{eqnarray}

This implies that $\beta(\alpha \beta + 2) < (1+\alpha \beta)^{2}$, which in turn implies that 
 $\beta < \frac{1}{(\sqrt{1-\alpha})(1+ \sqrt{1-\alpha})} $, and so Turing instability is an impossibility in \eqref{eq:MBmodel_1}, from lemma \ref{lemnt}. Note 
$u^{c} + \frac{c d u^{c}}{\left( u^{c }+d\right)^{2}} < u_{nc}$ follows from the assumption $m+d+c\leq 1+c_{1}$, and lemma \ref{pcns}.
\end{proof}
\begin{remark}
Essentially removing prey cannibalism, shifts the
prey null cline up, still leaving $J^{nc}_{11} < 0$, at the new equilibrium $%
(u^{nc},v^{nc})$. This is easily seen in fig. \ref{fig:nc}.
\end{remark}

We lastly state a result concerning Turing pattern formation and predator cannibalism

\begin{lemma}
\label{lemnt1}  
The equilibrium point $\left( u_{nc},v_{nc}\right) $
cannot be driven unstable via diffusion, in the presence of predator cannibalism, if  and $\gamma < \frac{\beta}{\beta_{1}}$.
However, if $\gamma > \frac{\beta}{\beta_{1}}$, then $\left( u_{nc},v_{nc}\right) $
can be driven unstable via diffusion, in the presence of predator cannibalism.
\end{lemma}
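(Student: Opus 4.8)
The plan is to reduce the question of diffusion‑driven instability of the predator‑cannibalism equilibrium to a single sign condition, exactly as was done for the nc‑model in Lemma \ref{lemnt}, and then read off that sign from the nullcline geometry of Lemma \ref{prcs}. Here the relevant equilibrium and reaction Jacobian are those of the predator cannibalism model \eqref{eq:prc}, namely $(u_c,v_c)$ and $J^c$. First I would record that the lower‑right entry of $J^c$ at $(u_c,v_c)$ is
\[
J^c_{22}=\beta_1\delta\left(c\beta_1-1\right),
\]
which is strictly negative for every admissible cannibalism rate, since feasibility forces $1-\beta_1 c>0$. Because $\det J^c>0$ has already been verified, the necessary condition for a Turing bifurcation supplied by Corollary \ref{cor312} — that $J^c_{11}$ and $J^c_{22}$ carry opposite signs — reduces, for this model, to the single requirement $J^c_{11}>0$. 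The whole lemma therefore hinges on the sign of $J^c_{11}$, and the dichotomy in $\gamma$ is precisely a dichotomy for that sign.

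The next step is to determine the sign of $J^c_{11}$ as the cannibalism parameters vary. Because the prey equation in \eqref{eq:prc} is identical to that of the nc‑model, the prey nullcline is the same hump‑shaped curve, and only the predator nullcline — the line $v=\tfrac{\gamma\beta_1}{1-\beta_1 c}\,u$ through the origin — changes with $\gamma$ and $c$. The sign of $J^c_{11}$ is governed by whether the equilibrium sits to the left or to the right of the peak of the prey nullcline, which in turn is controlled by whether the predator‑nullcline slope exceeds the critical value $\beta$ of the nc‑model. This is exactly the geometric comparison carried out in the proof of Lemma \ref{prcs}: when $\gamma<\tfrac{\beta}{\beta_1}$ the slope can be kept below $\beta$, the equilibrium lies on the decreasing branch, and $J^c_{11}<0$; when $\gamma>\tfrac{\beta}{\beta_1}$ one has $\tfrac{\gamma\beta_1}{1-\beta_1 c}>\gamma\beta_1>\beta$ for every admissible $c$, the equilibrium lies on the increasing branch, and $J^c_{11}>0$. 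In the first case the necessary condition of Corollary \ref{cor312} fails outright, so diffusion cannot destabilise the equilibrium, which establishes the first assertion.

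For the converse, $\gamma>\tfrac{\beta}{\beta_1}$ forces $J^c_{11}>0$, and I would show this suffices to realise a genuine Turing bifurcation. With $J^c_{11}>0$, $J^c_{22}<0$ and $\det J^c>0$ available, I would first secure linear stability in the absence of diffusion by taking the predator time‑scale $\delta$ large enough that $\lvert J^c_{22}\rvert=\beta_1\delta(1-c\beta_1)$ exceeds $J^c_{11}$, forcing $T_c=J^c_{11}+J^c_{22}<0$. It then remains to meet the last two inequalities of Theorem \ref{Thm312}. Fixing $D_u$ and letting $D_v$ grow, the quantity $D_vJ^c_{11}+D_uJ^c_{22}$ becomes positive and grows linearly in $D_v$ because $J^c_{11}>0$, so its square grows like $D_v^2$, whereas $4D_uD_v\det J^c$ grows only linearly in $D_v$; hence both $D_vJ^c_{11}+D_uJ^c_{22}>0$ and $(D_vJ^c_{11}+D_uJ^c_{22})^2>4D_uD_v\det J^c$ hold once the ratio $D_v/D_u$ is large enough, giving the claimed diffusion‑driven instability.

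The step I expect to be the main obstacle is turning the nullcline picture into a rigorous statement about the sign of $J^c_{11}$ valid over the relevant range of $c$. The predator‑nullcline slope $\tfrac{\gamma\beta_1}{1-\beta_1 c}$ increases from $\gamma\beta_1$ at $c=0$ up to $\tfrac{1}{1-\alpha}$ as $c$ approaches its feasibility bound $c<\tfrac{1}{\beta_1}-\gamma(1-\alpha)$, so when $\gamma<\tfrac{\beta}{\beta_1}$ the inequality $J^c_{11}<0$ genuinely holds only on the stabilising, small‑$c$ portion of that range, and the dichotomy must be read in that regime. Extracting the exact sign threshold from the explicit expression $J^c_{11}=-u_c+\tfrac{(1-\beta_1 c+\alpha\gamma\beta_1)^2}{\gamma\beta_1(1-\beta_1 c)}$, and checking that the trace condition and the two diffusion inequalities can be satisfied simultaneously by a single admissible parameter choice, is where the real care is needed.
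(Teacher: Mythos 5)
Your proposal follows the paper's route for the first assertion: both arguments reduce the question to the sign of $J^{c}_{11}$ via the necessary condition of Corollary \ref{cor312} (noting $J^{c}_{22}=\beta_{1}\delta(c\beta_{1}-1)<0$ always, by feasibility), and both read that sign off the nullcline geometry of Lemma \ref{prcs}, with slope $\tfrac{\gamma\beta_{1}}{1-\beta_{1}c}<\beta$ shifting the predator nullcline right and keeping $J^{c}_{11}<0$. Where you genuinely go beyond the paper is the second assertion. The paper's proof stops at ``it is possible that $J^{c}_{11}>0$, so that Turing patterns could occur,'' i.e.\ it only verifies the necessary condition, whereas you supply the missing sufficiency step: take $\delta$ large so that $T_{c}=J^{c}_{11}+J^{c}_{22}<0$ (legitimate, since $u_{c}$ and hence $J^{c}_{11}$ are $\delta$-independent while $J^{c}_{22}$ scales like $-\delta$), then take $D_{v}/D_{u}$ large so that $D_{v}J^{c}_{11}+D_{u}J^{c}_{22}>0$ and its square, growing like $D_{v}^{2}$, dominates $4D_{u}D_{v}\det J^{c}$, which grows only linearly in $D_{v}$. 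This is the standard activator--inhibitor construction, and it upgrades the paper's ``could occur'' to the ``can be driven unstable'' that the lemma actually asserts; on this point your write-up is more rigorous than the paper's (it does rely on the paper's claim $\det J^{c}>0$, which the paper asserts rather than carefully proves, but that is inherited rather than introduced).

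One step in your middle paragraph is overstated: from $\gamma>\tfrac{\beta}{\beta_{1}}$ you correctly get slope $\tfrac{\gamma\beta_{1}}{1-\beta_{1}c}>\gamma\beta_{1}>\beta$ for every admissible $c$, but this only places $(u_{c},v_{c})$ to the \emph{left of} $(u_{nc},v_{nc})$, not automatically on the increasing branch of the prey nullcline. In the regime implicit in this section, where $\beta\leq\tfrac{1}{(\sqrt{1-\alpha})(1+\sqrt{1-\alpha})}$ so that $J^{nc}_{11}<0$, the nc-equilibrium sits to the right of the hump of $v=\tfrac{u(1-u)}{1-\alpha+\alpha u}$, and a small leftward shift leaves $J^{c}_{11}<0$; hence ``$J^{c}_{11}>0$ for every admissible $c$'' is false, and you must choose $c$ close enough to its feasibility bound (where $u_{c}\to 0$ and the intersection crosses the hump peak) to secure $J^{c}_{11}>0$. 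Since the lemma's second claim is existential, this choice is available and your argument survives with that repair. Your small-$c$ caveat in the $\gamma<\tfrac{\beta}{\beta_{1}}$ direction is well taken and is, in fact, a sharper reading than the paper's own proof, which asserts that cannibalism ``will only push the null cline further to the right'' when $\gamma<\tfrac{\beta}{\beta_{1}}$, even though this holds only for $c<\tfrac{1}{\beta_{1}}-\tfrac{\gamma}{\beta}$.
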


\begin{proof}
The proof is an immediate consequence of the null cline analysis. If $J^{nc}_{11} = \beta \leq \frac{1}{(\sqrt{1-\alpha})(1+ \sqrt{1-\alpha})} < 0$, Turing instability is an impossibility. However, predator cannibalism will only push the null cline further to the right if $\gamma < \frac{\beta}{\beta_{1}}$, and so $J^{c}_{11} <0$, thus Turing patterns are an impossibility, in this case. However, if $\gamma > \frac{\beta}{\beta_{1}}$, then predator cannibalism will push the predator null cline to the left, and it is possible that $J^{c}_{11} > 0$, so that Turing patterns could occur.
\end{proof}

\section{Dynamics of the PDE model with noise}

In mathematical modeling, deterministic systems of PDEs have been relied
upon to capture the relevant dynamics of some biological systems. However,
these systems are frequently subject to noisy environments, inputs and
signalling. It was reported that songbirds experience decreased predation
rate in the environments with noise \cite{francis2009noise}, and hermit
crabs were distracted by boat motor noise and hence less vigilant against
approaching predators \cite{chan2010anthropogenic,siemers2011hunting}. Thus
stochastic perturbations are important when considering the ability of such
models to reproduce results consistently. In particular, owing to the Turing
mechanism being very sensitive to perturbations, stochasticity is able to
affect evolving patterns in ways not seen in deterministic simulations \cite%
{maini2012turing}.

Previous result in stochastic predator-prey models focused on internal noise%
\cite{li2013dynamics,kelkel2010stochastic,sun2010rich}, where it is assumed
that noise affect the model system itself. This allows to account for such
fluctuations even in the framework of a continuum model \cite%
{kelkel2010stochastic}. However, predator-prey models with Holling type II
external noise had been generally overlooked despite its potential
ecological reality and theoretical interest. Recent theoretical work
indicates that cannibalism can strongly promote coexistence between a prey
and a predator, especially if the mortality rate due to cannibalism is
sufficiently larger than the heterospecific predation rate \cite%
{rudolf2007interaction}, as a result, we investigate the effect of external
spatial temporal noise on Turing patterns in a predator-prey model with
diffusion and Holling type II functional response, with prey cannibalism. We
model these perturbations as said, by means of independent Gaussian \emph{%
white noises}.

In this present paper, we only assume some control parameter of the system
is affected by environmental fluctuations such as epidemics, weather and
nature disasters. We consider fluctuations presented in by the coefficient
of the cannibalistic term of the prey, which is define as mortality rate due
to cannibalism. Hence by introducing a stochastic factor both in space and
time ${\mathrm{\eta}(x,t) }$ into the coefficient of the cannibalistic term $%
(c)$, the parameter $c$ is treated as a Guassian random variable with mean $\hat{c}$ and intensity $\epsilon$. 

\begin{align}  \label{equ:NoisyCoefficient}
c \rightarrow{\hat{c}} + \epsilon {\mathrm{\eta}(x,t) },
\end{align}
We therefore introduce equation \eqref{equ:NoisyCoefficient} in order to
extend model system \eqref{eq:MBmodel_New} such that we obtain a system of
It$\hat o^{\prime }s $ stochastic differential equations in the form, where for simplicity, we redefine  mean $\hat{c}$ to be  the parameter $c$. : 
\begin{subequations}
\label{equ:MB model}
\begin{align}
&{\frac{ \partial u }{\partial t}}=D_u\Delta u + u( 1+c_{1}-u) -{\frac{uv}{%
u+\alpha v}}-{\frac{cu^{2}}{u+d}} - {\frac{\epsilon u^{2}}{u+d}} {\mathrm{%
\eta}(x,t)}, \\
&{\frac{ \partial v }{\partial t}}=D_v\Delta v + \delta v\left( \beta -\frac{%
v}{u}\right).
\end{align}
\end{subequations}

\subsection{Numerical Experiment}

In this section, we discretize numerically the stochastic predator-prey
model \eqref{equ:MB model} with white noise. We redefine the noise term ${%
\mathrm{\eta}(x,t)}$ as ${\mathrm{\partial}^2 W(x,t)/ \partial x \partial t}$%
, where $W(x,t)$ is a one-dimensional Brownian motion. The spatial
approximation is constructed from a Chebychev collocation scheme\cite%
{gottlieb1983spectrum,don1995accuracy}. Thus the spatial approximation is
constructed as a linear combination of the interpolating splines on the
Gauss-Lobatto quadrature ${x_j}$, with a spatial domain size of N=256. This
reduces model system \eqref{eq:MBmodel_New} to a system of It$\hat o^{\prime
}s $ stochastic ordinary differential equation. The resulting system is then
integrated in time using an implicit Milsten scheme for approximating It$%
\hat o^{\prime }s $ stochastic ordinary differential equation, with a time
stepsize of $\Delta t=0.1$. This therefore reduce the It$\hat o^{\prime }s $
stochastic ordinary differential equation to the form 

\begin{align}\label{equ:MB modelNoise}
   \begin{split}
 {\bf{u}}^{n+1}_j -{\bf{u}}^{n}_j&=\bigg(D_u\sum_{i=1}^{N-1}{\bf{D^2}}_{ji}{\bf{u}}^{n+1}_i +  {\bf{u}}^{n+1}_j ( 1+c_{1}- {\bf{u}}^{n+1}_j ) -{ {\bf{u}}^{n+1}_j {\bf{v}}^{n+1}_j\over  {\bf{u}}^{n+1}_j +\alpha {\bf{v}}^{n+1}_j}- {c ({\bf{u}}^{n+1}_j) ^{2}\over  {\bf{u}}^{n+1}_j +d}\bigg)\Delta t  \\
    &\quad \quad \quad  \quad  \quad  \quad  \quad  \quad   \quad \quad
  - {\epsilon ({{\bf{u}}^{n}_j})^{2}\over {\bf{u}}^{n}_j+d} {\mathrm \Delta W_j}  - {\epsilon^2\over 2}{({{\bf{u}}^{n}_j})^3({\bf{u}}^{n}_j+2d)\over ({\bf{u}}^{n}_j+d)^3}\bigg( (\Delta W_j)^2 -\Delta t\bigg),
   \end{split}\\
 {\bf{v}}^{n+1}_j - {\bf{v}}^{n}_j &=\left(D_v\sum_{i=1}^{N-1}{\bf{D^2}}_{ji} {\bf{v}}^{n+1}_i + \delta {\bf{v}}^{n+1}_j\left( \beta -\frac{{\bf{v}}^{n+1}_j}{{\bf{u}}^{n+1}_j}\right)\right)\Delta t ,
\end{align}
where $\mathbf{D^2}_{ji}$ is known as the {\emph{2}nd} order spectral
differentiation matrix and $\mathbf{u}^{n}_{j}=u(x_j,t_n)$, whereas $\mathbf{%
v}^{n}_{j}=v(x_j,t_n)$, for $j=1,\ldots,N-1$. Also $\Delta W_j=z_j\sqrt{%
\Delta t}$, where $z_j\sim \mathcal{N}(0,1)$ is normally distributed random
numbers with mean 0 and standard deviation 1.\newline
The implicit Milsten scheme used to advance the initial value problem in
time necessitates solving a nonlinear system of equations at each time step.
This is achieved through a Newton-Raphson method.

\subsection{The effect of {\emph{Noisy}} cannibalism rate on Turing Instability}

In this section we investigate numerically by simulation the effects of {%
\emph{noise}} on Turing patterns at time $t=5000$. Extensive testing was
performed through numerical integration to describe model \eqref{equ:MB
model}, and the results are shown in this section. This is done by using an
initial condition defined in a small perturbation around the positive
homogenous steady state given, which is as 
\begin{align*}
u=u^{*} + \alpha_1 cos^2(nx)(x > 0)(x < \pi), \\
v=v^{*} + \alpha_2 cos^2(nx)(x > 0)(x < \pi), \\
\end{align*}
where $\alpha_1=0.0002$ and $\alpha_2=0.02.$ During the simulations,
different types of dynamics are obtained for both the prey and predator.
Consequently in this section, we can restrict our analysis of pattern
formation to prey population $u$ since much of the noise in model system %
\eqref{equ:MB model} affect the prey population.\newline
In our numerical simulation with a specific parameter set in the Turing
domain as show in Fig.\eqref{fig:TuringIntensity}, we show the evolution of
fixed spatial pattern of prey population from deterministic model (model
system \eqref{eq:MBmodel_New}) to stochastic model (model system %
\eqref{equ:MB model}) where with noise we consider $\epsilon = 0.1$ and $0.4$
with small random perturbation of the stationary solution $(u^*,v^*)$ as
given above. It is numerically evidenced from Fig.\eqref{fig:TuringIntensity} (first plot $\epsilon=0$, middle plot $\epsilon=0.1$ and last plot $\epsilon=0.4$)
that large variety of distinct patterns can be obtained by making small
changes to the level of intensity $\epsilon$ in the stochastic model, which
can also cause the disappears of Turing patterns as well as the pattern
formed can be enhance by small intensity levels. Therefore in our
experiments it was realized that for a given parameter set in the Turing
domain, there exist a specific intensity $\epsilon$ level of the noise that
can cause the average prey population species to decrease significantly as
well as the disappearance of pattern formation.\newline
\begin{figure}[!htp]
\begin{center}
\includegraphics[scale=0.17]{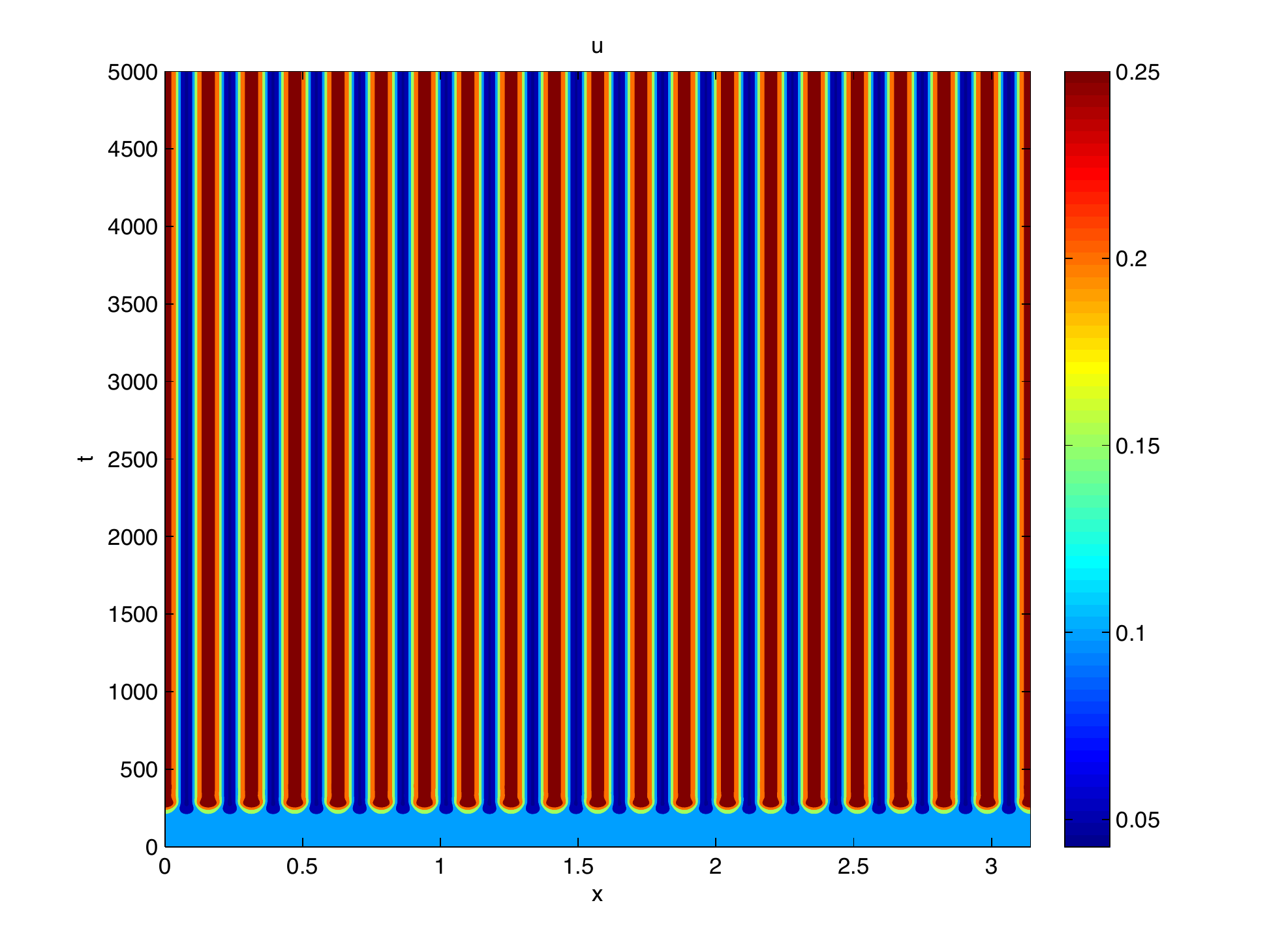}  
\includegraphics[scale=0.17]{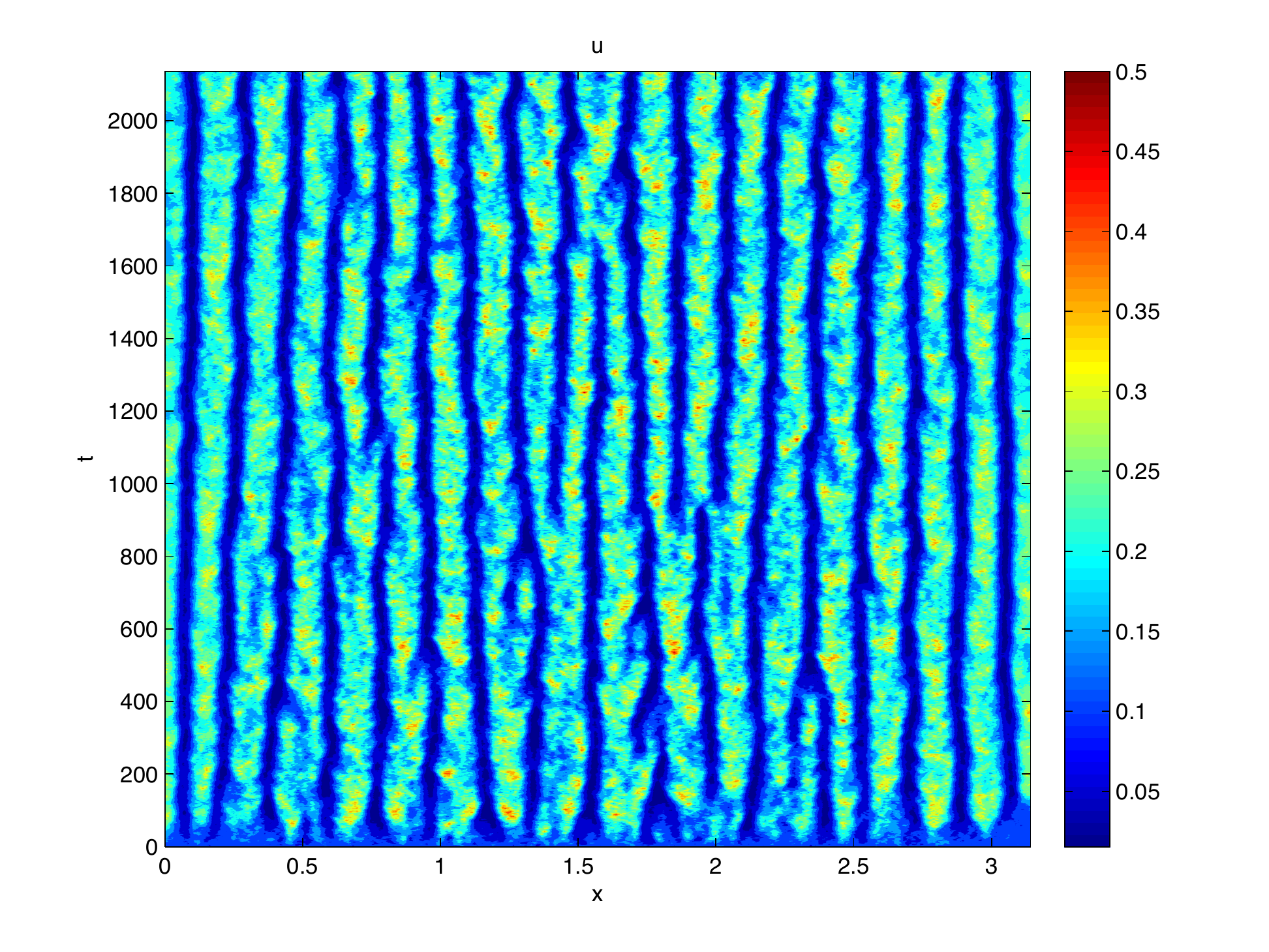}  %
\includegraphics[scale=0.17]{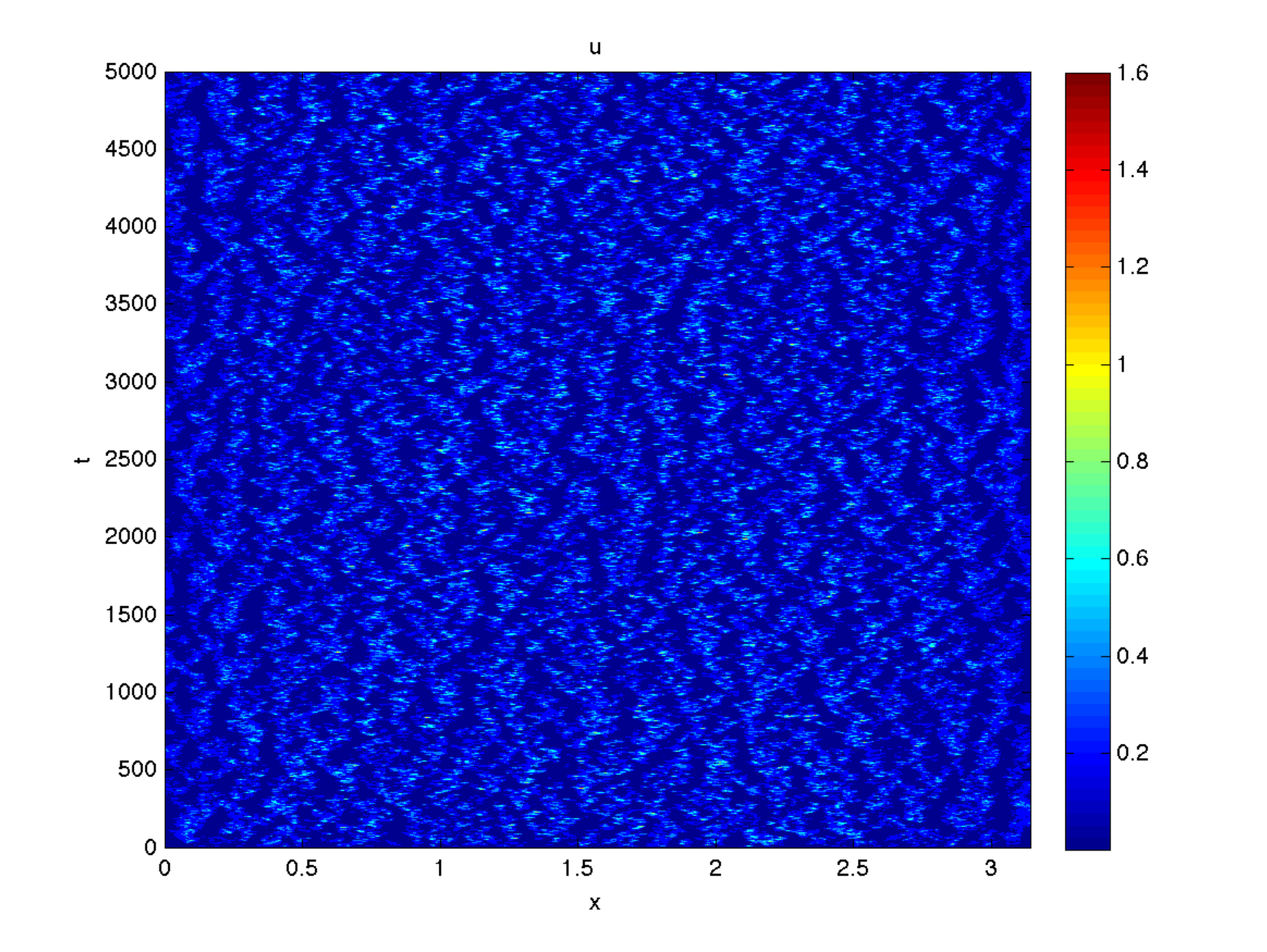}  
\end{center}
\caption{}
\label{fig:TuringIntensity}
\end{figure}
We next investigate whether noise can induce pattern formation. This
achieved by using a non-Turing parameter set. Fig.%
\eqref{fig:NoTuringIntensity} shows the dynamic behavior of model %
\eqref{equ:MB model} with non-Turing parameter set. The first plot of Fig.%
\eqref{fig:NoTuringIntensity} shows the plot of the deterministic model
where with diffusion the prey population still goes into a steady
state(hence no Turing Patterns formed), the proceeding plot of Fig.%
\eqref{fig:NoTuringIntensity} shows the evolution from non-Turing patterns
to some form of weak patterns as $\epsilon$ changes from $\epsilon=0.1$ to $\epsilon=0.4$. 
\begin{figure}[!htp]
\begin{center}
\includegraphics[scale=0.17]{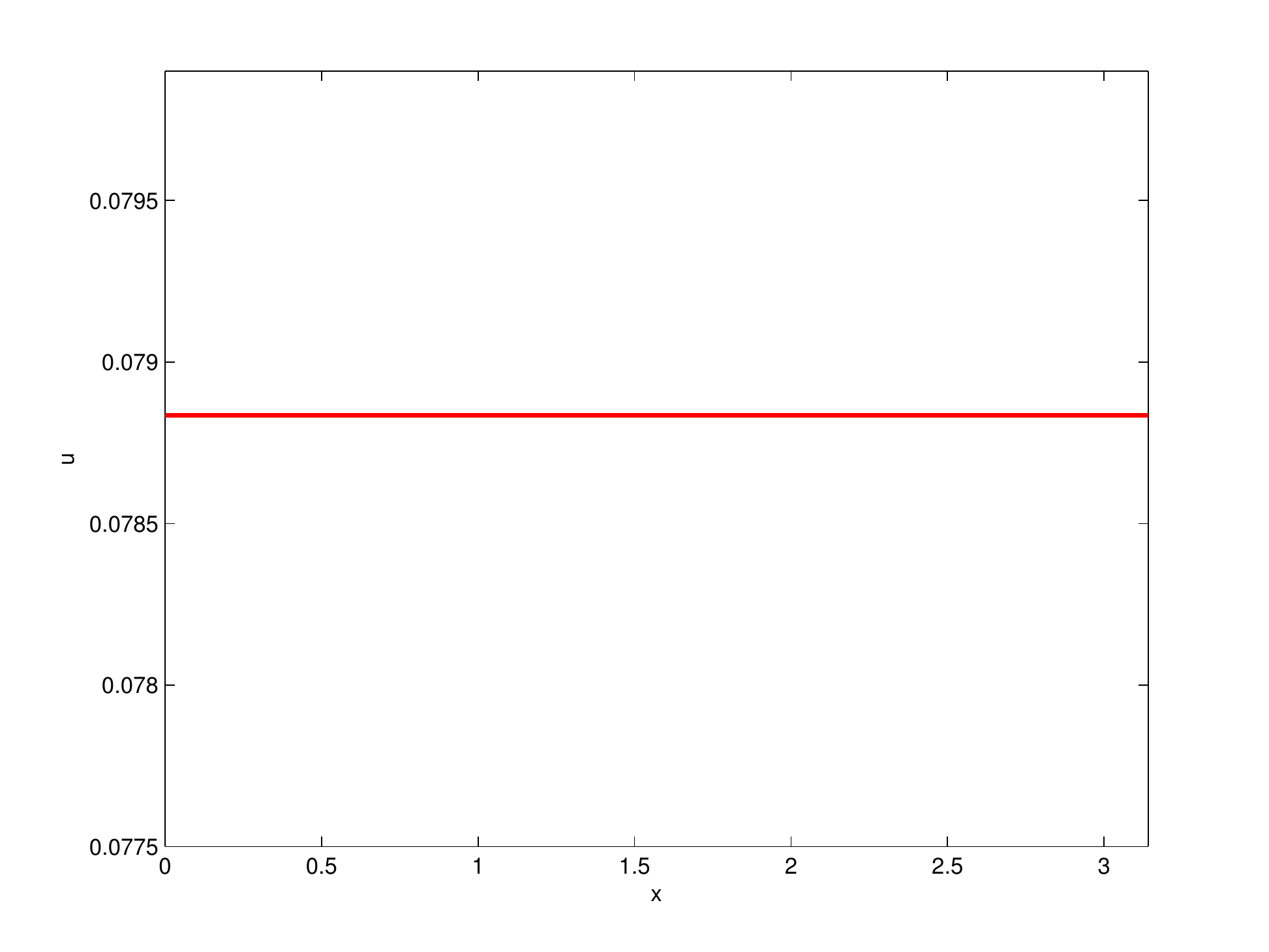}  
\includegraphics[scale=0.17]{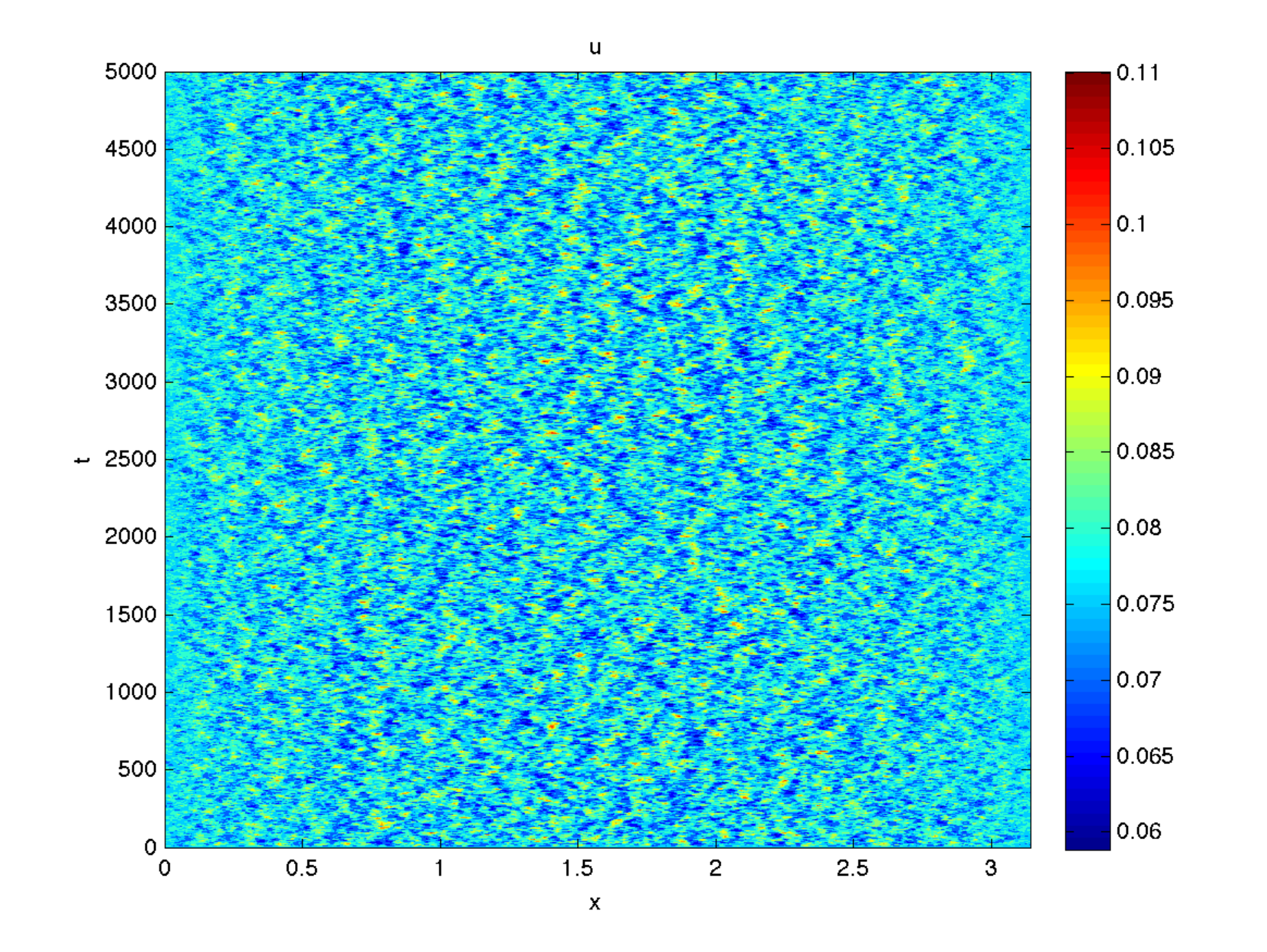}  %
\includegraphics[scale=0.17]{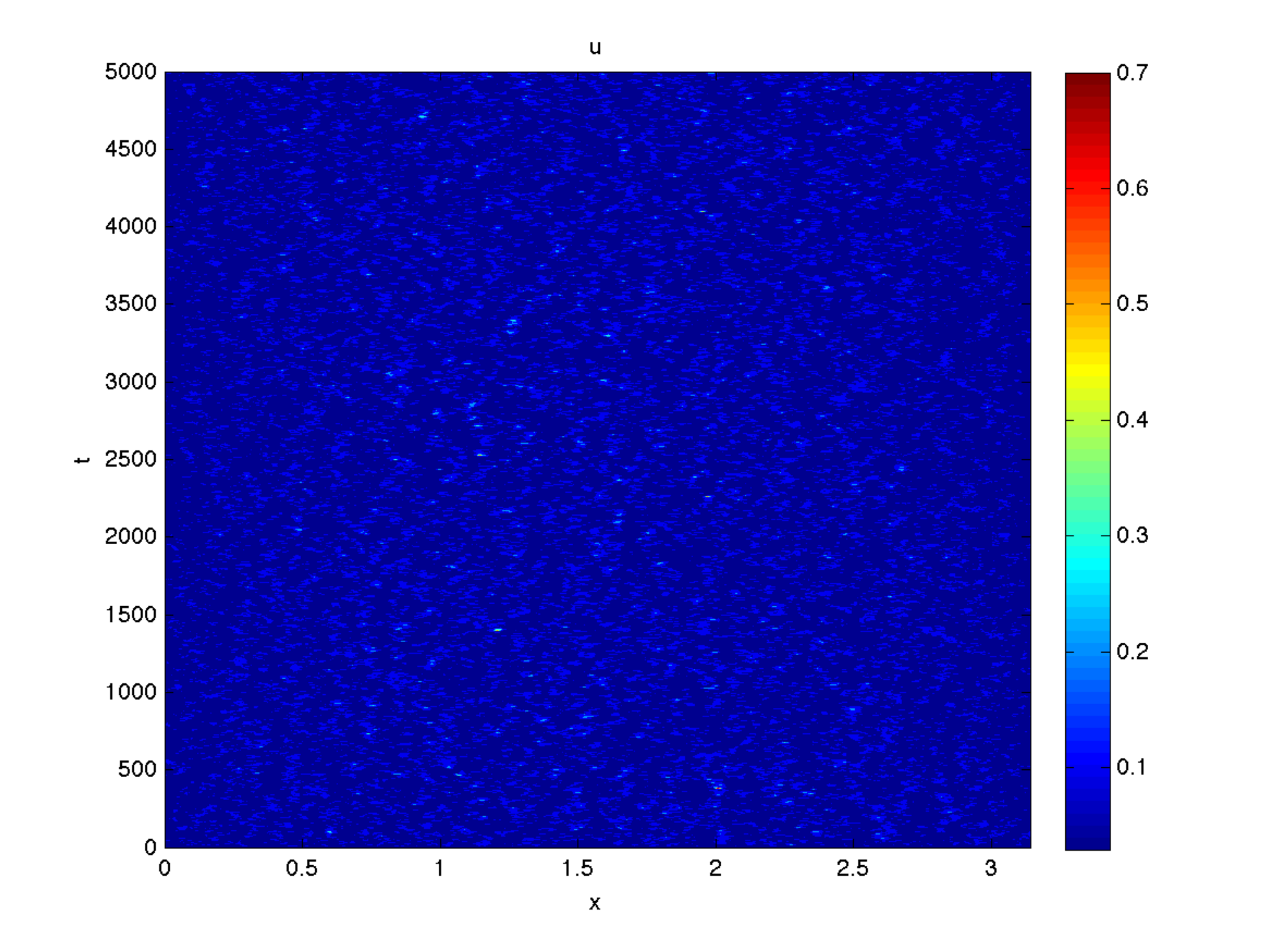}  
\end{center}
\caption{ }
\label{fig:NoTuringIntensity}
\end{figure}

Clearly from above, it is comprehensible that noise can induce instability,
and the effect of noise on model system \eqref{equ:MB model} is drastic.
Thus noise can enhance, as well as cause the disappearance of patterns
formed in the deterministic model. These behaviors confirm how sensitive model
system \eqref{equ:MB model} is to different noise intensities. 
%%%%%%%%%%%%%%%%%%%%%%%%%%%%%%%%%%%%%%%%%%%%%%%%%%%%%%%%%%%
\begin{table}[ht]
\caption{Parameters used in the simulations. In all figures.}
\label{table:Paramset}\centering
\begin{tabular}{ccccccccc}
\hline\hline
Fig. & $\alpha$ & $\beta$ &$c_1$& $c$ & $d$ & $\delta$ & $D_u$ & $D_v$ \\[0.5ex] \hline
\ref{fig:2s},\ref{fig:3s},\ref{fig:4s}& 0.73 & 5.21 & 0.25 & 1.0& 0.11 & 0.013 & 0.0 & 0.0 \\  %done
\ref{fig:nc} &  0.9&2.2  &0.01 &0.25  &0.1  & 0.1&0.0 &0.0 \\ 
\ref{fig:2},\ref{fig:3}a & 1.0 & 7.35 & 0.01 & 0.17& 0.01 & 0.32 & $10^{-5}$ & $10^{-2}$ \\ 
\ref{fig:3}b & 1.0 & 7.35 & 0.0 & 0.0& 0.00 & 0.32 & $10^{-5}$ & $10^{-2}$ \\ 
\ref{fig:TuringIntensity}  & 1.0 &2.72  &0.01  & 0.17 & 0.01 & 0.32 & $10^{-5}$ & $10^{-2}$\\  %done
\ref{fig:NoTuringIntensity}& 1.05 &7.34 &0.01  & 0.2 & 0.1 & 0.32 & $10^{-5}$ & $10^{-2}$ \\[1ex] \hline %done
\end{tabular}%
\end{table}
\section{Discussion and Conclusion}
Cannibalism, although ubiquitous in both predator and prey communities, has primarily been studied as a phenomenon in the predator species.
In the current manuscript we have considered the effect of both predator and prey cannibalism on a Holling Tanner model, with ratio dependent functional response.
Our results show that in the ODE version of the model, in certain parameter regime, predator cannibalism is able to stabilize the unstable interior equilibrium, (that is the equilibrium is unstable if there is no cannibalism). 
This stability result depends critically on the cannibals feeding rate of prey $\gamma$, and the ratio $\dfrac{ \beta }{\beta_{1}} $. That is if the cannibalistic predator eats less prey so $\gamma$ is small in comparison to $\dfrac{ \beta }{\beta_{1}} $, then only a little bit of feeding of conspecifics is enough to stabilize the dynamics.
This is seen via lemma \ref{prcs}, and is in accordance with known facts about predator cannibalism in other models. However, when we consider prey cannibalism, we see that it is \emph{not} able to stabilize the unstable interior equilibrium. This is rigorously proven for small prey cannibalism rate $c$, via lemma \ref{pcns}. We conjecture this is true even for large cannibalism rate.

This result tells us that for parameters such that the no cannibalism model \eqref{eq:MBmodel_1} has an unstable equilibrium, or essentially cyclical dynamics, no amount of prey cannibalism can stabilize this. From an ecological perspective, this tells us that in a predator-prey community with cyclical fluctuation in the predator and prey populations (so essentially limit cycle dynamics), cannibalism in the prey would just maintain the cyclically, and the oscillations in population cannot be driven to a stable steady state. Interestingly, this is seen to hold even if we assume some stochasticity in the prey cannibalism rate, see figs. \ref{fig:2s}-\ref{fig:4s}.

We also see that prey cannibalism can lead to limit cycle dynamics, but the uniqueness or non uniqueness of limit cycles is an open question. It would be extremely interesting to investigate this further. In particular one may ask, if changing the form of the prey cannibalism term $C(u)$ may lead to multiple limit cycles, or render the limit cycle unique. Another interesting open problem, would be to consider the effect of \emph{both} predator and prey cannibalism together, on models of Holling Tanner type. In particular, one might try to classify regions, where they have conflicting effects, that is say predator cannibalism stabilises while prey cannibalism destabilises, and to then consider the net effect, in terms of the parameters concerned. To the best of our knowledge, this has not been considered in any work in the literature. Also, of much interest lately is the problem of a diseased predator in predator-prey models with cannibalistic predators. It is seen that cannibalism in the predator, can prevent disease in predator from spreading, as the cannibal could wipe out the diseased predator \cite{biswas2014cannibalism}. It would be interesting to consider such a system, with cannibalism now in the prey, and also a diseased prey. Here one can ask if prey cannibalism could stop disease in the prey species, as is known from similar models in predator cannibalism.

In case of the spatially explicit model, we uncover certain interesting results as well. What we observe is that for small predator growth rate, or if prey handling time and prey capture rate are close, Turing patterns are an impossibility in the no cannibalism model. That is spatial segregation cannot occur in \eqref{eq:MBmodel_1}. However, even if one maintains this small predator growth rate, or if prey handling time and prey capture rate are close, if the prey cannibalism rate is chosen in the right regime, see theorem \ref{Thm3123}, then one can \emph{still} have spatial patterns in the prey cannibalism model \eqref{eq:MBmodel_New}. From a spatial ecology perspective this is quite interesting, as this is telling us that if the predator growth rate is curtailed for some reason, in the Holling Tanner model, spatial segregation cannot occur, unless prey cannibalism takes place, in which case spatial segregation can occur. Interestingly, if one compares the destabilizing influence of prey cannibalism in the ODE and PDE models, we find that if the cannibalism rate $c < 2(u_{nc}-u_{c})$, then it will destabilize in the ODE case. However, to destabilize in the PDE case, it cannot be ``to small", and must be such that 
\begin{equation*}
\left( \dfrac{\beta }{\left( 1+\alpha \beta \right) ^{2}}-u_{c}\right) 
\dfrac{\left( u_{c} +d\right) ^{2}}{du_{c}}>{c}%
>\left( \dfrac{\beta }{\left( 1+\alpha \beta \right) ^{2}}-u_{c}- \delta
\beta \right) \dfrac{\left( u_{c}+d\right) ^{2}}{du_{c}} 
\end{equation*}
as seen in theorem \ref{Thm3123}. Alternatively, we show that predator cannibalism cannot cause Turing patterns if the predator feeding rate is curtailed, according to the parametric restriction $\gamma < \frac{\beta}{\beta_{1}}$, as seen in lemma \ref{lemnt1}.

Lastly, ecosystems are always subject to random forcing. However there have been no studies with a random cannibalism rate, even as far as predator cannibalism goes, let alone prey cannibalsim. Our results with a random cannibalism rate show that it can affect both limit cycle dynamics, as well as completely alter the spatial Turing patterns, that form in the deterministic case, see figs. \ref{fig:TuringIntensity}-\ref{fig:NoTuringIntensity}. In case of limit cycle dynamics, noisy cannibalism makes things aperiodic, but cannot remove the population cycles completely. In the case of the spatially explicit model, the noise is seen to both be able to create patterns, as well as to destroy them.

All in all, we hope that the current work leads to many further investigations, into the fascinating effects of prey cannibalism on predator prey dynamics. Such endeavours will bring us closer to understanding the full extent of cannibalism as it occurs in nature, and the role it plays in shaping predator-prey communities.

\section{Acknowledgements}
We would like to acknowledge very helpful conversations with Professor Volker Rudolf, in the department of Biosciences at Rice University, on various ecological concepts of prey cannibalism, and its mathematical modeling.

%%%%%%%%%%%%%%%%%%%%%%%%%%%%%%%%%%%%%%%%%%%%%%%%%%%%%%%%%%%
%\newpage %% ======= Reference
\bibliographystyle{plain}
\bibliography{Reference}

\end{document}